\let\newfloat\newfloat@ltx
\def\HC{\mathcal{H}}
\def\LC{\mathcal{L}}
\def\ad{^{\dagger}}
\newcommand{\fsnull}[1]{}
\newcommand{\old}[1]{}
\tikzset{every picture/.style=remember picture}
\newcommand{\dya}[1]{\ket{#1}\!\bra{#1}}
\newcommand{\inprod}[1]{\left\langle #1 \right\rangle}
\newcommand{\poly}{\operatorname{poly}}
\newcommand{\EC}{\mathcal{E}}
\newcommand{\GC}{\mathcal{G}}
\newcommand{\MC}{\mathcal{M}}
\newcommand{\NC}{\mathcal{N}}
\newcommand{\OC}{\mathcal{O}}
\newcommand{\PC}{\mathcal{P}}
\newcommand{\Var}{{\rm Var}}
\newcommand{\supp}{\text{supp}}
\renewcommand{\geq}{\geqslant}
\renewcommand{\leq}{\leqslant}
\DeclareMathOperator*{\argmax}{arg\,max}
\renewcommand{\vec}[1]{\boldsymbol{#1}}  
\newcommand*{\id}{\openone}
\newcommand{\bs}{\textsf{BS}}
\renewcommand{\th}{\theta } 
\def\calA{\mathcal{A}}
\def\calE{\mathcal{E}}
\def\calH{\mathcal{H}}
\def\calM{\mathcal{M}}
\def\C{\mathbb{C}}
\newcommand{\thv}{\vec{\theta}}
\def\be{\begin{equation}}
\def\ee{\end{equation}}
\def\bs{\begin{split}}
\def\e{\end{split}}
\def\ba{\begin{eqnarray}}
\def\bea{\begin{eqnarray}}
\def\tea{\end{eqnarray}}
\def\ea{\end{eqnarray}}
\def\eea{\end{eqnarray}}
\def\h{\vec{h}}
\def\gl{\mathfrak{gl}}
\newcommand{\paran}[1]{\left( #1 \right)}
\def\bbE{\mathbb{E}}
\def\su{\mathfrak{s}\mathfrak{u}}
\def\g{\mathfrak{g}}
\def\u{\mathfrak{u}}
\def\h{\mathfrak{h}}
\def\su{\mathfrak{s}\mathfrak{u}}
\newcommand\mbb[1]{\mathbb{#1}}
\newcommand\mf[1]{\mathfrak{#1}}
\theoremstyle{remark}
\newtheorem{theorem}{Theorem}
\newtheorem{lemma}{Lemma}
\newtheorem{remark}{Remark}
\newtheorem{corollary}{Corollary}
\newtheorem{proposition}{Proposition}
\def\be{\begin{equation}}
\def\te{\end{equation}}
\def\ee{\end{equation}}
\def\ba{\begin{eqnarray}}
\def\bea{\begin{eqnarray}}
\def\tea{\end{eqnarray}}
\def\ea{\end{eqnarray}}
\def\eea{\end{eqnarray}}
\begin{document}

\title{A Lie Algebraic Theory of Barren Plateaus for Deep Parameterized Quantum Circuits}  

\author{Michael Ragone}
\thanks{The first two authors contributed equally.}
\affiliation{Department of Mathematics, University of California Davis, Davis, California 95616, USA}

\author{Bojko N. Bakalov}
\thanks{The first two authors contributed equally.}
\affiliation{Department of Mathematics, North Carolina State University, Raleigh, North Carolina 27695, USA}

\author{Fr\'{e}d\'{e}ric Sauvage}
\affiliation{Theoretical Division, Los Alamos National Laboratory, Los Alamos, New Mexico 87545, USA}

\author{Alexander F. Kemper}
\affiliation{Department of Physics, North Carolina State University, Raleigh, North Carolina 27695, USA}

\author{Carlos Ortiz Marrero}
\affiliation{AI \& Data Analytics Division, Pacific Northwest National Laboratory, Richland, WA 99354, USA}
\affiliation{Department of Electrical \& Computer Engineering, North Carolina State University, Raleigh, North Carolina 27607, USA}

\author{Mart\'{i}n Larocca}
\affiliation{Theoretical Division, Los Alamos National Laboratory, Los Alamos, New Mexico 87545, USA}
\affiliation{Center for Nonlinear Studies, Los Alamos National Laboratory, Los Alamos, New Mexico 87545, USA}

\author{M. Cerezo}
\email{cerezo@lanl.gov}
\affiliation{Information Sciences, Los Alamos National Laboratory, Los Alamos, New Mexico 87545, USA}

\begin{abstract}

Variational quantum computing schemes train a loss function by sending an initial state through a parametrized quantum circuit, and measuring the expectation value of some operator. Despite their promise, the trainability of these algorithms is hindered by barren plateaus (BPs) induced by the expressiveness of the circuit, the entanglement of the input data, the locality of the observable, or the presence of noise. Up to this point, these sources of BPs have been regarded as independent. In this work, we present a general Lie algebraic theory that provides an exact expression for the variance of the loss function of sufficiently deep parametrized quantum circuits, even in the presence of certain noise models. Our results allow us to understand under one framework all aforementioned sources of BPs. This theoretical leap resolves a standing conjecture about a connection between loss concentration and the dimension of the Lie algebra of the circuit's generators.

\end{abstract}

\maketitle

\section{Introduction}

Variational quantum computing schemes, such as variational quantum algorithms~\cite{cerezo2020variationalreview,bharti2021noisy,endo2021hybrid,peruzzo2014variational,farhi2014quantum,hadfield2019quantum} or quantum machine learning models~\cite{schuld2015introduction,biamonte2017quantum,benedetti2019parameterized,havlivcek2019supervised}, share a common structure in which quantum and classical resources are used to solve a given task. In a nutshell, these algorithms send some initial state through a parametrized quantum circuit, and then perform (a polynomial number of) measurements to estimate the expectation value of some observable that encodes the loss function (also called cost function) appropriate for the problem. 
Subsequently, the estimated loss (or its gradient) is fed into a classical optimizer that attempts to update the circuit parameters to minimize the loss.

In the past few years, research has started to point towards seemingly fundamental difficulties to training generic parametrized quantum circuits~\cite{anschuetz2022beyond,bittel2021training,mcclean2018barren}. In particular, one of the main obstacles towards trainability is the presence of Barren Plateaus (BPs)~\cite{mcclean2018barren} in the loss function. In the presence of BPs,  this loss function (and its gradients) exponentially concentrates in parameter space as the size of the problem increases. Therefore, unless an exponential number of measurement shots are employed, the model becomes untrainable, as one does not have enough precision to find a loss-minimizing direction and navigate the loss landscape.  

Due to the tremendous limitations that BPs place on the potential to scale variational quantum computing schemes to large problem sizes, a significant amount of effort has been put forward towards understanding why and when BPs arise. In this context, the presence of BPs (in the absence of noise) has been shown to arise from several disparate
aspects of the variational problem, including
the expressiveness of parametrized quantum circuits (that is, the breadth of unitaries that the parametrized quantum circuit can express)~\cite{mcclean2018barren,holmes2021connecting,marrero2020entanglement,patti2020entanglement,larocca2021diagnosing,friedrich2023quantum,sharma2020trainability,kieferova2021quantum,pesah2020absence,lee2021towards,martin2022barren,grimsley2022adapt,sack2022avoiding}, the locality of the loss function measurement operator $O$ ~\cite{cerezo2020cost,uvarov2020barren,kashif2023impact,khatri2019quantum,uvarov2020variational,leadbeater2021f,cerezo2020variational}, and the entanglement and randomness of the initial state $\rho$~\cite{mcclean2018barren,cerezo2020cost,thanaslip2021subtleties,shaydulin2021importance,abbas2020power,leone2022practical,holmes2021barren}. Hardware noise further exacerbates these issues~\cite{wang2020noise,franca2020limitations,garcia2023effects}. 
Yet, despite our significant understanding of BPs, most of the results in the literature have been derived, or can be applied, only for certain circuit architectures or scenarios. 
Thus, we cannot generalize the lessons learned from one scenario to another, and the different sources of BPs are regarded as independent. In other words, we do not have a unifying holistic theory that can capture the interplay of the various aspects that give rise to BPs.

In this work, we present a general Lie algebraic theory for BPs that can be applied to any deep, unitary, parametrized quantum circuit architecture. Our theory is based on the study of the Lie group and the associated Lie algebra $\g$, which is generated by the parametrized quantum circuit. In turn, this allows us to understand  under one single umbrella all known sources of BPs. Critically, we are able to precisely compute the variance of the loss function, and therefore study its concentration in parameter space, provided that the measured observable $O$ or the circuit's input state $\rho$ is in $i\g$.  Our results encapsulate the known causes of BPs, as we generalize the concepts of circuit expressiveness, initial state entanglement, operator locality, and hardware noise to a unified framework (see Fig.~\ref{fig:Schematic}). Moreover, we also provide a quantifiable definition for a deep quantum circuit, as we present rigorous bounds for the number of layers needed for it to be an approximate design over the Lie group $e^{\g}$, and for how much the variances for an exact and for an approximate design deviate.

\section{Results}

\subsection{Loss function and barren plateaus}

\begin{figure}
    \centering
    \includegraphics[width=.75\columnwidth]{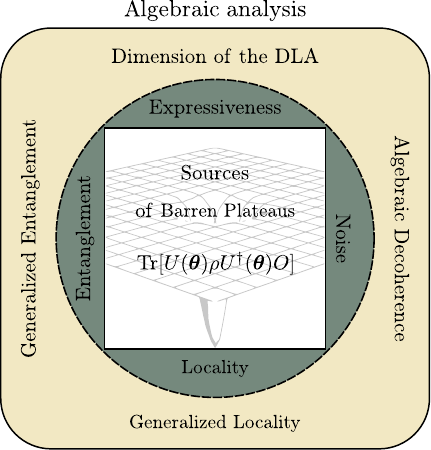}
    \caption{\textbf{Schematic representation of our results.} Many works in the literature have shown that BPs can arise from: the expressiveness of the parametrized circuit, the locality of the measurement operator, the entanglement in the initial state, or due to hardware noise. In the figure, we schematically depict these sources as being separate, since their study is usually performed in restricted scenarios, thus limiting our ability to interconnect them.  In this work, we present a unified Lie algebraic theory that can be used to study BPs in deep parametrized quantum circuits under SPAM and coherent noise. Our theorems allow us to understand under a single unified framework all sources of BPs by leveraging concepts such as the generalized locality and entanglement and algebraic decoherence. }
    \label{fig:Schematic}
\end{figure}

In what follows, we will consider problems where an $n$-qubit state $\rho$, acting on a Hilbert space $\HC =(\mathbb{C}^2)^{\otimes n}$, is sent through a parametrized quantum circuit $U(\thv)$ of the form
\begin{equation}\label{eq:PQC}
    U(\thv)=\prod_{l=1}^L e^{i H_l \th_l}\,.
\end{equation}
Here, $\thv=(\theta_1,\theta_2,\ldots)$ is a set of trainable real-valued parameters, and $H_l\in i\mf{u}(2^n)$ are Hermitian operators that we collect in a set of generators $\GC=\{H_1,H_2,\ldots\}$.  
At the output of the circuit, one measures some Hermitian observable $O\in i\mf{u}(2^n)$, such that $\norm{O}_2^2\leq 2^n$, leading to the loss function,
 \begin{equation}\label{eq:loss}
        \ell_{\thv}(\rho, O) = \Tr[ U(\thv) \rho U^\dagger(\thv) O]\,.
 \end{equation}
 Depending on the algorithm at hand, the previous procedure might be repeated for different initial states (e.g., input states coming from some training set), or measurement operators (e.g., measuring different non-commuting observables), and the set of such expectation values can be combined into a more complicated loss function. To account for all of those scenarios, we will simply study a quantity as that in Eq.~\eqref{eq:loss}, as we know that if $\ell_{\thv}(\rho, O)$ concentrates, then so will any function computed from it. 

Finally, we will consider SPAM errors --- which we model by adding completely positive and trace preserving channels $\NC_B$ and $\NC_A$ acting before and after the circuit, respectively --- and coherent errors, which we represent as uncontrolled unitary gates interleaved with the parametrized gates. When these sources of noise are present, we write the noisy loss function as
 \begin{align}\label{eq:loss-noisy}
   \widetilde{\ell}_{\thv}(\rho, O) &= \Tr\left[ \NC_A\left(\widetilde{U}(\thv) \NC_B(\rho) \widetilde{U}^\dagger(\thv)\right)O\right]\,,
 \end{align}
where $\widetilde{U}(\thv)=\prod_{l=1}^L e^{-i \alpha_l K_l} e^{i H_l \theta_l}$ for fixed (not tunable) real values $\alpha_l$ and Hermitian operators $K_l$.

In what follows, we will study how much the loss $\ell_{\thv}(\rho, O)$ (or its noisy counterpart $\widetilde{\ell}_{\thv}(\rho, O)$) changes as the parameters $\thv$ vary. In particular, we want to characterize the variance of the loss over the parameter landscape, 
\begin{equation} \label{eqn:variance}
\Var_{\thv} [\ell_{\thv}(\rho, O)] = \bbE_{\thv}  [\ell_{\thv}(\rho, O)^2] - \paran{\bbE_{\thv}   [\ell_{\thv}(\rho, O)]}^2\,,
\end{equation}
and we will say that the loss exhibits a BP if the variance vanishes exponentially with the system size, i.e., if $\Var_{\thv} [\ell_{\thv}(\rho, O)]\in\OC(1/b^n)$, for some $b>1$. It is important to note that in the literature it is common to study the presence of BPs by analyzing the concentration of the partial derivatives of the loss, i.e., the scaling of $\Var_{\thv} [ \frac{\partial\ell_{\thv}(\rho, O)}{\partial\theta_\mu}]$. However, since loss concentration implies partial derivative concentration~\cite{arrasmith2021equivalence}, in what follows we exclusively focus on the former. Moreover, our results on the variance of the loss function can be modified in a straightforward way to obtain the variance of the partial derivatives.

\subsection{Dynamical Lie algebra}\label{sec:DLA}

Computing the variance of the loss function requires evaluating averages over the parameter landscape. When the circuit is sufficiently deep,  we can leverage Lie algebraic tools to perform such calculations. To this end, we define the so-called dynamical Lie algebra (DLA)~\cite{zeier2011symmetry,dalessandro2010introduction,zimboras2015symmetry} of the parametrized quantum circuit (we will assume a circuit with no coherent errors for the remainder of this section). The DLA is defined as the Lie closure of the circuit's generators,
\begin{equation}\label{eq:dla}
    \g=\langle i\GC\rangle_{{\rm Lie}} \,,
\end{equation}
which is the (closed under commutation) subspace of $\mf{u}(2^n)$ spanned by the nested commutators of the generators $i\GC$. Since $\g$ is a subalgebra of the Lie algebra $\u(2^n)$ of skew-Hermitian operators acting on $\HC$, it is a reductive Lie algebra (see e.g.\ \cite{knapp2013lie}, Chapter IV). This means that we can express $\g$ as a direct sum of commuting ideals,
\begin{equation}\label{eq:reductive}
        \g = \g_1 \oplus \dots\oplus \g_{k-1} \oplus \g_{k}\,,
\end{equation}  
where $\g_j$ are simple Lie algebras for $j=1,\ldots,k-1$ and $\g_{k}$ is abelian (so $\g_k$ is the center of $\g$ and $[\g,\g]=\g_1 \oplus \dots\oplus \g_{k-1}$ is semisimple). In the Methods we provide additional intuition as to why the DLA takes the form in Eq.~\eqref{eq:reductive}. The importance of the DLA comes from the fact that it quantifies the ultimate expressiveness of the parametrized quantum circuit: we have $U(\thv)\in G=e^{\mf{g}}$ for all $\thv$.

In the following, we assume that the parametrized circuit is deep enough so that it forms a  $2$-design on each of the simple or abelian components $G_j=e^{\mf{g}_j}$ ($j=1,\ldots,k)$ (i.e.,  the first two moments of the distribution of unitaries obtained from the circuits, match those of the Haar measure over  each group $G_j$~\cite{dankert2009exact}), which will allow us to compute the variance via an explicit integration over the Lie group. Our results are readily useful to analyze the variance of circuits which form approximate, rather than exact, designs. In particular, in the Methods and the Supplemental Information, we present and prove Theorems~\ref{th:design-layers} and~\ref{th:variance-layers}, which respectively set bounds on the necessary number of layers for the circuit to become an $\epsilon$-approximate $2$-design over $G$, and bound the difference between the variance of a $2$-design circuit and that of an $L$-layered circuit that does not necessarily form a $2$-design over $G$. 
Then, for any function $f(U(\thv))$ that is a polynomial of order (up to) two in the matrix elements of $U(\thv)$ and its conjugate transpose, we will compute averages over the parameter landscape as $\bbE_{\thv}[f(U(\thv))]=\prod_{j=1}^k\int_{G_j} d\mu_j f(U(\thv))$\,,
where $d\mu_j$ is the normalized, left- and right-invariant Haar measure on $G_j$.
Such integrals can be evaluated using Weingarten Calculus (see, e.g., \cite{collins2022weingarten}), and we present a simple approach based on invariant theory of Lie algebras.

Finally, our results use a generalized notion of purity with respect to an arbitrary operator subalgebra $\mf{g}\subseteq \mf{u}(2^n)$. The $\mf{g}$-purity of a Hermitian
operator $H\in i \mf{u}(2^n)$ is defined as~\cite{somma2004nature,somma2005quantum}
\begin{equation}\label{eq:g-purity}
\PC_{\g}(H)=\Tr[H_\g^2]=\sum_{j=1}^{\dim( \g)} \abs{\Tr[B_j\ad H]}^2\,,
\end{equation}
where $H_\g$ denotes the orthogonal projection of $H$ into  $\g_{\mathbb{C}}=\mathrm{span}_{\mbb{C}} \mf{g}$ (the complexification of $\mf{g}$)
and $\{B_j\}_{j=1}^{\dim(\g)}$ an orthonormal basis (over $\mathbb C$) for $\g_{\mathbb{C}}$ 
with respect to the Hilbert--Schmidt inner product $\inprod{A,B} = \Tr [A^\dagger B]$~\cite{knapp2013lie}.
Note that $H_\g\in i\g$ and the basis vectors $B_j$ are chosen from $i\g$;
in that case, $\Tr[B_j\ad H] = \Tr[B_j H] \in \mathbb{R}$.
In the Supplemental Information, we provide some intuition about why the $\g$-purity arises as the natural quantity in our results.

\subsection{Main result}

In this section, we present an exact formula for the variance of the loss function. To simplify the presentation, we will first restrict ourselves to noiseless circuits and incorporate noise as we proceed. Our main theorem exploits the reductiveness of the DLA: when either $\rho\in i\mf{g}$ or $O\in i\mf{g}$, the loss (and its variance) can be expanded into individual terms arising from each ideal $\g_j\subseteq\g$ (see the Methods and Supplemental Information for additional details). Equipped with this result, we can prove the following theorem:

\begin{theorem}\label{thm:variance-reductive}
    Suppose that $O\in i\g$ or $\rho \in i\g$, where the DLA $\g$ is as in Eq.~\eqref{eq:reductive}.
    Then the mean of the loss function vanishes for the semisimple component $\g_1 \oplus \dots\oplus \g_{k-1}$ and leaves only abelian     contributions:   
    \begin{equation}\label{eq:exq-main}
       \bbE_{\thv}[\ell_{\thv}(\rho,O)] = \Tr [\rho_{\g_k} O_{\g_k}] \,.
    \end{equation}
Conversely, the variance of the loss function vanishes for the center $\g_k$ and leaves only simple contributions:
\begin{align}\label{eq:var-main}
    \Var_{\thv} [\ell_{\thv} (\rho,O)] =\sum_{j=1}^{k-1} \frac{\PC_{\g_j}(\rho) \PC_{\g_j}(O)}{\dim (\g_j)} \,.
\end{align}       
\end{theorem}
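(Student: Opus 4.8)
The plan is to reduce $\Var_{\thv}[\ell_{\thv}(\rho,O)]$ to a sum of mutually independent Haar integrals, one for each simple or abelian factor $G_j=e^{\g_j}$ of $G$, and then to evaluate the first and second moments on each $G_j$ using the representation theory of $\g_j$. As a preliminary reduction I would show that only the components of $\rho$ and $O$ lying in $i\g$ matter. Assuming $O\in i\g$ (the case $\rho\in i\g$ is symmetric after rewriting $\ell_{\thv}(\rho,O)=\Tr[\rho\,U^\dagger O U]$), write $\rho=\rho_\g+\rho^\perp$ with $\rho^\perp$ Hermitian and orthogonal to $i\g$ in the Hilbert--Schmidt inner product. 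Then $\Tr[U\rho^\perp U^\dagger O]=\Tr[\rho^\perp\,U^\dagger O U]$, and since $U\in e^\g$ the map $X\mapsto U^\dagger X U$ is a Hilbert--Schmidt isometry preserving $i\g$; hence $U^\dagger O U\in i\g$ and this term vanishes for every $\thv$. Thus $\ell_{\thv}(\rho,O)=\ell_{\thv}(\rho_\g,O)$ pointwise, and both arguments may be taken inside $i\g$.

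Next I would split over the ideals of \eqref{eq:reductive}, decomposing $\rho_\g=\sum_{j=1}^k\rho_{\g_j}$ and $O=\sum_{j=1}^k O_{\g_j}$ into their orthogonal components in $i\g_j$. Since the ideals commute, every $U\in G$ factors as $U=U_1\cdots U_k$ with $U_j\in G_j$, and each $U_{j'}$ with $j'\neq j$ commutes with $i\g_j$, so $U\rho_{\g_j}U^\dagger=U_j\rho_{\g_j}U_j^\dagger\in i\g_j$. Because distinct ideals are Hilbert--Schmidt orthogonal --- a consequence of $[\g_i,\g_j]=0$ for $i\neq j$ and the ad-invariance of the Hilbert--Schmidt form restricted to $\g$ --- the cross terms vanish and $\ell_{\thv}(\rho,O)=\sum_{j=1}^k\ell^{(j)}$ with $\ell^{(j)}:=\Tr[U_j\rho_{\g_j}U_j^\dagger O_{\g_j}]$. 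By the $2$-design hypothesis the moments relevant to $\ell$ and $\ell^2$ (at most two factors of $U$ and two of $U^\dagger$) factorize as $\bbE_{\thv}[f(U)]=\prod_j\int_{G_j}f\,d\mu_j$, so the $U_j$ act as independent Haar-random elements of $G_j$; consequently $\bbE_{\thv}[\ell]=\sum_j\bbE[\ell^{(j)}]$ and $\Var_{\thv}[\ell]=\sum_j\Var[\ell^{(j)}]$.

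For the abelian factor $j=k$, conjugation by $G_k$ acts trivially on $i\g_k$, so $\ell^{(k)}=\Tr[\rho_{\g_k}O_{\g_k}]$ is constant: it contributes $\Tr[\rho_{\g_k}O_{\g_k}]$ to the mean and $0$ to the variance. For a simple factor $j\le k-1$, consider $T_j:=\int_{G_j}U_j\rho_{\g_j}U_j^\dagger\,d\mu_j\in i\g_j$; by invariance of the Haar measure $T_j$ commutes with all of $G_j$, hence $T_j=iY$ with $Y\in\g_j$ satisfying $[Y,\g_j]=0$, and since $\g_j$ is simple its center is trivial, forcing $T_j=0$. Therefore $\bbE[\ell^{(j)}]=\Tr[T_j O_{\g_j}]=0$, so $\Var[\ell^{(j)}]=\bbE[(\ell^{(j)})^2]$ and only the second moment remains.

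For the second moment, writing $(\ell^{(j)})^2=\Tr[(U_j\otimes U_j)(\rho_{\g_j}\otimes\rho_{\g_j})(U_j^\dagger\otimes U_j^\dagger)(O_{\g_j}\otimes O_{\g_j})]$, I would compute the diagonal twirl $\Phi_j(\cdot)=\int_{G_j}(U_j\otimes U_j)(\cdot)(U_j^\dagger\otimes U_j^\dagger)\,d\mu_j$ on $\rho_{\g_j}\otimes\rho_{\g_j}\in i\g_j\otimes i\g_j$. Under conjugation $i\g_j$ carries the adjoint representation of $\g_j$, which is irreducible since $\g_j$ is simple, so by Schur's lemma $(i\g_j\otimes i\g_j)^{G_j}$ is one-dimensional, spanned by $\Omega_j:=\sum_m B_m\otimes B_m$ for any orthonormal Hermitian basis $\{B_m\}$ of $i\g_j$ (this tensor is basis-independent, being the image of $\mathrm{id}_{i\g_j}$, and is manifestly $G_j$-invariant). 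Since $\Phi_j$ restricted to $i\g_j\otimes i\g_j$ is the orthogonal projection onto $\mathrm{span}(\Omega_j)$, and $\langle\Omega_j,\Omega_j\rangle=\dim\g_j$ while $\langle\Omega_j,\rho_{\g_j}\otimes\rho_{\g_j}\rangle=\sum_m\Tr[B_m\rho]^2=\PC_{\g_j}(\rho)$ by \eqref{eq:g-purity}, we obtain $\Phi_j(\rho_{\g_j}\otimes\rho_{\g_j})=\tfrac{\PC_{\g_j}(\rho)}{\dim\g_j}\,\Omega_j$. Contracting with $O_{\g_j}\otimes O_{\g_j}$ and using $\Tr[\Omega_j(O_{\g_j}\otimes O_{\g_j})]=\sum_m\Tr[B_m O]^2=\PC_{\g_j}(O)$ gives $\bbE[(\ell^{(j)})^2]=\PC_{\g_j}(\rho)\PC_{\g_j}(O)/\dim\g_j$. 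Summing the factor contributions then yields \eqref{eq:exq-main} and \eqref{eq:var-main}. I expect this last step to be the main obstacle: correctly identifying the unique invariant tensor in $\mathrm{ad}\otimes\mathrm{ad}$ and its normalization $\dim\g_j$ --- which hinges on irreducibility of the adjoint representation of a simple Lie algebra and on the invariant-theory (Weingarten-type) identities the paper develops --- together with checking that $\PC_{\g_j}$ is insensitive to replacing $\rho,O$ by $\rho_{\g_j},O_{\g_j}$. A secondary technical point is justifying that the $2$-design on $\prod_jG_j$ factorizes into independent Haar measures over the $G_j$ and that only degree-$\le2$ moments enter.
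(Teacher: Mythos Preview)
Your proposal is correct and follows essentially the same approach as the paper: reduce to independent contributions from each ideal (the paper's Proposition~1), handle the abelian factor as a constant (the paper's Lemma~1), show the first moment vanishes on simple factors because the center is trivial, and compute the second moment by identifying the twirl as the orthogonal projection onto the one-dimensional invariant subspace $(\g_j\otimes\g_j)^{G_j}$ spanned by the split Casimir $\sum_m B_m\otimes B_m$ (the paper's Lemmas~2--4). Your preliminary step of replacing $\rho$ by $\rho_\g$ pointwise is a slight streamlining compared to the paper, which instead keeps $\rho$ arbitrary and uses self-adjointness of the moment operator to transfer it onto $O^{\otimes 2}\in\g\otimes\g$, but the two are equivalent.
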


Let us now consider the implications of Theorem~\ref{thm:variance-reductive}. As we can see from  Eq.~\eqref{eq:exq-main},  the mean of the loss is solely determined by the projections of $\rho$ and $O$ onto the center $\g_k$ of $\mf{g}$. As such, if $\mf{g}$ is centerless, then $\bbE_{\thv}[\ell_{\thv}(\rho,O)]=0$. Conversely,  if $\g$ is abelian and if $\rho$ or $O$ commutes with $\mf{g}$ (a slightly more general condition than belonging to $\mf{g}$), then the loss landscape is completely flat. 

Next, we turn our attention to Eq.~\eqref{eq:var-main}. This equation shows that each term in the variance of the loss function depends on three quantities: the dimension of $\g_j$, and the $\g_j$-purities of $\rho$ and $O$. To understand what each of these terms means, let us focus on the case where $\g$ is simple. Then Eq.~\eqref{eq:var-main} has a single term, 
\begin{equation}
    \Var_{\thv}[\ell_{\thv}(\rho,O)] = \frac{\PC_{\g}(\rho) \PC_{\g}(O)}{\dim (\g)} \,. 
\end{equation} 
The following result, proved in the Supplemental Information, illustrates the fact that BPs can arise from three sources.
\begin{corollary} \label{cor:concentration}
Let $\norm{O}_2^2\leq 2^n$. If either $\dim(\g)$, $1/\PC_{\g}(\rho)$ or $1/\PC_{\g}(O)$ is in $\Omega(b^n)$ with $b> 2$, then the loss has a BP. 
\end{corollary}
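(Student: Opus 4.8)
The plan is to start from the exact single-term variance formula
\begin{equation}
\Var_{\thv}[\ell_{\thv}(\rho,O)] = \frac{\PC_{\g}(\rho)\,\PC_{\g}(O)}{\dim(\g)}\,,
\end{equation}
which Theorem~\ref{thm:variance-reductive} provides whenever $\g$ is simple and $O\in i\g$ or $\rho\in i\g$, and to show that if any one of the three quantities $\dim(\g)$, $1/\PC_{\g}(\rho)$, $1/\PC_{\g}(O)$ grows like $\Omega(b^n)$ with $b>2$, then the whole expression is in $\OC(1/\tilde b^{\,n})$ for some $\tilde b>1$, i.e.\ the loss has a BP in the sense defined after Eq.~\eqref{eqn:variance}. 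The key observation is that each of the three factors in the variance is bounded, so controlling a single one of them controls the product.

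First I would establish the a priori bounds on the two factors that are \emph{not} assumed to be exponentially large. For the purity of the observable, I would use the hypothesis $\norm{O}_2^2\le 2^n$ together with the fact that the orthogonal projection onto $\g_{\mathbb C}$ is norm-nonincreasing: $\PC_{\g}(O)=\Tr[O_{\g}^2]=\norm{O_{\g}}_2^2\le\norm{O}_2^2\le 2^n$. For the purity of the state, I would use that $\rho$ is a density matrix, so $\PC_{\g}(\rho)=\norm{\rho_{\g}}_2^2\le\norm{\rho}_2^2=\Tr[\rho^2]\le 1$. These are exactly the bounds implicit in the normalizations chosen earlier in the paper. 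Then, trivially, $\PC_{\g}(\rho)\le 1$ and $\PC_{\g}(O)\le 2^n$ and $\dim(\g)\ge 1$.

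With these in hand the three cases are symmetric in structure. If $\dim(\g)\in\Omega(b^n)$, then
\begin{equation}
\Var_{\thv}[\ell_{\thv}(\rho,O)] \le \frac{1\cdot 2^n}{\dim(\g)} \in \OC\!\left(\frac{2^n}{b^n}\right) = \OC\!\left((2/b)^n\right)\,,
\end{equation}
which is exponentially decaying since $b>2$. If $1/\PC_{\g}(O)\in\Omega(b^n)$, i.e.\ $\PC_{\g}(O)\in\OC(b^{-n})$, then $\Var_{\thv}[\ell_{\thv}(\rho,O)]\le \PC_{\g}(\rho)\,\PC_{\g}(O)\le \PC_{\g}(O)\in\OC(b^{-n})$, again exponentially small (here $b>1$ already suffices, but $b>2$ is certainly enough). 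The case $1/\PC_{\g}(\rho)\in\Omega(b^n)$ is identical with the roles of $\rho$ and $O$ swapped, using $\PC_{\g}(O)\le 2^n$ so that $\Var_{\thv}[\ell_{\thv}(\rho,O)]\le 2^n\PC_{\g}(\rho)\in\OC((2/b)^n)$. In each case one exhibits an explicit base $\tilde b>1$ (either $b/2$ or $b$), completing the argument that a BP is present.

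The only real subtlety, and the step I would be most careful about, is making sure the single-term formula actually applies: Corollary~\ref{cor:concentration} is stated for $\g$ simple, so one must either restrict to that case or, more generally, note that the multi-term formula in Eq.~\eqref{eq:var-main} is a sum of $k-1$ nonnegative terms each bounded by $\PC_{\g_j}(\rho)\PC_{\g_j}(O)/\dim(\g_j)$ with $\sum_j\PC_{\g_j}(\rho)\le\Tr[\rho^2]\le 1$ and $\sum_j\PC_{\g_j}(O)\le\Tr[O^2]\le 2^n$, so the same bounds propagate through the sum (the number of ideals $k$ is at most $\poly(n)$, or even just bounded by $\dim(\g)$, and in any case does not spoil the exponential decay). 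Beyond that, everything is a one-line estimate; there is no genuine obstacle, only bookkeeping to ensure the hypotheses $\norm{O}_2^2\le 2^n$ and $\Tr[\rho^2]\le 1$ are invoked correctly and that ``$\Omega(b^n)$ with $b>2$'' is translated into the matching ``$\OC(1/\tilde b^{\,n})$ with $\tilde b>1$'' demanded by the definition of a barren plateau.
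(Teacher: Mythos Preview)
Your proposal is correct and follows essentially the same route as the paper's own proof: both start from the single-term variance formula for simple $\g$, use the projection bounds $\PC_{\g}(O)\le\norm{O}_2^2\le 2^n$ and $\PC_{\g}(\rho)\le\Tr[\rho^2]\le 1$ together with $\dim(\g)\ge 1$, and then treat the three cases by bounding the two non-exponential factors and reading off the base $\tilde b=b/2$ (or $\tilde b=b$ in the $O$-purity case). Your extra paragraph on the reductive (non-simple) case is not needed, since the corollary is stated in the simple setting, but the observation is sound.
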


In what follows, we analyze these three causes of BPs.

\begin{figure*}[ht]
    \centering
    \includegraphics[width=.9\linewidth]{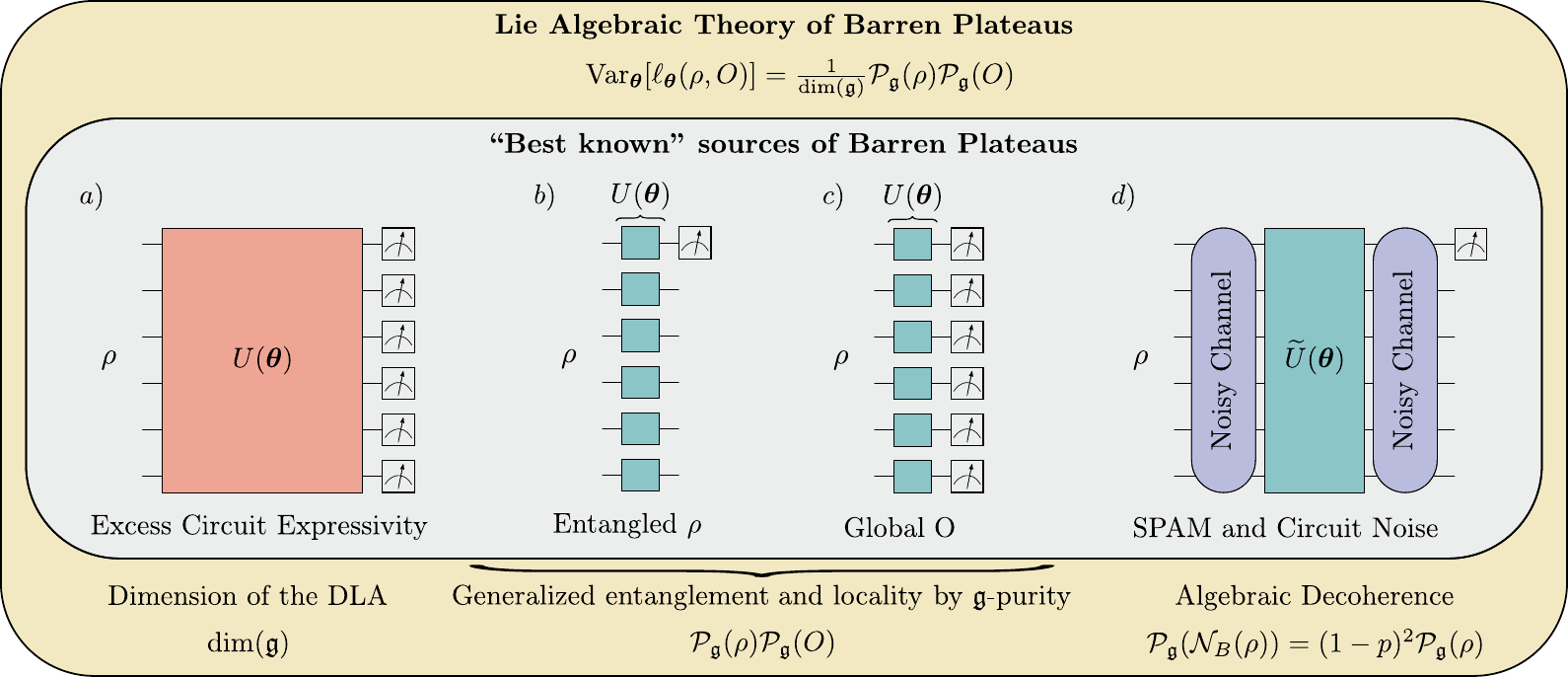}
    \caption{\textbf{Source of barren plateaus under our framework.}
    The ``best known'' sources of BPs include: 
    (a) circuits that are too expressive with $\dim(\g)\sim 4^n$, for any $O$, $\rho$;
    (b) highly entangled initial states $\rho$, even for shallow circuits and local $O$;
    (c) global operators $O$, even for shallow circuits;
    (d) noise channels.
    Our framework unifies these sources in one holistic picture, characterizing the circuit expressivity as $\dim(\g)$,
    the initial state and operator properties by generalized entanglement and locality via $\g$-purity, and noise channels by algebraic decoherence.
     }
    \label{fig:methods_fig}
\end{figure*} 

\subsection{Sources of barren plateaus}

(i) Ansatz expressiveness and the DLA.  First, we observe from
Theorem~\ref{thm:variance-reductive} that the variance is inversely proportional to $\dim(\g)$, which  
directly quantifies the expressiveness of the circuit; that is,
more expressive circuits (with larger $\dim(\g)$) lead to more concentrated loss functions (see Fig~\ref{fig:methods_fig}(a)). In particular, as shown in Corollary~\ref{cor:concentration}, if the dimension of the DLA is in $\Omega(b^n)$ for some $b>2$, then the loss is exponentially concentrated regardless of the initial state or measurement operator.  
Thus, deep circuits with exponential DLAs are always untrainable. On the other hand, if $ \dim(\g)\in\OC(\poly(n))$ (see examples of polynomially scaling DLAs in Refs.~\cite{kokcu2022fixed,schatzki2022theoretical,kazi2023universality,wiersema2023classification}), the expressiveness of the circuit will not induce BPs by itself. As we will see below, this does not preclude the possibility that the initial state or the measurement operator could still lead to exponential concentration.  Importantly, this result proves
(under the assumption that $\rho$ or $O$ is in $i\g$)
the conjecture proposed in Ref.~\cite{larocca2021diagnosing}, where the authors argue precisely that the loss variance should be inversely proportional to the expressive power of the circuit, captured by the DLA dimension.

(ii) Initial quantum state $\rho$. 
Second, we consider the $\g$-purity of a quantum state, $\PC_{\g}(\rho)=\Tr [\rho_{\g}^2]$.
This quantity has been studied in the literature and is known to be a measure of generalized entanglement~\cite{barnum2003generalizations,barnum2004subsystem}. As discussed in Ref.~\cite{barnum2004subsystem}, one can define a generalization of entanglement relative to a
subspace of observables rather than the usual approach,
where distinguished subsystems are used to define the entanglement. 
In standard entanglement theory, the preferred subspaces of observables are given by the local operators that act only on each subsystem, 
and a pure entangled state looks mixed when partially tracing over
one of the subsystems (producing a reduced state). This notion can be generalized by imposing that reduced states of a quantum system only
provide the expectations of some set of distinguished observables, and we say that a state is generalized unentangled relative to the distinguished observables if its reduced state
is pure (cf.\ Fig.~\ref{fig:methods_fig}(b)).
Taking the subspace of observables to be the DLA leads precisely to this generalized measure of entanglement. In this framework, the purity $\PC_{\g}(\rho)$ is maximized ($\rho$ is not generalized entangled) if $\rho$ belongs to the orbit of the highest weight state of $\g$ (see Methods for a definition of highest weight state, as well as a particular case where we specialize to $\rho$ or $O$ simultaneously diagonalizable with a Cartan subalgebra $\mf{h}\subseteq\g$). Essentially, a smaller $\g$-purity implies larger generalized entanglement and, in turn, smaller variances. Hence, if $\PC_{\g}(\rho)\in\OC(1/b^n)$ (high  generalized entangled state), the loss concentrates exponentially regardless of the circuit expressiveness, that is, even if $\dim(\g)$ is polynomial.

(iii) Measurement operator $O$. Finally, we turn to the $\g$-purity of the measurement operator $O$ (cf.\ Fig.~\ref{fig:methods_fig}(c)).
We carry over the notion of generalized entanglement to define a generalized notion of locality: We call an operator $O$ generalized-local
if it belongs to the preferred subspace of observables given by $\g$. In this case, $O_{\g}=O$.  On the other hand, we will call it (fully) generalized nonlocal if $O_{\g}=0$. This definition allows us to establish a hierarchy of locality when $O$ belongs to the DLA's associative algebra (the matrix-product closure of $\g$). Here, one can classify the amount of generalized locality by expressing $O$ as a polynomial of elements of $\g$, and filter them according to the polynomial's degree; 
more local operators will be those that can be expressed as polynomials of lower degree. 
With the above definition, one can readily see that generalized local operators (i.e., those satisfying $O_{\g}=O$)  maximize the variance. 
On the other hand, when $O_{\g}\in\OC(1/b^n)$ (highly generalized nonlocal measurements), the loss exhibits a BP regardless of the DLA dimension.

Putting points (i)-(iii) together, our results in Theorem~\ref{thm:variance-reductive} imply that the loss concentration is completely determined by the expressiveness of the parametrized quantum circuit, the generalized entanglement of the input state, and the generalized locality of the measurement operator.
When the DLA $\g$ 
is not simple, in order for the loss to be exponentially concentrated and a BP to occur, 
each of the components $\g_j$ needs exponentially large DLAs, highly generalized-entangled states, or very generalized-nonlocal measurements. When this condition is not satisfied on some of the components $\g_j$, one can train the loss function only on the signal from these components, but only for those particular components $\g_j$; for the remainder, the variances are exponentially suppressed.

\subsection{Incorporating the effects of noise}

To finish, we consider how SPAM noise and coherent errors affect the variance scaling
(cf.\ Fig.~\ref{fig:methods_fig}(d)). We begin by considering the case of state preparation noise, assuming $O\in i\g$. The effect of noise is to change the initial state from $\rho$ to $\NC_B(\rho)$. From Theorem~\ref{thm:variance-reductive}, we find that if the $\g$-purity of the state decreases (in some or all components $\g_j$ of $\g$), then so will the variance.  This will be the case for global depolarizing noise $\NC_B(\rho)=(1-p)\rho+p\id/2^n$, which takes $\PC_{\g_j}(\rho)$ to $\PC_{\g_j}(\NC_B(\rho))=(1-p)^2\PC_{\g_j}(\rho)$. Moreover, we can see that even local unitaries acting on a single qubit during state preparation (which would preserve the standard purity $\Tr[\rho^2]$ of the state and the standard entanglement) can lead to a generalized-entanglement-induced loss concentration if they rotate the state outside of $i\g$. These examples showcase a form of algebraic decoherence, whereby the state gets entangled with either an actual environment (e.g., the previous case of  depolarizing noise) or with an effective ``algebraic environment'' within the system. 

Next, we consider the presence of measurement errors. We can model this with a channel $\NC_A^{-1}$ acting on $O$, where $\NC_A^{-1}$ denotes the inverse of the noise channel $\NC_A$, which is generally not a physical quantum channel. In this case, if the  $\g$-purity of the measurement operator decreases, i.e., if $O$ loses generalized locality, then the loss concentrates.

Finally, we study the effect of coherent errors that occur during circuit execution. In this case, we can model their impact as uncontrolled unitaries acting along the circuit, which will change the set of generators to $\widetilde{\GC}=\GC\cup \{iK_l\}$, and thus the DLA becomes $\widetilde{\g}\supseteq\g $. Interestingly, this means that coherent noise can increase the expressiveness of the circuit at the cost of potentially decreasing the variance. To know exactly how the variance changes, one needs to study the reductive decomposition of $\widetilde{\g}$, as well as the $\widetilde{\g}$-purities of $\rho$ and $O$ in this new noise-induced DLA.

\section{Discussion}

Finding ways to avoid or mitigate barren plateaus (BPs) has been one of the central topics of research in variational quantum computing. This has led the community to develop a series of good practice guidelines such as: ``global observables where one measures all qubits are untrainable'' or ``too much entanglement leads to BPs.'' While these are widely regarded as being universally true, they are in fact obtained by extrapolating results derived for a specific circuit architecture, and assuming that they will hold in another. Many of these misconceptions are propagated due to a lack of a single unifying theory of BPs that truly connects the results obtained for different architectures. In this work, we present one such theory based on the Lie algebraic properties of the parametrized quantum circuit, the initial state, and the measurement operator. 

Despite the relative simplicity of our main theorem, its implications are extremely important to the field of variational quantum computing. First, it is worth highlighting that unlike other results, our theorems provide exact variance calculations rather than upper or lower bounds, thus allowing us to precisely determine whether we will or will not have a BP. Second, we note that our theorem applies to any (noiseless) deep parametrized quantum circuit, 
independent of its structure, but requiring only knowledge of the DLA of the circuit (and provided that $\rho$ or $O$ are in $i\g$). This makes our results applicable in essentially all areas of variational quantum computing. Conceptually, our results uncover the precise role that entanglement and locality play in the nature of BPs. However, they do so by being defined in terms of the operators in the DLA rather than in terms of local operators determined by the qubit-subsystem decomposition. As such, it is entirely possible for a circuit to be trained on highly entangled initial states using highly nonlocal measurements (i.e., acting on all qubits) as long as they are well aligned with the underlying DLA of the circuit. This goes against one of the
best practices in the field and demonstrates that a more subtle approach is valuable.
It is also worth noting that our algebraic formalism allows us to consider simple noise models such as SPAM errors and coherent noise.  Our analysis shows that noise-induced concentration can be understood as building generalized entanglement in the initial state, reducing the generalized locality of the measurement observable, or increasing the expressive power of the circuit.

Looking ahead, we see different ways to extend our results. For example, our theorems are derived for the case where $\rho$ or $O$ is in $i\g$. Although this case encompasses most algorithms in the literature~\cite{cerezo2020variationalreview,bharti2021noisy,endo2021hybrid}, it could be interesting to generalize our results to operators and measurements not in the DLA. We refer the readers to Ref.~\cite{diaz2023showcasing}, where  exact loss function variance calculations are presented for a parametrized matchgate circuit, which are valid for arbitrary measurements and initial states. We hope that the results therein will serve as a blueprint for a general theory for generic circuits.   Moreover, one can also envision considering more realistic noise settings where noise channels are interleaved with the unitaries.  Clearly, since a noisy parametrized quantum circuit no longer forms a group, our Lie algebraic formalism no longer applies. Similarly, we encourage the community to develop tools to compute the exact variance for circuits that do not form approximate $2$-designs, as these cases are not covered by our main theorems. 
In this case, a Lie algebraic treatment such as the one presented here will not be available. Here, one will have to instead make additional assumptions such as local gates being sampled from a local group, thus mapping the variance evaluation to a Markov chain-like process which can be analyzed analytically~\cite{cerezo2020cost,uvarov2020barren,pesah2020absence,heyraud2023efficient,liu2021presence,barthel2023absence,miao2023isometric,garcia2023barren,letcher2023tight,arrazola2022universal,zhang2023absence}, numerically via Monte Carlo~\cite{napp2022quantifying} and tensor networks~\cite{braccia2024computing,hu2024demonstration}, or studied via XZ-calculus~\cite{zhao2021analyzing}. Moreover, we note that recent results have tied the DLA and the presence or absence of barren plateaus to other phenomena like the simulability of the quantum model~\cite{diaz2023showcasing,cerezo2023does}, to the presence of local minima in the optimization landscape~\cite{larocca2021theory}, and to the reachability of solutions (through the disconnectedness in the geometric manifold of unitaries~\cite{marvian2022restrictions}). We expect that in the near future, the generalized notions of entanglement and locality discussed here will be connected to the resources needed to make a model more or less universal, more or less simulable, and more or less trainable. Such connections will shed additional light on the central role that the DLA plays in characterizing variational quantum computing models.

Note added. A few days before our work was submitted to the arXiv, the manuscript~\cite{fontana2023theadjoint} was posted as a preprint. In Ref.~\cite{fontana2023theadjoint}, the authors provide an independent proof of the conjecture in~\cite{larocca2021diagnosing} and present results similar to those in our theorems and examples.  However, we note that in Ref.~\cite{fontana2023theadjoint} the authors study partial derivative concentration rather than loss concentration as we do. Moreover, our work provides 
a novel conceptual understanding of the variance $\g$-purity terms as forms of generalized entanglement and locality. As such, our work complements that of~\cite{fontana2023theadjoint}, and we encourage the reader to review both manuscripts. 

\section{Methods}
In this section, we first recall a few basic concepts for the DLA, present examples that showcase how our results can be used to understand several BP results in the literature (see Fig.~\ref{fig:methods_fig}), and derive additional theorems that justify some of the assumptions made in the main text. We also present concrete examples where we can analytically compute the $\g$-purities of the state and the measurement operator.  Furthermore, in the Supplemental Information, we include numerical simulations that illustrate our theoretical findings.

\subsection{Symmetries and the DLA structure}

In Eq.~\eqref{eq:reductive} we have argued that the DLA can be generically decomposed as a direct sum of commuting ideals. Here we provide additional intuition as to why this is the case. First, let us begin by denoting as ${\rm comm}(\GC)$ the commutant of the set of generators $\GC$ (i.e., the symmetries of the gate generators),  that is, 
\begin{equation}
    {\rm comm}(\GC)=\{A\in\mathbb{C}^{2^n\times 2^n}\;\;|\;\;[A,H]=0\;\forall H\in\GC\}\,,
\end{equation}
and as $\mathfrak{z}$ the center of ${\rm comm}(\GC)$, i.e.,
\begin{equation}
    \mathfrak{z}=\{B\in{\rm comm}(\GC)\;\;|\;\;[B,L]=0\;\forall L\in{\rm comm}(\GC)\}\,.
\end{equation}
The importance of the commutant ${\rm comm}(\GC)$ and its center $\mathfrak{z}$ span from the fact that they determine the invariant subspaces of the Hilbert space under the action of $G=e^{\mathfrak{g}}$. Moreover, the dimension of the center is directly related to the DLA's decomposition as we can rewrite the decomposition of $\mathfrak{g}$ into commuting ideals as  
\begin{equation}\label{eq:DLA-irrep-decomp}
    \mathfrak{g} = \bigoplus_{\lambda=1}^{\dim(\mathfrak{z})}\id_{m_{\lambda}}\otimes \mathfrak{g}_\lambda\,,
\end{equation}
with $m_{\lambda}$ denoting the multiplicity of the representation $\mathfrak{g}_\lambda$ of $\mathfrak{g}$.  
We may interpret such a direct sum decomposition as saying ``there exists a basis of our Hilbert space that block-diagonalizes the matrices in $\mathfrak{g}$.'' 

Equation~\eqref{eq:DLA-irrep-decomp} shows that a circuit with symmetries will have a rich and complex DLA structure, while universal, unstructured circuits will likely lead to controllable circuits with $\mathfrak{g}=\mathfrak{su}(2^n)$. In particular, we note that while there have been significant recent efforts on computing DLAs~\cite{zimboras2015symmetry,marvian2022restrictions,marvian2023non,kokcu2022fixed,schatzki2022theoretical,kazi2023universality,wiersema2023classification}, this field is still in its infancy, and we predict that  as our understanding and ability to calculate the DLA increases, the results of the present work  will become more and more important for studying the trainability of structured parametrized quantum circuits. 

\subsection{Examples of barren plateaus and their origins}

In this section, we review some results in the literature that have studied different sources of BPs, and we show how these results can be understood by our framework.

(i) Expressiveness. One of the most studied sources of BPs is the expressiveness of the circuit~\cite{mcclean2018barren,holmes2021connecting,marrero2020entanglement,patti2020entanglement,larocca2021diagnosing,friedrich2023quantum,sharma2020trainability,kieferova2021quantum,pesah2020absence,lee2021towards,martin2022barren,grimsley2022adapt,sack2022avoiding}. As previously noted, by expressiveness we mean the breadth of unitaries that the circuit can generate as the parameters are varied.  

For example, let us consider the case where the unitary $U(\thv)$ forms a $2$-design~\cite{dankert2009exact} over $\mathbb{SU}(2^n)$ (see Fig.~\ref{fig:methods_fig}(a)). This is precisely the case studied in the seminal work of Ref.~\cite{mcclean2018barren}. Here, $\g=\mathfrak{su}(2^n)$, meaning that $\dim(\g)=4^n-1$. From Theorem~\ref{thm:variance-reductive}, we find $\bbE_{\thv}[\ell_{\thv}(\rho,O)] = 0$ (because the Lie algebra has no center), and
\small
\begin{align}
        \Var_{\thv}[\ell_{\thv}(\rho,O)] &= \frac{1}{4^n-1} \Tr [O_{\g}^2] \Tr [\rho_{\g}^2]\label{eq:2-design}\\
        &= \frac{1}{4^n-1} \left(\Tr [O^2]-\frac{\Tr[O]^2}{2^n}\right) \left(\Tr [\rho^2]-\frac{1}{2^n}\right).\nonumber
\end{align}
\normalsize
In the second line above, we have used the fact that, because $i\mathfrak{su}(2^n)$ contains all Pauli matrices except the identity $\id$,
we have $O_\g = O- 2^{-n} \Tr[O] \id$ and similarly for $\rho$ (which has $\Tr[\rho]=1$). 
Since $\Tr[O^2]\leq 2^n$ (by assumption) and noting that $\Tr[\rho^2]-2^{-n}\leq 1$, we find that $\Var_{\thv}[\ell_{\thv}(\rho,O)]\in \OC(\frac{1}{2^n})$. Hence, we can see that, regardless of what $O$ and $\rho$ are, the expressiveness of the circuit will always lead to a BP.

(ii) Measurement locality. A second potential source of BPs  is the measurement operator~\cite{cerezo2020cost,uvarov2020barren,kashif2023impact,khatri2019quantum,uvarov2020variational,leadbeater2021f,cerezo2020variational}. For example, consider the task of sending $\rho=\dya{0}^{\otimes n}$ through a circuit $U(\thv)$ composed of general single-qubit gates. Clearly, this parametrized quantum circuit is very inexpressive as it does not generate entanglement; in fact, we can easily find that $\g=\mathfrak{su}(2)^{\oplus n}$. As shown in~\cite{cerezo2020cost}, if the measurement is global (see Fig.~\ref{fig:methods_fig}(c)), e.g.,   $O=X^{\otimes n}$, one has a BP since $\Var_{\thv}[\ell_{\thv}(\rho,O)]\in \OC(1/2^n)$. However,  measuring an operator in $\g$, such as $O=X_1$ (see Fig.~\ref{fig:methods_fig}(c)), does not lead to BPs. Now Theorem~\ref{thm:variance-reductive} indicates that $\bbE_{\thv}[\ell_{\thv}(\rho,O)] = 0$ and 
\begin{align}
        \Var_{\thv}[\ell_{\thv}(\rho,O)] &= \sum_{j=1}^n \frac{1}{3} \Tr[O_{\g_j}^2] \Tr[\rho_{\g_j}^2] \nonumber \\
        &= \frac{1}{3} \Tr[O_{\g_1}^2] \Tr[\rho_{\g_1}^2] = \frac{1}{3} \label{eq:var-local}\\
        &= \frac{1}{3} \Tr_{\mathbb{C}^2} [O_1^2] \left(\Tr_{\mathbb{C}^2}  [\rho_{1}^2]-\frac{1}{2}\right), \nonumber
\end{align}
where $\Tr_{\mathbb{C}^2}$ denotes the trace over the qubit where $O$ acts.
Indeed, we have $\g=\g_1\oplus\dots\oplus\g_n$ with each $\g_j\cong\mathfrak{su}(2)$. Using the orthonormal basis $\{2^{-n/2} X_j, 2^{-n/2} Y_j, 2^{-n/2} Z_j\}$ of $i\g_j$, we find $O_{\g_1}=X_1$ and $O_{\g_j}=0$ for $2\leq j\leq n$, while $\rho_{\g_j}=2^{-n} Z_j$ for all $j$. This gives $\Tr[O_{\g_1}^2]=2^n$, $\Tr[\rho_{\g_1}^2]=2^{-n}$, yielding the value $1/3$ on the second line of Eq.~\eqref{eq:var-local}. 
The last line in Eq.~\eqref{eq:var-local} provides an alternative interpretation, where the observable $O$ and state $\rho$ are reduced on the first qubit to $O_1=X$ and $\rho_1=\dya{0}$, respectively. Note that the last line in Eq.~\eqref{eq:var-local} is the special case $n=1$ of Eq.~\eqref{eq:2-design}. 
Thus, in this specific case, the notions of generalized entanglement and generalized locality match their standard notions based on the subsystem decomposition of the Hilbert space. As such, if we measure a local operator in qubit $j$,  the variance will depend on the purity of the reduced (input) state over qubit $j$.

(iii) Initial state entanglement. Next,  it has been shown that initial state entanglement can lead to BPs~\cite{mcclean2018barren,cerezo2020cost,thanaslip2021subtleties,shaydulin2021importance,abbas2020power,leone2022practical,holmes2021barren}. For instance, consider again the circuit in Fig.~\ref{fig:methods_fig}(c), where $U(\thv)$ is composed of single qubit gates, and let the measurement operator be local, e.g., $O=X_1$. Although we do not expect expressiveness- or locality-induced BPs, if the initial state $\rho$ satisfies a volume law of entanglement (i.e., if the reduced state on any qubit is exponentially close to being maximally mixed), then $\Tr_{\mathbb{C}^2} [\rho_{1}^2]\in\OC(1/2^n)$ and the loss has a BP according to Eq.~\eqref{eq:var-local}. 

(iv) Noise.  Finally, the presence of hardware noise has been shown to induce loss function concentration~\cite{wang2020noise,franca2020limitations,garcia2023effects}. As an example, consider the circuit in Fig.~\ref{fig:methods_fig}(d), where a global depolarizing noise channel $\NC_B(\rho)=(1-p)\rho+p\, 2^{-n}\id$ acts at the beginning of the circuit (e.g., as part of the SPAM errors). Now, we have $\widetilde{\ell}_{\thv}(\rho, O) = (1-p)\ell_{\thv}(\rho, O)+p\, 2^{-n} \Tr[O]$. As $p$ increases, we can see that $\Var_{\thv}[\widetilde{\ell}_{\thv}(\rho,O)]=(1-p)^2\Var_{\thv}[\ell_{\thv}(\rho,O)]$ approaches $0$ and the loss concentrates around $2^{-n} \Tr[O]$. 

\subsection{Deep parametrized quantum circuits form $2$-designs}

Let us here show that if the circuit is deep enough, then it will form a $2$-design on each of the simple or abelian components of the DLA. In particular, we answer the question: How many layers are needed for $U(\thv)$ to become an approximate $2$-design over $G$?

First, let us express the parametrized quantum circuit by factorizing it into layers. That is, we take $U(\thv) = U_{L}(\thv_L)  U_{L-1}(\thv_{L-1})  \cdots  U_1(\thv_1)$ where each layer $U_{l}(\thv_l)$ is given by $
    U_l(\thv_l) = \prod_{k=1}^K e^{-iH_k \theta_{l,k}}$. Next, 
let us denote the ensemble of unitaries generated by the $L$-layered circuit $U(\thv)$
by $\calE_L$, and its $t$-th moment operator $\calM^{(t)}_{\calE_L}: \gl(\calH^{\otimes t})\to \gl(\calH^{\otimes t})$ is the linear superoperator given by
\begin{equation} \label{def:moment-superop}
    \MC_{\EC_L}^{(t)}(\cdot) = \int_{\EC_L} d\mu(U) \, U^{\otimes t}(\cdot) (U\ad)^{\otimes t}\,. 
\end{equation}
The importance of $\calM_{\calE_L}^{(t)}$ arises from the fact that it allows us to compare how expressive, or how close, a given $U(\thv)$ is to the Haar ensemble over $G$. In particular, denoting as $\calM_{G}^{(t)}$  the moment operator for the Haar ensemble over $G$, one can quantify the expressiveness of the circuit via the norm of the superoperator,
\begin{equation} \label{def:ensemble-distance-A}
        \calA_{\calE_L}^{(t)} = \calM_{\calE_L}^{(t)} - \calM_{G}^{(t)} \,.
\end{equation}
For our purposes, we work with the Schatten $p$-norms where $p=1,\infty$ on the spaces of operators $\gl(\calH^{\otimes t})$. Both are spectral norms: given the singular values $s_i(A)\geq 0$ of $A$, we may express these norms as
\begin{equation} \label{def:schatten 1 and infty norms}
    \norm{A}_1 = \sum_{i} s_i(A), \qquad \norm{A}_\infty = \max_i s_i(A) . 
\end{equation}

This induces corresponding operator norms for superoperators $\calM:\gl(\calH^{\otimes t}) \to \gl(\calH^{\otimes t})$ by the usual
\begin{equation} \label{def:induced Schatten p norm}
    \norm{\calM}_p = \sup_{\substack{A\neq 0\\ A\in \gl(\calH^{\otimes t})}} \frac{\norm{\calM(A)}_p}{\norm{A}_p}.
\end{equation}

We say ${\calE_L}$ forms a $G$ $t$-design if $\calA_{\calE_L}^{(t)} = 0$, and we say ${\calE_L}$ forms an $\epsilon$-approximate $G$ $t$-design if $\norm{\calA^{(t)}_{\calE_L}}_\infty \leq \epsilon$ for some $\epsilon>0$. We prove the following theorem in the Supplemental Information:
\begin{theorem}\label{th:design-layers}
The ensemble of unitaries $\calE_L\subseteq G$ generated by an $L$-layered circuit  $U(\thv)$ will form an $\epsilon$-approximate $G$ $2$-design, when the number of layers $L$ is
\begin{equation}\label{eq:bound2 on L}
    L \geq \frac{\log(1/\epsilon)}{\log\paran{1/\norm{\calA^{(2)}_{\calE_1}}_\infty}} ,
\end{equation} where $\calE_1$ is the ensemble generated by a single layer $U_1(\thv_1)$.
\end{theorem}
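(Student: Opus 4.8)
The plan is to reduce the $L$-layer claim to a single-layer contraction estimate, using two structural facts: that the parameters in distinct layers are independent, and that the Haar moment operator $\calM_G^{(2)}$ is a projection which absorbs the single-layer moment operator $\calM_{\calE_1}^{(2)}$ from both sides. This is the standard "spectral gap amplification" argument for random circuits, adapted to the group $G=e^{\g}$.

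\emph{Step 1 (moment operator of a layered circuit).} Since the parameters in distinct layers are independent, $\calE_L$ is the law of a product $U_L\cdots U_1$ of $L$ independent copies of $\calE_1$. Because the conjugation channels compose as $\calM^{(t)}_{U_1}\!\circ\calM^{(t)}_{U_2}=\calM^{(t)}_{U_1U_2}$, peeling off the last layer and using independence would give $\calM^{(t)}_{\calE_L}=\calM^{(t)}_{\calE_1}\!\circ\calM^{(t)}_{\calE_{L-1}}$, hence by induction
\begin{equation}
    \calM^{(t)}_{\calE_L}=\bigl(\calM^{(t)}_{\calE_1}\bigr)^{L}.
\end{equation}

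\emph{Step 2 ($\calM_G^{(t)}$ is an absorbing projection).} Next I would use the left- and right-invariance of the Haar measure on $G$: for every fixed $V\in G$ the maps $W\mapsto VW$ and $W\mapsto WV$ preserve the Haar law, so averaging the composition identity of Step~1 over $W\sim\mathrm{Haar}(G)$ and then over $V\sim\calE_1\subseteq G$ yields
\begin{equation}
    \calM^{(t)}_{\calE_1}\,\calM^{(t)}_{G}=\calM^{(t)}_{G}\,\calM^{(t)}_{\calE_1}=\calM^{(t)}_{G},\qquad \bigl(\calM^{(t)}_{G}\bigr)^2=\calM^{(t)}_{G}.
\end{equation}

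\emph{Step 3 (the error superoperator is a power).} From $\calA^{(t)}_{\calE_L}=\calM^{(t)}_{\calE_L}-\calM^{(t)}_{G}=\bigl(\calM^{(t)}_{\calE_1}\bigr)^{L}-\calM^{(t)}_{G}$, I would expand the power: every term that contains at least one factor of $\calM^{(t)}_{G}$ collapses to $\calM^{(t)}_{G}$ by Step~2, and the expansion telescopes to
\begin{equation}
    \calA^{(t)}_{\calE_L}=\bigl(\calA^{(t)}_{\calE_1}\bigr)^{L}.
\end{equation}
Equivalently: $\calM^{(2)}_{G}$ is a projection left invariant by $\calM^{(2)}_{\calE_1}$, on whose range $\calM^{(2)}_{\calE_1}$ acts as the identity, so $\calA^{(2)}_{\calE_1}$ vanishes there and equals $\calM^{(2)}_{\calE_1}$ on the complementary invariant subspace, and the same holds for $\calA^{(2)}_{\calE_L}$.

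\emph{Step 4 (norm bound and solving for $L$).} Finally, the induced Schatten-$\infty$ norm on superoperators is submultiplicative, so $\norm{\calA^{(2)}_{\calE_L}}_\infty\leq\norm{\calA^{(2)}_{\calE_1}}_\infty^{\,L}$. Imposing $\norm{\calA^{(2)}_{\calE_1}}_\infty^{\,L}\leq\epsilon$ and taking logarithms --- with $\norm{\calA^{(2)}_{\calE_1}}_\infty<1$, so $\log\norm{\calA^{(2)}_{\calE_1}}_\infty<0$ and the inequality reverses --- gives precisely $L\geq\log(1/\epsilon)\big/\log\bigl(1/\norm{\calA^{(2)}_{\calE_1}}_\infty\bigr)$, which is Eq.~\eqref{eq:bound2 on L}.

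I do not expect a deep obstacle here; the care needed is concentrated in Steps~1--3, namely verifying that the tensor-power conjugation channels compose correctly (turning a product of layers into a product of moment operators), that $\calM^{(2)}_{G}$ genuinely absorbs $\calM^{(2)}_{\calE_1}$ on both sides via Haar invariance, and that the cross terms in Step~3 telescope. The one genuinely quantitative input, implicit in the statement, is that a single layer be strictly contracting on the orthogonal complement of the $G$-invariant subspace of $\gl(\calH^{\otimes2})$, i.e. $\norm{\calA^{(2)}_{\calE_1}}_\infty<1$; this is exactly what makes the right-hand side of Eq.~\eqref{eq:bound2 on L} finite, and verifying it (or the sharper spectral-gap form) for a concrete circuit is where the actual work lies in applications.
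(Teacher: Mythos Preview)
Your Steps~1--4 match the paper's Lemma~6 essentially line for line: the paper also shows $\calM^{(t)}_{\calE_L}=(\calM^{(t)}_{\calE_1})^L$, uses the left/right Haar invariance to obtain $\calM^{(t)}_{G}\calM^{(t)}_{\calE_1}=\calM^{(t)}_{\calE_1}\calM^{(t)}_{G}=\calM^{(t)}_{G}$, and concludes $\calA^{(t)}_{\calE_L}=(\calA^{(t)}_{\calE_1})^L$. One minor refinement: the paper observes that $\calM^{(t)}_{\calE_1}$ is self-adjoint for the Hilbert--Schmidt inner product, so the singular values of $\calA^{(t)}_{\calE_1}$ coincide with the absolute eigenvalues, giving the \emph{equality} $\norm{\calA^{(t)}_{\calE_L}}_\infty=\norm{\calA^{(t)}_{\calE_1}}_\infty^L$ rather than merely the submultiplicative inequality you use; this does not matter for the theorem as stated.

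The one substantive difference is your treatment of the strict contraction $\norm{\calA^{(2)}_{\calE_1}}_\infty<1$. You flag it as an external input to be checked case by case, but the paper actually proves it in full generality as part of the theorem (its Lemmas~4 and~5): self-adjointness and unitary invariance of the Schatten norms give $\norm{\calM^{(t)}_{\calE_1}}_\infty\le1$, and then a compactness argument (every element of $G$ is reachable at a uniform depth $\tilde L$ from the single-layer generators, cf.\ \cite{dalessandro2010introduction}) combined with continuity of $U\mapsto\langle A,U^{\otimes t}A(U^\dagger)^{\otimes t}\rangle$ shows that on the complement $\mathfrak p$ of the $G$-invariants no eigenvalue can have modulus $1$. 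So the bound in Eq.~\eqref{eq:bound2 on L} is always finite, not just in favorable concrete cases; this abstract step is the piece your proposal is missing.
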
 We actually prove that this is true for any $t$, but to estimate the variance we need only $t=2$.

First, we note that Theorem~\ref{th:design-layers} generalizes the result of~\cite{larocca2021diagnosing}, which was only valid for problems where $\g=\mathfrak{su}(2^n)$ or $\mathfrak{u}(2^n)$,
to the case of an arbitrary DLA. Next, we can explicitly see that if the circuit is deep enough, i.e., if it has enough layers, it will always form an $\epsilon$-approximate $G$ $2$-design, thus justifying the assumptions made in the main text. More concretely, if we know what the expressiveness for one layer is,  Theorem~\ref{th:design-layers} allows us to compute exactly how many layers are needed for $\norm{\calA^{(t)}_{\calE_L}}_\infty$ to be smaller than a given $ \epsilon$. We present in the Supplemental Information an example where we can  explicitly construct $\calA^{(2)}_{\calE_1}$ and obtain its largest eigenvalue.  Finally, we note that the expressiveness of a single layer will depend on several choices in the ansatz such as how the gates are distributed in a single layer, and how the parameters are sampled therein. We leave the study of the single layer expressiveness changes as a function of those factors for future work.

\subsection{Variance for a circuit that does not form a $2$-design over $G$}

In the main text, we have assumed that the circuit is deep enough so that it forms a $2$-design over $G$, and Theorem~\ref{th:design-layers} provides bounds on the necessary number of layers for this to happen. Here, we instead ask the question: How different will the variance be from that in Theorem~\ref{thm:variance-reductive} if the circuit forms approximate, rather than an exact, $2$-design over $G$?

For this purpose, let us consider an $L$-layered circuit such as that in Eq.~\eqref{eq:PQC}. In what follows, we do not assume that the circuit forms an approximate $2$-design over $G$.  Just as before, we denote the ensemble of unitaries generated by the $L$-layered circuit $U(\thv)$
by $\EC_L$, and we define its $t$-th moment superoperator $\calM_{\calE_L}$ by Eq. \ref{def:moment-superop}.
Then, denoting as $\Var_{\calE_L}$ and $\Var_{G}$ the variance obtained by sampling circuits over the distribution $\calE_L$ and over the Haar measure over $G$, respectively, we find that the following theorem holds. 
\begin{theorem}\label{th:variance-layers}
Let $\mathfrak{g}\subseteq \mathfrak{u}(2^n)$ be any dynamical Lie algebra, and suppose either $\rho$ or $O$ are in $i\g$ with $\rho$ a density matrix. Then the difference between the loss function variance of an $L$-layered circuit, with ensemble of unitaries $\calE_L$, and the variance for a circuit that forms a $2$-design over $G$ can be bounded as 
\begin{equation}
\abs{\Var_{\calE_L} [\ell_{\thv}(\rho, O)]-\Var_{G} [\ell_{\thv}(\rho, O)]}\leq 3 \paran{\norm{\calA_{\calE_1}^{(2)}}_\infty}^L \norm{O}_1^2 \,,
\end{equation}
where the usual Schatten $p$-norm and induced Schatten $p$-norm are given by Eq. (\ref{def:schatten 1 and infty norms}) and Eq. (\ref{def:induced Schatten p norm}).
\end{theorem}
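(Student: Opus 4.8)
The plan is to express both variances as the action of the respective second-moment superoperators on a fixed input built out of $\rho$ and $O$, and then control the difference by the operator-norm difference $\calA_{\calE_L}^{(2)} = \MC_{\EC_L}^{(2)} - \MC_G^{(2)}$. Concretely, $\ell_{\thv}(\rho,O)^2 = \Tr[(U^{\otimes 2})(\rho^{\otimes 2})(U\ad)^{\otimes 2}(O^{\otimes 2})]$, so $\bbE_{\calE_L}[\ell_{\thv}^2] = \Tr[\MC_{\EC_L}^{(2)}(\rho^{\otimes 2})\, O^{\otimes 2}]$, and similarly $\bbE_{\calE_L}[\ell_{\thv}]^2$ can be written using $\MC_{\EC_L}^{(1)}\otimes\MC_{\EC_L}^{(1)}$ applied to $\rho^{\otimes 2}$ against $O^{\otimes 2}$; the same holds for the Haar ensemble over $G$ with $\MC_G$. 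Subtracting, the difference $\Var_{\calE_L} - \Var_{G}$ becomes $\Tr[(\MC_{\EC_L}^{(2)} - \MC_G^{(2)})(\rho^{\otimes 2})\, O^{\otimes 2}]$ minus the analogous first-moment term; I would argue the first-moment contribution either vanishes or is dominated the same way (since $\MC_{\EC_L}^{(1)} = \MC_G^{(1)}$ already holds when the single-layer ensemble is first-moment-exact on $G$, which the layered structure guarantees once it generates $G$), leaving essentially $|\Tr[\calA_{\calE_L}^{(2)}(\rho^{\otimes 2})\, O^{\otimes 2}]|$ to bound.

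Next I would apply Hölder's inequality for Schatten norms: $|\Tr[A B]| \leq \norm{A}_1 \norm{B}_\infty$. Taking $A = \calA_{\calE_L}^{(2)}(\rho^{\otimes 2})$ and $B = O^{\otimes 2}$ gives $\norm{O^{\otimes 2}}_\infty = \norm{O}_\infty^2 \leq \norm{O}_1^2$, and $\norm{\calA_{\calE_L}^{(2)}(\rho^{\otimes 2})}_1 \leq \norm{\calA_{\calE_L}^{(2)}}_1 \,\norm{\rho^{\otimes 2}}_1 = \norm{\calA_{\calE_L}^{(2)}}_1$ since $\rho$ is a density matrix. By the duality of the induced $1$- and $\infty$-norms on superoperators (transpose/adjoint of the channel), $\norm{\calA_{\calE_L}^{(2)}}_1 = \norm{(\calA_{\calE_L}^{(2)})^\dagger}_\infty = \norm{\calA_{\calE_L}^{(2)}}_\infty$, using that the moment superoperators are self-adjoint with respect to the Hilbert–Schmidt inner product (they are averages of the unital, Hermiticity-preserving maps $X\mapsto U^{\otimes 2} X (U\ad)^{\otimes 2}$). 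The sub-multiplicativity/tensorization of the layered moment operator — the key structural fact already used in Theorem~\ref{th:design-layers} — gives $\MC_{\EC_L}^{(2)} = (\MC_{\EC_1}^{(2)})^L$ after projecting off the Haar-invariant part, so that $\norm{\calA_{\calE_L}^{(2)}}_\infty = \norm{(\calA_{\calE_1}^{(2)})^L}_\infty \leq (\norm{\calA_{\calE_1}^{(2)}}_\infty)^L$ because $\calA_{\calE_1}^{(2)}$ restricted to the complement of $\ker\calA$ is exactly the part of $\MC_{\EC_1}^{(2)}$ that composes multiplicatively. Combining these estimates yields a bound of the form $C (\norm{\calA_{\calE_1}^{(2)}}_\infty)^L \norm{O}_1^2$; tracking the factor of two coming from the pieces $\bbE[\ell^2]$ and $(\bbE[\ell])^2$, plus possibly a factor from handling the case $\rho\in i\g$ versus $O\in i\g$ symmetrically, produces the stated constant $3$.

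The main obstacle I anticipate is the bookkeeping around the first-moment term and the precise claim that $\MC_{\EC_L}^{(2)} - \MC_G^{(2)} = (\MC_{\EC_1}^{(2)} - \MC_G^{(2)})^L$ as operators on the whole of $\gl(\calH^{\otimes 2})$ (not merely on the orthogonal complement of the design subspace). This identity requires that $\MC_G^{(2)}$ is an orthogonal projector onto the joint fixed-point space, that each $\MC_{\EC_1}^{(2)}$ commutes with and fixes that space, and that $\MC_{\EC_1}^{(2)}\MC_G^{(2)} = \MC_G^{(2)}\MC_{\EC_1}^{(2)} = \MC_G^{(2)}$; these follow from left/right invariance of the Haar measure on $G$ together with $\calE_1 \subseteq G$, but they must be verified carefully, and the layers being i.i.d.\ (so that the ensemble $\calE_L$ is the $L$-fold convolution of $\calE_1$) is what lets the moment operator factor as a power. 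Once that algebraic skeleton is in place, the Schatten-norm estimates are routine and the constant $3$ falls out of the explicit term count. The other minor point to handle is that the theorem allows either $\rho$ or $O$ in $i\g$; I would note that the bound is stated asymmetrically in terms of $\norm{O}_1$ and $\rho$ a density matrix, so the argument above (Hölder with $\rho^{\otimes 2}$ in trace norm and $O^{\otimes 2}$ in operator norm) covers the stated case directly, and the other case is obtained by an analogous split with the roles adjusted.
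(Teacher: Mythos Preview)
Your overall skeleton matches the paper's proof: write the second moment as $\Tr[\calM^{(2)}(\rho^{\otimes 2})O^{\otimes 2}]$, bound the difference with H\"older, and use the identity $\calA_{\calE_L}^{(2)} = (\calA_{\calE_1}^{(2)})^L$ (which indeed follows exactly from $\calM_G^{(2)}$ being a projection and $\calM_{\calE_1}^{(2)}\calM_G^{(2)} = \calM_G^{(2)}\calM_{\calE_1}^{(2)} = \calM_G^{(2)}$, as you correctly anticipate). Your H\"older direction is the dual of the paper's (you put $\rho^{\otimes 2}$ in trace norm and $O^{\otimes 2}$ in operator norm, then invoke $\norm{\calA}_1=\norm{\calA}_\infty$ via self-adjointness; the paper puts $\rho^{\otimes 2}$ in operator norm directly), but both are fine.

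The genuine gap is your treatment of the first-moment term. The assertion that ``$\MC_{\EC_L}^{(1)} = \MC_G^{(1)}$ already holds when the single-layer ensemble is first-moment-exact on $G$, which the layered structure guarantees once it generates $G$'' is false: a finite-layer ensemble $\calE_L$ that generates $G$ is in general \emph{not} an exact $1$-design on $G$, so $(\bbE_{\calE_L}[\ell])^2 - (\bbE_G[\ell])^2$ does not vanish. The paper handles this term explicitly: it writes $a^2-b^2=(a+b)(a-b)$ with $a=\bbE_{\calE_L}[\ell]$, $b=\bbE_G[\ell]$, bounds $|a+b|\le 2\norm{O}_1$ (since each moment operator has $\norm{\cdot}_\infty\le 1$) and $|a-b|\le \norm{O}_1\,\norm{\calA_{\calE_L}^{(1)}}_\infty$, and then invokes the lemma that every $\epsilon$-approximate $2$-design is an $\epsilon$-approximate $1$-design, i.e.\ $\norm{\calA_{\calE_1}^{(1)}}_\infty\le\norm{\calA_{\calE_1}^{(2)}}_\infty$. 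This is precisely where the constant $3=1+2$ comes from (one from the second-moment piece, two from the difference-of-squares factor), not from any $\rho\in i\g$ versus $O\in i\g$ case split; in fact the hypothesis that $\rho$ or $O$ lies in $i\g$ is not used anywhere in the bound, only that $\rho$ is a density matrix.
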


Theorem~\ref{th:variance-layers} shows that the variance of an $L$-layered circuit will converge exponentially fast in $L$ to the variance obtained by assuming that $U(\thv)$ forms a $2$-design. The rate of convergence is again determined by the single layers expressiveness  $\norm{\calA_{\calE_1}^{(2)}}_\infty$. As such, we can see that the assumption of $U$ being a $2$-design rapidly becomes an exact statement in the deep circuits regime. The proof of Theorem~\ref{th:variance-layers} is given in the Supplemental Information.

\subsection{Variance as a sum over simple or abelian components}

As noted in the main text, the DLA can always be decomposed into a direct sum as in Eq.~\eqref{eq:reductive}. This implies that since $G$ is compact, it must decompose as $G =G_1\times \dots \times G_{k}$, with $G_j=e^{\mf{g}_j}$. Hence, any parametrized quantum circuit $U(\thv) \in G$ can be expressed as a tuple $U=(U_1,\ldots, U_k)$, where we have omitted the dependence on $\thv$ for simplicity of notation. This realization allows us to show that the loss splits into a sum over the summands $\g_j$ in Eq.~\eqref{eq:reductive}. Explicitly, for the case where $\g=\g_1\oplus\g_2$ with $O\in i\g$ and $U(\thv) \in G$, we have
\begin{equation}
O = O_1 + O_2, \quad U(\thv) = (U_1, U_2)\,, \nonumber
\end{equation}
where $O_j\in\g_j$ and $U_j\in G_j$ for $j=1,2$. Hence, the loss function is
\begin{equation}
    \ell_{\thv}(\rho, O) = \Tr [\rho U(\thv)^\dagger O_1 U(\thv)] +\Tr [\rho U(\thv)^\dagger O_2 U(\thv)]\,.\nonumber
\end{equation}
Since $O_1\in \g_1$ commutes with any element of $G_2$, we obtain
\begin{align}
    \Tr [\rho U(\thv)^\dagger O_1 U(\thv)] &= \Tr [\rho (U_1, U_2)\ad O_1 (U_1, U_2)] \nonumber\\
    &= \Tr[ \rho (U_1, 1)^\dagger O_1 (U_1, 1)]\nonumber \\
    & =: \ell_{1}(\rho, O_1) \,.\nonumber
\end{align} 
A similar result holds for the term $\Tr [\rho U(\thv)^\dagger O_2 U(\thv)]$, and the loss function is expressible as a sum of losses in each simple or abelian component: 
\begin{equation}
\ell_{\thv}(\rho, O) = \ell_{1}(\rho, O_1) + \ell_{2}(\rho, O_2)\,.
\end{equation}
Therefore, the mean value and the variance of $\ell_{\thv}(\rho, O)$ can also be expressed as the sum of individual terms in each simple or abelian component
(additional details of the proof are provided in the Supplemental Information).
This statement is formally captured as follows: 
\begin{proposition} \label{thm:sums-E-and-V}
Let $O$ be in $i\g$ such that $O= \sum_{i=1}^k O_i$ respect the decomposition Eq.~\eqref{eq:reductive} of $\g$ into simple and abelian components.  Then, for any $\rho$, the expectation and variance of the loss function split into sums as
    \begin{equation}\begin{split}
        \bbE_{\thv}[\ell_{\thv}(\rho,O)] &= \sum_{i=1}^k \bbE_{G_i}[\ell_{i}(\rho,O_i)], \\
        \Var_{\thv}[\ell_{\thv}(\rho,O)] &= \sum_{i=1}^k \Var_{G_i}[\ell_{_i}(\rho,O_i)] ,
    \end{split}\end{equation} where $\ell_{i}(\rho,O_i) = \Tr[(1,\dots,U_i,\dots, 1) \rho (1,\dots,U_{i}, \dots, 1)^\dagger O_i]$ 
    corresponds to the loss in the $G_i$-component of $G$.
\end{proposition}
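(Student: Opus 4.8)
The plan is to establish the decomposition of the loss function first, then lift it to the moments. The key structural fact is that since $\g$ is reductive with $\g = \g_1 \oplus \cdots \oplus \g_k$ into commuting ideals, the connected Lie group $G = e^{\g}$ factors as a direct product $G = G_1 \times \cdots \times G_k$ with $G_i = e^{\g_i}$, and (crucially) the normalized Haar measure on $G$ is the product measure $d\mu = d\mu_1 \cdots d\mu_k$. Under the $2$-design assumption, $\bbE_{\thv}[f(U(\thv))] = \int_G f\, d\mu = \prod_i \int_{G_i} f\, d\mu_i$ for any $f$ that is a degree-$\leq 2$ polynomial in the matrix entries of $U$ and $U^\dagger$; both $\ell_{\thv}$ and $\ell_{\thv}^2$ are such polynomials.

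\textbf{Step 1: Splitting the loss.} Writing $O = \sum_i O_i$ with $O_i = O_{\g_i} \in i\g_i$ (orthogonal projections, which here simply regroup a basis of $i\g$ adapted to the ideal decomposition), and $U(\thv) = (U_1, \ldots, U_k)$, I would observe that $O_i$ commutes with every factor $U_j$ for $j \neq i$ (since $[\g_i, \g_j] = 0$, the groups $G_i$ and $G_j$ commute elementwise). Therefore $U(\thv)^\dagger O_i U(\thv) = (1,\ldots,U_i,\ldots,1)^\dagger O_i (1,\ldots,U_i,\ldots,1)$, and
\begin{equation}
\ell_{\thv}(\rho, O) = \sum_{i=1}^k \ell_i(\rho, O_i), \qquad \ell_i(\rho, O_i) := \Tr[(1,\ldots,U_i,\ldots,1)\,\rho\,(1,\ldots,U_i,\ldots,1)^\dagger O_i].
\end{equation}
This is exactly the computation already sketched in the excerpt for $k=2$; it extends verbatim to general $k$.

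\textbf{Step 2: Mean.} Linearity of expectation gives $\bbE_{\thv}[\ell_{\thv}(\rho,O)] = \sum_i \bbE_{\thv}[\ell_i(\rho, O_i)]$. Since $\ell_i$ depends only on the $U_i$ coordinate, and the product measure integrates out the other factors (each $\int_{G_j} d\mu_j = 1$), we get $\bbE_{\thv}[\ell_i(\rho,O_i)] = \int_{G_i} \ell_i(\rho, O_i)\, d\mu_i = \bbE_{G_i}[\ell_i(\rho, O_i)]$, which is the claimed first identity.

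\textbf{Step 3: Variance.} Expanding $\Var_{\thv}[\ell_{\thv}(\rho,O)] = \bbE_{\thv}\big[\big(\sum_i \ell_i\big)^2\big] - \big(\sum_i \bbE_{\thv}[\ell_i]\big)^2 = \sum_{i,j}\big(\bbE_{\thv}[\ell_i \ell_j] - \bbE_{\thv}[\ell_i]\bbE_{\thv}[\ell_j]\big)$, I would argue that the off-diagonal covariance terms vanish: for $i \neq j$, $\ell_i$ and $\ell_j$ are functions of the independent coordinates $U_i, U_j$, and under the product Haar measure $\bbE_{\thv}[\ell_i \ell_j] = \bbE_{G_i}[\ell_i]\,\bbE_{G_j}[\ell_j] = \bbE_{\thv}[\ell_i]\,\bbE_{\thv}[\ell_j]$, so the covariance is zero. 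The diagonal terms are $\bbE_{\thv}[\ell_i^2] - \bbE_{\thv}[\ell_i]^2 = \bbE_{G_i}[\ell_i^2] - \bbE_{G_i}[\ell_i]^2 = \Var_{G_i}[\ell_i(\rho,O_i)]$, again because $\ell_i^2$ is a polynomial of degree two and the product measure reduces the integral to $G_i$. Summing gives the second identity.

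\textbf{Main obstacle.} The only genuinely delicate point is the factorization of the Haar measure and the justification that the $2$-design hypothesis licenses replacing $\bbE_{\thv}$ by $\int_G d\mu = \prod_i \int_{G_i} d\mu_i$ \emph{on each summand separately}, including the cross-terms $\ell_i \ell_j$ — one must check these are still degree-$\leq 2$ in the entries of a single $U(\thv)$ (they are: $\ell_i$ is degree $(1,1)$, so $\ell_i \ell_j$ is degree $(2,2)$, matching the $t=2$ moment). Everything else is bookkeeping with orthogonal projections and the commuting-ideal structure. A minor subtlety worth a sentence: the $O_i$ must be taken as the Hilbert–Schmidt projections of $O$ onto $i\g_i$, which is consistent with $O \in i\g \Rightarrow O = \sum_i O_i$ with $O_i \in i\g_i$ because the ideals are mutually orthogonal under $\inprod{A,B} = \Tr[A^\dagger B]$.
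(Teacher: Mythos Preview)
Your proposal is correct and follows essentially the same approach as the paper: split the loss as $\ell_{\thv}(\rho,O)=\sum_i \ell_i(\rho,O_i)$ using that $O_i$ commutes with $G_j$ for $j\neq i$, then use the product Haar measure to treat the $\ell_i$ as independent random variables. The only cosmetic difference is that the paper invokes ``the variance of a sum of independent random variables equals the sum of variances'' in one line, whereas you expand the covariance terms explicitly and kill them one by one---same content, slightly more detail.
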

We note that the previous result is equivalently valid if one exchanges the roles of $O$ and $\rho$, that is, if $\rho\in i\g$ and $O$ is any Hermitian observable.

\subsection{Special case: $\rho$ or $O$ simultaneously diagonalizable with a Cartan subalgebra}

Here, we focus on the case where one chooses a Cartan subalgebra $\h\subset \g$ and $\rho$ (or $O$) is simultaneously diagonalizable with $\h$, i.e., $[\rho, \h ]=0$. A basis of $\calH$ diagonalizing $\h$ is called a weight basis, and when $\rho$ is diagonalized by this basis, the $\g$-purity is straightforward to compute. We first demonstrate this situation by revisiting a calculation from Ref.~\cite{larocca2021diagnosing}, which concerns the case where $\g=\mathfrak{su}(2)$ is irreducibly represented in $\HC$. We then generalize this calculation to an arbitrary simple DLA $\g$.

Let $\g$ be the irreducible spin-$S$ representation of $\mathfrak{su}(2)$ acting on $\calH = \C^{2S+1}$.  We have a basis for $i\g$ given by the usual Hermitian spin matrices $\{S_x,S_y,S_z\}\subseteq \gl(\calH)$. One can pick the Cartan subalgebra $i\h = \mathrm{span}_{\mathbb{R}}(S_z)$. A convenient orthonormal basis for $\calH$ is then given by $\{\ket{m}\}$, the eigenvectors of $S_z$, which satisfy
\begin{equation}
    S_z \ket{m} = m \ket{m}\,, \qquad m=-S,-S+1,\ldots, S-1, S\,.
\end{equation}
Let us consider the case where $\rho= \ket{m}\bra{m}$ and $O\in i\g$. To use Theorem~\ref{thm:variance-reductive}, we will need to compute $\rho_\g$. Such projection can be easily computed by performing a change of basis of $\g_\C = \mathrm{span}_\C\g$ from the spin matrices to $\{S_0,S_+,S_-\}$, where
\begin{equation}
    S_0 = \frac{1}{\sqrt{c_z}}S_z, \,\,\, S_+ = \frac{1}{\sqrt{c_+}}(S_x+ iS_y), \,\,\, S_- = \frac{1}{\sqrt{c_+}}(S_x- iS_y),\nonumber
\end{equation} 
with the $c$'s being normalization constants such that $\Tr [S_0^2]= 1$ and $\Tr [S_+^\dagger S_+ ]= \Tr [S_-^\dagger S_-] = 1$ (in particular we find that $c_z = \Tr [S_z^2] = \frac{1}{3} S(S+1)(2S+1)$). On the basis $\{\ket{m}\}$ of $\calH$, $S_z$ is diagonal, $S_+$ is upper triangular nilpotent, and $S_-$ is lower triangular nilpotent. A direct calculation shows that $\Tr[S_+\ad \rho]=\Tr[S_-\ad \rho]=0$, and thus
\begin{equation}\label{eq:reduced-spin}
    \rho_\g = \rho_\h =  \frac{1}{c_z}\Tr[S_z \rho] S_z = \frac{m}{c_z} S_z.
\end{equation}
Then, replacing  Eq.~\eqref{eq:reduced-spin} into Theorem~\ref{thm:variance-reductive} and using $\dim(\g) = 3$ leads to
\begin{equation}\label{eq:var-spin}
    \Var_{\thv}[\ell_{\thv}(\rho,O)] = 
     \frac{m^2\Tr[O^2]}{S(S+1)(2S+1)}\,.
\end{equation} 
Note that $S(S+1)$ is exactly the action of the Casimir element on $\calH$. For the case of $O=S_z/S$ (where we normalized the measurement operator so that $\norm{O}=1$), we find that Eq.~\eqref{eq:var-spin} becomes $\Var_{\thv}[\ell_{\thv}(\rho,O)] =\frac{m^2}{3 S^2}$. Hence, we can see that the loss function concentrates depending on the value $m$, i.e., on the $\g$-purity or the generalized entanglement present in the initial state.

The previous example can be used as a starting point to calculate the loss variance for $\rho$ being a sum of weight states. In particular, consider the case where $\g$ is simple. We can always find an orthonormal basis of weight spaces for $\calH$ by simultaneously diagonalizing a Cartan subalgebra $\h$ of $\g$. This means that we have an orthonormal basis $\{\ket{v}\}$ for $\HC$, so that $\ket{v}$ has weight $\lambda_v$, i.e.
\begin{equation}\label{def:weight-basis}
    H_j \ket{v} = \lambda_v(H_j)\ket{v} , \qquad \text{for all }\; H_j\in i\h .
\end{equation} 
Note that the eigenvalues $\lambda_v(H_j)$ are real as $H_j$ are Hermitian.
Then a pure weight state with respect to $\h$ has the form $\rho = \ket{v}\bra{v}$. We say that $\rho$ is a weight state if
it is diagonal with respect to the weight basis, i.e., if 
\begin{equation}\label{eq:rho-convex}
    \rho = \sum_v r_v \ket{v}\bra{v}, \qquad r_v \geq 0 , \;\; \sum_v r_v = 1\,.
\end{equation}
As shown in the Supplemental Information, given an
orthonormal basis $\{H_j\}_{j=1}^{\dim(\h)}$ of $i\h$, one can find that
\begin{equation}\label{eq:convex-weight}
    \rho_\g = \rho_\h = \sum_{j=1}^{\dim (\h)} \lambda_\rho(H_j) H_j, 
    \quad\text{with}\;\;
    \lambda_\rho = \sum_v r_v \lambda_v.
\end{equation} 
Hence,  
\begin{equation}
    \Tr [\rho_\g^2] = \sum_{j=1}^{\dim (\h)} \lambda_\rho(H_j)^2 = \norm{\lambda_\rho}^2.
\end{equation} 
Then, plugging into Theorem~\ref{thm:variance-reductive} yields the following corollary. 
\begin{corollary} \label{cor:rho-general-weight-state}
    Let  $\g$ be a simple Lie algebra with a Cartan subalgebra $\h$. Let $O\in i\g$, and $\rho$ be a density matrix with $[\rho,\h]=0$, that is, a weight state as in Eq.~\eqref{eq:rho-convex}. Then the variance of the loss function is given by
    \begin{equation}\label{eq:variance-convex}
        \Var_{\thv}[\ell_{\thv}(\rho,O)] = \frac{\Tr[O^2]}{\dim (\g)} \norm{\lambda_\rho}^2 \,.
    \end{equation}
\end{corollary}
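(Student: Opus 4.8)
The plan is to read the formula straight off Theorem~\ref{thm:variance-reductive} once the two $\g$-purities appearing there have been evaluated. Since $\g$ is simple, the reductive decomposition \eqref{eq:reductive} consists of a single semisimple ideal with no center, so Theorem~\ref{thm:variance-reductive} collapses to $\Var_{\thv}[\ell_{\thv}(\rho,O)] = \PC_{\g}(\rho)\,\PC_{\g}(O)/\dim(\g)$. The operator factor is immediate: $O\in i\g$ forces $O_{\g}=O$, hence $\PC_{\g}(O)=\Tr[O_{\g}^2]=\Tr[O^2]$. Everything therefore reduces to showing that, for a weight state $\rho$ as in \eqref{eq:rho-convex}, one has $\PC_{\g}(\rho)=\norm{\lambda_\rho}^2$, i.e.\ to establishing Eq.~\eqref{eq:convex-weight}.

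To prove Eq.~\eqref{eq:convex-weight} I would invoke the root-space decomposition of the complexified DLA relative to the chosen Cartan subalgebra, $\g_{\mathbb{C}} = \h_{\mathbb{C}}\oplus\bigoplus_{\alpha}\g_{\alpha}$, where $\g_\alpha = \{X : [H,X]=\alpha(H)X \ \text{for all}\ H\in\h\}$ with $\alpha\neq 0$. This decomposition is orthogonal with respect to the Hilbert--Schmidt inner product, because for $H\in i\h$ the map $X\mapsto[H,X]$ is Hermitian and the $\g_\alpha$ are its eigenspaces for distinct real eigenvalues $\alpha(H)$. In the weight basis $\{\ket{v}\}$ of \eqref{def:weight-basis}, an operator $E_\alpha\in\g_\alpha$ maps the weight space of $\lambda_v$ into that of $\lambda_v+\alpha$; since $\alpha\neq 0$ these subspaces are orthogonal, so every diagonal matrix element $\bra{v}E_\alpha\ket{v}$ vanishes. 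The hypothesis $[\rho,\h]=0$ makes $\rho=\sum_v r_v\dya{v}$ diagonal in this basis, so its overlap with any root vector is $\Tr[E_\alpha^\dagger\rho]=\sum_v r_v\,\overline{\bra{v}E_\alpha\ket{v}}=0$. Hence the $\g_\alpha$-component of $\rho_\g$ vanishes for every root, and $\rho_\g=\rho_\h$. Expanding in an orthonormal (over $\mathbb{R}$) basis $\{H_j\}_{j=1}^{\dim(\h)}$ of $i\h$, which exists because $i\h$ is a commuting family of Hermitian operators simultaneously diagonalized by the weight basis, gives $\rho_\h=\sum_j\Tr[H_j\rho]\,H_j$ with $\Tr[H_j\rho]=\sum_v r_v\bra{v}H_j\ket{v}=\sum_v r_v\lambda_v(H_j)=\lambda_\rho(H_j)$. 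Orthonormality then yields $\PC_\g(\rho)=\Tr[\rho_\g^2]=\sum_j\lambda_\rho(H_j)^2=\norm{\lambda_\rho}^2$. Substituting this together with $\PC_\g(O)=\Tr[O^2]$ into the simple-$\g$ form of Theorem~\ref{thm:variance-reductive} produces \eqref{eq:variance-convex}.

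Most of the work is already done by Theorem~\ref{thm:variance-reductive}, so I do not anticipate a genuine obstacle; the one step needing care is the structure-theoretic argument that $\rho_\g=\rho_\h$. There one must keep track of the fact that the projection $(\cdot)_\g$ is taken into the complexification $\g_{\mathbb{C}}$ while the basis used to read off the purity is drawn from $i\g$, and check that a Hermitian basis adapted to the root-space decomposition (built from combinations of $E_\alpha$ and $E_{-\alpha}$) is still off-diagonal in the weight basis, so that only the Cartan part of $\rho$ survives the projection. A convenient sanity check is the pure-weight-state case $r_v\in\{0,1\}$: taking $\rho=\dya{m}$ in the spin-$S$ representation of $\mathfrak{su}(2)$, one has $\lambda_\rho(S_0)=m/\sqrt{c_z}$, and Corollary~\ref{cor:rho-general-weight-state} then returns $\Var_{\thv}[\ell_{\thv}(\rho,O)]=\Tr[O^2]\,m^2/\bigl(S(S+1)(2S+1)\bigr)$, recovering Eq.~\eqref{eq:reduced-spin} and Eq.~\eqref{eq:var-spin}.
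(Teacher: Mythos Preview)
Your proposal is correct and follows essentially the same route as the paper: both plug into Theorem~\ref{thm:variance-reductive} for simple $\g$, use $O\in i\g$ to get $\PC_{\g}(O)=\Tr[O^2]$, and the only substantive step is showing $\rho_\g=\rho_\h$ by arguing that the non-Cartan basis elements of $i\g$ (built from root vectors) are off-diagonal in the weight basis and hence have vanishing trace against the diagonal $\rho$. The paper phrases this as ``$B_j\notin i\h$ are sums of strictly triangular matrices,'' while you phrase it as ``$E_\alpha$ maps the $\lambda_v$-weight space into the $\lambda_v+\alpha$-weight space,'' but these are the same observation.
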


Here, we find it important to make three remarks. First, by symmetry, the result of
Corollary~\ref{cor:rho-general-weight-state} also holds when the roles of $\rho$ and $O$ are reversed, i.e., when $\rho\in i\g$ and
$[O,\h]=0$. Second, the variance does not depend on the choice of Cartan subalgebra, only on the norm of the weight $\lambda_\rho$ corresponding to the weight state $\rho$, because all Cartan subalgebras are conjugate to each other. Finally, we note that among all weights $\lambda_v$ in an irreducible representation $\HC$, the norm $\norm{\lambda_v}$ is maximized when $\ket{v}$ is the highest weight vector $\ket{hw}$.
Thus, Eq.~\eqref{eq:variance-convex} implies that among all weight states as in Eq.~\eqref{eq:rho-convex}, the variance is maximized when $\rho$ is a highest weight state:
\begin{align}
    &\argmax_{\rho \text{ as in Eq.~\eqref{eq:rho-convex}}} \Var_{\thv}[\ell_{\thv}(\rho,O)]=\dya{hw}\,.
\end{align}

From here, we can derive an estimate for the variance as follows. Consider the bilinear form on $\g$ given by $(A,B)=\Tr_{\C^N} [AB]$ where the trace is taken over the defining representation $\C^N$ of $\g=\su(N)$, $\mathfrak{so}(N)$ or $\mathfrak{sp}(N)$; this is the standard bilinear form normalized so that $(H,H)=2$ or $4$ for simple coroots $H\in i\h$ (for $\g=\su(N)$, these are the diagonal matrices with $0,\dots,0,1,-1,0,\dots,0$ on the diagonal). The highest weight $\lambda_{hw}$ of every irreducible finite-dimensional representation $\HC$ of $\g$ has the property that $\lambda_{hw}(H)$ is a non-negative integer for any simple coroot $H$. We assume that these values will not grow with the system size $n$. Noting that $\inprod{A,B} = \frac{\dim(\HC)}{N} (A,B)$ for Herimitian operators $A,B\in i\g$, we see that the basis vectors $H_j$ of $i\h$ scale as $\sqrt{N/2^n}$ times the simple coroots of $i\h$. Therefore,
\begin{align}
\norm{\lambda_{hw}}^2 \in \Theta\left( \frac{N}{2^n} \right) = \Theta\left( \frac{\sqrt{\dim(\g)}}{2^n} \right).
\end{align}
Plugging this in Eq.~\eqref{eq:variance-convex}, we obtain the upper bound
\begin{equation}\label{eq:variance-convex2}
 \Var_{\thv}[\ell_{\thv}(\rho,O)] \in \OC\left( \frac{\Tr[O^2]}{2^n \sqrt{\dim(\g)}} \right).
\end{equation} 
Hence, we see that if $\Tr[O^2]\leq 2^n$ and $\dim(\g)\in\Omega(b^n)$ with $b>1$, then $\Var_{\thv}[\ell_{\thv}(\rho,O)] \in \OC\left( \frac{1}{\sqrt{b^n}} \right)$; indicating that if the DLA is simple and of exponential dimension, then the loss will always have a BP regardless of the measurement operator and initial state.

\section*{DATA AVAILABILITY}
Data generated and analyzed during the current study are
available from the corresponding author upon reasonable
request.

\bibliography{quantum}

\section*{Acknowledgments}

We thank Zoe Holmes, Mark Wilde and Akram Touil for fruitful discussions about superoperator norms. We are extremely grateful to Lukasz Cincio for his help with numerical calculations. MR and COM were supported by the Laboratory Directed Research and Development Program and Mathematics for Artificial Reasoning for Scientific Discovery investment at the Pacific Northwest National Laboratory, a multiprogram national laboratory operated by Battelle for the U.S. Department of Energy under Contract DE-AC05- 76RLO1830. 
BNB was supported in part by a Simons Foundation grant No. 584741. 
FS and MC acknowledge support by the Laboratory Directed Research and Development program of Los Alamos National Laboratory (LANL) under project numbers 20230049DR and 20230527ECR. 
AFK was supported in part by the National Science Foundation under award No. 1818914: PFCQC: STAQ: Software-Tailored Architecture for Quantum co-design. ML was supported by  the Center for Nonlinear Studies at LANL.  MC was initially supported by LANL ASC Beyond Moore’s Law project.
\\

\section*{AUTHOR CONTRIBUTIONS}
The project was conceived by BNB, COM, ML and MC. Theoretical results were derived by BNB, MR and MC, with some independently verified by ML and MC. Numerical simulations were performed by FS and MC. The manuscript was mainly written by MC, MR, BNB and AFK. All authors contributed to the manuscript review process.

\section*{COMPETING INTERESTS}
The authors declare no competing interests.

\clearpage
\newpage
\onecolumngrid

\setcounter{lemma}{0}
\setcounter{proposition}{0}
\setcounter{theorem}{0}
\setcounter{figure}{0}

\renewcommand\appendixname{Supplementary Note}
\renewcommand\figurename{Supp. Fig.}

\section*{Supplemental Information for ``A Lie algebraic theory of Barren Plateaus for Deep Parameterized Quantum Circuits''}

This Supplemental Information contains additional details and complete proofs of our main results.

\section{Proof of Proposition 1}

In this section, we give the proof of Proposition 1, which we recall for convenience of the reader:
\setcounter{theorem}{2}

\begin{proposition} \label{thm:sums-E-and-V-SI}
Let $O$ be in $i\g$ such that $O= \sum_{i=1}^k O_i$ respect the decomposition Eq.~(6) of $\g$ into simple and abelian components.  Then, for any $\rho$, the expectation and variance of the loss function split into sums as
    \begin{equation}\begin{split}
        \bbE_{\thv}[\ell_{\thv}(\rho,O)] &= \sum_{i=1}^k \bbE_{G_i}[\ell_{i}(\rho,O_i)], \\
        \Var_{\thv}[\ell_{\thv}(\rho,O)] &= \sum_{i=1}^k \Var_{G_i}[\ell_{_i}(\rho,O_i)] ,
    \end{split}\end{equation} where $\ell_{i}(\rho,O_i) = \Tr[(1,\dots,U_i,\dots, 1) \rho (1,\dots,U_{i}, \dots, 1)^\dagger O_i]$ 
    corresponds to the loss in the $G_i$-component of $G$.
\end{proposition}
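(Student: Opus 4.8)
The plan is to reduce the statement to the factorization of the Haar measure on the compact group $G$ together with the locality/commutativity structure of the reductive decomposition of $\g$. Since $\g = \g_1 \oplus \cdots \oplus \g_k$ with the $\g_i$ commuting ideals, exponentiating gives $G = G_1 \times \cdots \times G_k$ as compact Lie groups, and the normalized Haar measure factorizes as $d\mu = d\mu_1 \cdots d\mu_k$. Thus $\bbE_{\thv}[\,\cdot\,] = \int_{G_1}\cdots\int_{G_k} (\,\cdot\,)\, d\mu_1\cdots d\mu_k$, and more importantly $\bbE_{\thv} = \bbE_{G_1}\otimes\cdots\otimes\bbE_{G_k}$ acts as a product of independent averages. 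This is the structural input I would invoke first; the 2-design assumption is only needed to identify $\bbE_{\thv}$ with the Haar average over $G$, which the excerpt already grants.

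Next I would write $O = \sum_{i=1}^k O_i$ with $O_i \in i\g_i$ and use that for $U = (U_1,\dots,U_k) \in G$ and $j\ne i$, the factor $U_j$ commutes with $O_i$ (since $\g_j$ and $\g_i$ commute, so does $e^{\g_j}$ with $i\g_i$). Hence $U^\dagger O_i U = (1,\dots,U_i,\dots,1)^\dagger O_i (1,\dots,U_i,\dots,1)$, which is exactly the definition of $\ell_i(\rho,O_i)$ after tracing against $\rho$; here I use linearity of the trace. This gives the decomposition $\ell_{\thv}(\rho,O) = \sum_{i=1}^k \ell_i(\rho, O_i)$ pointwise in $\thv$, where $\ell_i$ depends only on the block $U_i$. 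Taking $\bbE_{\thv}$ and using linearity of expectation, plus the fact that $\bbE_{\thv}[\ell_i] = \bbE_{G_i}[\ell_i]$ because $\ell_i$ depends only on $U_i$ and the Haar measure factorizes (the other integrals $\int_{G_j} d\mu_j = 1$), yields the expectation formula immediately.

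For the variance, I would expand $\Var_{\thv}[\sum_i \ell_i] = \sum_{i} \Var_{\thv}[\ell_i] + \sum_{i\ne i'} \mathrm{Cov}_{\thv}(\ell_i, \ell_{i'})$ and argue that all cross-covariances vanish. The key point is statistical independence of the blocks $U_i$ under the product Haar measure: since $\ell_i$ is a function of $U_i$ alone and $\ell_{i'}$ of $U_{i'}$ alone with $i \ne i'$, and the $U_i$ are Haar-distributed independently, $\bbE_{\thv}[\ell_i \ell_{i'}] = \bbE_{G_i}[\ell_i]\,\bbE_{G_{i'}}[\ell_{i'}] = \bbE_{\thv}[\ell_i]\,\bbE_{\thv}[\ell_{i'}]$, so the covariance is zero. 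Then $\Var_{\thv}[\ell_i] = \Var_{G_i}[\ell_i]$ by the same marginalization argument as for the mean, giving $\Var_{\thv}[\ell_{\thv}(\rho,O)] = \sum_{i=1}^k \Var_{G_i}[\ell_i(\rho,O_i)]$. Finally I would note the symmetric statement (with $\rho \in i\g$ and $O$ arbitrary Hermitian) follows by the identical argument applied to $\rho = \sum_i \rho_i$ using cyclicity of the trace.

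The main obstacle — really the only nontrivial point — is justifying the factorization $G = G_1 \times \cdots \times G_k$ with product Haar measure and, relatedly, that $\bbE_{\thv}$ genuinely acts as the product of independent Haar averages rather than merely as an iterated integral over a possibly-entangled distribution on $G$. This rests on the 2-design hypothesis (so $\bbE_{\thv}$ agrees with Haar on $G$ for degree-$\leq 2$ polynomials, which is all that enters $\ell$ and $\ell^2$) combined with the fact that a compact connected group with Lie algebra a direct sum of ideals is a direct product up to finite covering, and Haar measure is insensitive to such covers. Everything else — the commutativity $[e^{\g_j}, i\g_i] = 0$ for $i\ne j$, linearity of trace and expectation, the covariance expansion — is routine and I would present it briefly.
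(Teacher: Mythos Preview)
Your proposal is correct and follows essentially the same approach as the paper's proof: decompose $O=\sum_i O_i$, use that $O_i\in i\g_i$ commutes with $G_j$ for $j\neq i$ to reduce each summand of the loss to a function of $U_i$ alone, and then invoke the product structure of the Haar measure to split expectation and variance. The only cosmetic difference is that you explicitly expand the variance into variances plus cross-covariances and kill the latter by independence, whereas the paper simply cites the standard fact that the variance of a sum of independent random variables equals the sum of their variances.
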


While in the main text we have omitted the parameter dependence of $U_i$ and $\ell_i$ for ease of notation, in the proof below we will make it explicit again. That is, we will denote $U_i:=U_i({\thv_i})$ and $\ell_i:=\ell_{\thv_i}$.

\begin{proof}\let\qed\relax
For simplicity of notation, it suffices to demonstrate the case $k=2$, as the general case is analogous. Since $O\in i\g$ and $U(\thv) \in G$, we have
\[
    O = O_1 + O_2, \quad U(\thv) = (U_1({\thv_1}), U_2({\thv_2})) .
\] 
Here $\thv_i$ $(i=1,2)$ are coordinates on the Lie group $G_i=e^{\g_i}$, and $\thv=(\thv_1,\thv_2)$ are coordinates on the Cartesian product $G = G_1\times G_2$.
The loss function is then
\begin{equation}
    \ell_{\thv}(\rho, O) = \Tr[\rho U(\thv)^\dagger O_1 U(\thv)] +\Tr[\rho U(\thv)^\dagger O_2 U(\thv)] .
\end{equation} 
Notice that since $O_1\in \g_1$, it commutes with every element of $G_2$, and so
\begin{align*}
    \Tr[ \rho U(\thv)^\dagger O_1 U(\thv)] &= \Tr[ \rho \,(U_1({\thv_1}), U_2({\thv_2}))^\dagger O_1 (U_1({\thv_1}), U_2({\thv_2})) ] \\
    &= \Tr[ \rho \,(U_1({\thv_1})^\dagger, U_2({\thv_2})^\dagger) (1, U_2({\thv_2})) O_1 (U_1({\thv_1}), 1)] \\
    &= \Tr[ \rho \,(U_1({\thv_1}), 1)^\dagger O_1 (U_1({\thv_1}), 1)] \\
    & =: \ell_{\thv_1}(\rho, O_1) .
\end{align*} 
Hence, the loss function is expressed as a sum of independent random variables depending on $\thv_1,\thv_2$:
\begin{equation}
\ell_{\thv}(\rho, O) = \ell_{\thv_1}(\rho, O_1) + \ell_{\thv_2}(\rho, O_2) .
\end{equation} 
Since the expectation (respectively, variance) of a sum of independent random variables is equal to the sum of expectations (respectively, variances), we obtain:
\begin{equation}\begin{split}
    \bbE_{\thv} [\ell_{\thv}(\rho, O)] &= \bbE_{\thv} [\ell_{\thv_1}(\rho, O_1)] + \bbE_{\thv} [\ell_{\thv_2}(\rho, O_2)] ,\\ 
    \text{Var}_{\thv} [\ell_{\thv}(\rho, O)) &= \text{Var}_{\thv} [\ell_{\thv_1}(\rho, O_1)] + \text{Var}_{\thv} [\ell_{\thv_2}(\rho, O_2)].
\end{split}\end{equation} 
But since we are working with products of the normalized Haar measure and $\ell_{\thv_i}(\rho,O_i)$ is constant with respect to $\thv_j$ for $j\neq i$, 
we have $\bbE_{\thv}[\ell_{\thv_i}(\rho, O)] = \bbE_{\thv_i}[\ell_{\thv_i}(\rho, O_i)]$, and likewise for the variance.

\end{proof}

\section{Proof of Theorem 1 and of Eq.~(28)}
\label{sec:theorem_1_proof}

Let us now present the proof of Theorem 1, which we  restate for convenience. 
\setcounter{theorem}{0}

\begin{theorem}\label{thm:variance-reductive-SI}
    Suppose that $O\in i\g$ or $\rho \in i\g$, where the DLA $\g$ is as in Eq.~(6).
    Then the mean of the loss function vanishes for the semisimple component $\g_1 \oplus \dots\oplus \g_{k-1}$ and leaves only abelian     contributions:   
    \begin{equation}\label{eq:exq-SI}
       \bbE_{\thv}[\ell_{\thv}(\rho,O)] = \Tr [\rho_{\g_k} O_{\g_k}] \,.
    \end{equation}
Conversely, the variance of the loss function vanishes for the center $\g_k$ and leaves only simple contributions:
\begin{align}\label{eq:var-SI}
    \Var_{\thv} [\ell_{\thv} (\rho,O)] = \sum_{j=1}^{k-1} \frac{\PC_{\g_j}(\rho) \PC_{\g_j}(O)}{\dim (\g_j)} \,.
\end{align}       
\end{theorem}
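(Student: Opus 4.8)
The plan is to reduce everything, via Proposition~\ref{thm:sums-E-and-V-SI}, to a computation on each simple or abelian factor $G_j=e^{\g_j}$ separately, and then to evaluate the relevant Haar integrals over $G_j$ using the standard fact that for a compact group the twirl $\int R(g)^{\otimes t}\,d\mu(g)$ is the orthogonal projector onto the $R$-invariant subspace of the $t$-th tensor power. The representation I would use is the adjoint action of $G_j$ on $i\g_j$: conjugation by $U_j\in G_j$ is Hilbert--Schmidt unitary and preserves both $i\g_j$ and its orthogonal complement (it even acts trivially on $i\g_i$ for $i\neq j$, since the ideals commute), which both justifies the per-component reduction and lets me swap conjugation with the projections $O_{\g_j},\rho_{\g_j}$. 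I treat the case $O\in i\g$ in detail; the case $\rho\in i\g$ is the mirror argument, writing $\ell_j=\Tr[\rho_{\g_j}\,U_j^\dagger O_{\g_j}\,U_j]$ and twirling $O_{\g_j}$ against $\rho_{\g_j}^{\otimes 2}$.

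\textbf{First moment.} Writing $O=\sum_i O_i$ with $O_i\in i\g_i$, the component loss is $\ell_j(\rho,O_j)=\Tr[\rho\,U_j^\dagger O_j U_j]$, and Haar-averaging over $G_j$ replaces $U_j^\dagger O_j U_j$ by the projection of $O_j$ onto the adjoint-invariants of $i\g_j$. For $j\le k-1$ the algebra $\g_j$ is simple and hence centerless, so this projection vanishes and $\bbE_{G_j}[\ell_j]=0$; for $j=k$ the adjoint action of the abelian $\g_k$ is trivial, so $U_k^\dagger O_k U_k=O_k$ and $\bbE_{G_k}[\ell_k]=\Tr[\rho O_k]=\Tr[\rho_{\g_k}O_{\g_k}]$. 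Summing over $j$ through Proposition~\ref{thm:sums-E-and-V-SI} gives Eq.~\eqref{eq:exq-SI}.

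\textbf{Second moment and variance.} For a simple factor, $\bbE_{G_j}[\ell_j^2]=\Tr[(\rho\otimes\rho)\,T_j]$ with $T_j=\int_{G_j}(U_j^\dagger O_j U_j)^{\otimes 2}\,d\mu_j$ the twirl of $O_j^{\otimes 2}$ under the diagonal adjoint action on $(i\g_j)^{\otimes 2}$. The key input is that for $\g_j$ simple the adjoint representation is irreducible (and self-dual), so by Schur's lemma the invariants in $(i\g_j)^{\otimes 2}$ form a one-dimensional space spanned by the Casimir tensor $\Phi_j=\sum_a B_a\otimes B_a$ for an orthonormal basis $\{B_a\}$ of $i\g_j$. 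Hence $T_j=\big(\langle\Phi_j,O_j^{\otimes 2}\rangle/\langle\Phi_j,\Phi_j\rangle\big)\Phi_j$; evaluating $\langle\Phi_j,\Phi_j\rangle=\dim(\g_j)$ and $\langle\Phi_j,O_j^{\otimes 2}\rangle=\sum_a\Tr[B_a O_j]^2=\PC_{\g_j}(O)$, then contracting with $\rho\otimes\rho$ and using $\sum_a\Tr[B_a\rho]^2=\Tr[\rho_{\g_j}^2]=\PC_{\g_j}(\rho)$, gives $\bbE_{G_j}[\ell_j^2]=\PC_{\g_j}(\rho)\PC_{\g_j}(O)/\dim(\g_j)$. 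Combined with the vanishing first moment this equals $\Var_{G_j}[\ell_j]$ for $j\le k-1$, while $\ell_k$ is constant so $\Var_{G_k}[\ell_k]=0$; a final application of Proposition~\ref{thm:sums-E-and-V-SI} yields Eq.~\eqref{eq:var-SI}.

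\textbf{Main obstacle.} The arithmetic (the twirl coefficient, the bookkeeping of the $\g_j$-projections, the reduction of $\PC_{\g}$ to the per-component $\PC_{\g_j}$) is routine; the real content is the representation theory, namely that $\mathrm{ad}$ is irreducible for a simple Lie algebra and that the trivial representation sits in $\mathrm{ad}\otimes\mathrm{ad}$ with multiplicity exactly one. That multiplicity-one statement is precisely what collapses $T_j$ to a single term proportional to $\Phi_j$ and produces the clean factor $1/\dim(\g_j)$. A secondary technical point to get right is that conjugation by $U_j$ leaves $i\g_j$ and its Hilbert--Schmidt complement invariant, which makes the substitution $O\mapsto O_{\g_j}$ (resp.\ $\rho\mapsto\rho_{\g_j}$) legitimate inside each component integral and underpins the symmetric treatment of the two hypotheses $O\in i\g$ and $\rho\in i\g$.
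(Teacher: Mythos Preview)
Your proposal is correct and follows essentially the same route as the paper: reduce to individual factors via Proposition~\ref{thm:sums-E-and-V-SI}, identify the Haar twirl with the orthogonal projector onto invariants, use that the center of a simple $\g_j$ is trivial for the first moment, and use that $(\g_j\otimes\g_j)^{G_j}$ is one-dimensional and spanned by the split Casimir $\Phi_j$ for the second moment. The paper packages the same argument in terms of moment operators $\calM_G^{(t)}$ acting on $\gl(\HC^{\otimes t})$ and then restricts to the subrepresentation $\g$ (resp.\ $\g\otimes\g$), whereas you work directly inside the adjoint representation from the start; the computations of $\langle\Phi_j,\Phi_j\rangle$, $\langle\Phi_j,O_j^{\otimes 2}\rangle$, and the contraction with $\rho^{\otimes 2}$ are identical.
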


\begin{proof}\let\qed\relax
By symmetry, we can assume that $O\in i\g$. Due to Proposition 1 of the main text, both the expectation and variance split as sums over the components of $\g$. Therefore, it is enough to consider the cases where $\g$ is simple or abelian. The abelian case follows from the next lemma, which is obvious but is included here for completeness.

\begin{lemma} \label{lem:variance-g-abelian2}
Let either $[O,\g] =0$ or $[\rho, \g]=0$. Then the loss function $\ell_{\thv}(\rho,O)$ is constant. In particular, its mean and variance are given by: 
\begin{equation}
\begin{split}
    \bbE_{\thv}[\ell_{\thv}(\rho,O)] &= \Tr [\rho O] \,, \\
    \Var_{\thv}[\ell_{\thv}(\rho,O)] &= 0 \,.
\end{split}
\end{equation}
\end{lemma}
\begin{proof}\let\qed\relax
For concreteness, suppose that $[O,\g] =0$. Then $O$ commutes with $G$. Hence, for any $U(\thv) \in G$,
    \[
        \ell_{\thv}(\rho,O) = \Tr[U(\thv) \rho U(\thv)^{\dagger} O] = \Tr[\rho U(\thv)^{\dagger} O U(\thv)] = \Tr[\rho O].
    \] 
    This function is constant with respect to $\thv$ and the result follows.
\end{proof}

From now on, let us restrict to the case where $\g$ is simple. By definition, the expectation of the loss function is 
\begin{equation}\label{eqn:expectation}
    \bbE_{\thv} [\ell_{\thv}(\rho, O)] = \int_G d\mu(U(\thv)) \, \Tr [U(\thv) \rho U(\thv)^\dagger O] ,
\end{equation}
where the integral is taken with respect to the Haar measure on $G$. Using the linearity of the trace, we can rewrite this as
\begin{equation} \label{eqn:expectation via first moment operator}
    \bbE_{\thv} [\ell_{\thv}(\rho, O)] = \Tr [\calM_G^{(1)}(\rho) O]
    =\inprod{\calM_G^{(1)}(\rho), O},
\end{equation}
where the first moment operator $\calM_G^{(1)}$ is defined by (cf.\ Eq.~(19) of the main text):
\begin{equation} \label{eqn:first moment operator}
\calM_G^{(1)}(\rho) = \int_G d\mu(U(\thv)) \, U(\thv) \rho U(\thv)^\dagger .
\end{equation}
On the right side of Eq.~\eqref{eqn:expectation via first moment operator}, we have expressed the trace in terms of
the Hilbert--Schmidt inner product $\inprod{A,B} = \Tr [A^\dagger B]$, using that $\calM_G^{(1)}(\rho)$ is Hermitian.
If we use the cyclicity of trace in Eq.~\eqref{eqn:expectation} to replace $\Tr [U(\thv) \rho U(\thv)^\dagger O]$ with $\Tr[\rho U(\thv)^{\dagger} O U(\thv)]$ in Eq.~\eqref{eqn:expectation} and the invariance of the Haar measure under $U\mapsto U\ad$, we can also derive
\begin{equation} \label{eqn:expectation via first moment operator2}
    \bbE_{\thv} [\ell_{\thv}(\rho, O)] = \Tr [\rho \calM_G^{(1)}(O)]
    =\inprod{\rho, \calM_G^{(1)}(O)}.
\end{equation}

More generally, let us recall the $t$-moment operator (cf.\ Eq.~(19)):
\begin{equation} \label{eqn:t moment operator}
\calM_G^{(t)}(A) = \int_G d\mu(U(\thv)) \, U(\thv)^{\otimes t} A (U(\thv)^\dagger)^{\otimes t} .
\end{equation}
Here $A$ is any linear operator acting on the $t$-th tensor power of the Hilbert space $\HC^{\otimes t}$.
Below, we will apply it in the particular case $t=2$ and $A=O \otimes O$.
A similar calculation as above, using the linearity and cyclicity of the trace, shows that the linear operator $\calM_G^{(t)}$ is self-adjoint with respect to the Hilbert--Schmidt inner product on $\gl(\HC^{\otimes t})$:
\begin{equation}\label{eqn:t moment self-adjoint}
\begin{split}
\inprod{\calM_G^{(t)}(A), B} 
&= \int_G d\mu(U(\thv)) \, \Tr [ U(\thv)^{\otimes t} A^\dagger (U(\thv)^\dagger)^{\otimes t} B] \\
&= \int_G d\mu(U(\thv)) \, \Tr [ A^\dagger (U(\thv)^\dagger)^{\otimes t} B U(\thv)^{\otimes t} ] \\
&= \inprod{A,\calM_G^{(t)}(B)} .
\end{split}
\end{equation}

The next lemma, which follows from the left-invariance of the Haar measure, is the key ingredient in the proof of the theorem.
The result of the lemma is well known and forms the foundation for the Weingarten Calculus (see Ref.~\cite{collins2022weingarten}
and the references therein).

\begin{lemma} \label{lem:moments invariance}
The $t$-moment operator is an orthogonal projection from the space of linear operators $\gl(\HC^{\otimes t})$ onto its subspace $\gl(\HC^{\otimes t})^G$ of $G$-invariants,
where $G$ acts on $\gl(\HC^{\otimes t})$ by conjugation: $A \mapsto g^{\otimes t} A (g^\dagger)^{\otimes t}$ for $g\in G$. 
\end{lemma}
\begin{proof}\let\qed\relax
By the left-invariance of the Haar measure, we have:
\begin{equation} \label{eqn:t moment operator2}
g^{\otimes t} \calM_G^{(t)}(A) (g^\dagger)^{\otimes t} = \int_G d\mu(U(\thv)) \, (gU(\thv))^{\otimes t} A (g(U(\thv))^\dagger)^{\otimes t} 
= \calM_G^{(t)}(A).
\end{equation}
Hence, $\calM_G^{(t)}(A)$ is $G$-invariant, i.e., fixed by every $g\in G$.
Conversely, if $A$ is $G$-invariant, then it commutes with the action of $G$, and we have
$\calM_G^{(t)}(A) = A$ as in the proof of Lemma~\ref{lem:variance-g-abelian2}. Therefore, $\calM_G^{(t)}$ is a projection. 
Since $\calM_G^{(t)}$ is self-adjoint, it is an orthogonal projection.
\end{proof}

The Haar measure on a compact Lie group $G$ is both left and right invariant. Similarly to Eq.~\eqref{eqn:t moment operator2}, we also have
\begin{equation} \label{eqn:t moment operator2r}
\calM_G^{(t)} \left( g^{\otimes t} A (g^\dagger)^{\otimes t} \right) = \int_G d\mu(U(\thv)) \, (U(\thv)g)^{\otimes t} A ((U(\thv)g)^\dagger)^{\otimes t} 
= \calM_G^{(t)}(A).
\end{equation}
This means that if we view $\calM_G^{(t)}$ as a linear transformation
\begin{equation} \label{eqn:t moment operator3}
\calM_G^{(t)} \colon \gl(\HC^{\otimes t}) \to \gl(\HC^{\otimes t})^G ,
\end{equation}
then it is a homomorphism of representations of $G$, i.e., it commutes with the $G$-actions on both sides of the equation.
Moreover, due to the unitarity and complete reducibility of the representation of $G$ on $\gl(\HC^{\otimes t})$,
the latter splits as a direct sum of $\gl(\HC^{\otimes t})^G$ and its orthogonal complement.

Now let us get back to the proof of the theorem in the case when $\g$ is simple and $O\in i\g$.
Our construction of the DLA $\g$ as a subalgebra of $\gl(\HC)$ is justified by the fact that any nontrivial representation of a simple Lie algebra is faithful. 
Moreover, we are assuming that the Lie group $G=e^\g$ has a unitary representation on $\HC$, which means that $\g\subseteq\u(\HC)$.
As $\g$ is a subalgebra of $\gl(\HC)$, it is also a subrepresentation for the adjoint action of $G$.
Then the restriction of $\calM_G^{(1)}$ to $\g$ gives a homomorphism
\begin{equation} \label{eqn:t moment operator4}
\calM_G^{(1)} \colon \g \to \g^G .
\end{equation}
But $\g^G$ is the center of $\g$, which is trivial. Therefore, $\calM_G^{(1)}(O)=0$, proving that the expectation of the
loss function is zero.

Now we find the variance of the loss function. 
Using that $\Tr[A \otimes B] = \Tr[A] \Tr[B]$, we have:
\begin{equation} \label{eqn:variance2}
\begin{split}
\Var_{\thv} [\ell_{\thv}(\rho, O)] &= \bbE_{\thv}  [\ell_{\thv}(\rho, O)^2] \\
&= \int_{G} d\mu(U(\thv)) \, \paran{\Tr[ U(\thv) \rho U(\thv)^\dagger O] }^2  \\
    &= \int_{G} d\mu(U(\thv)) \, \Tr[ U(\thv)^{\otimes 2} \rho^{\otimes 2} (U(\thv)^\dagger)^{\otimes 2} O^{\otimes 2} ] \\
    &= \Tr [\calM^{(2)}(\rho^{\otimes 2}) O^{\otimes 2}] \,.
\end{split}
\end{equation}
As above, we can express the last equation in terms of the Hilbert--Schmidt inner product:
\begin{equation} \label{eqn:variance3}
\Var_{\thv} [\ell_{\thv}(\rho, O)] = \inprod{\calM^{(2)}(\rho^{\otimes 2}), O^{\otimes 2}}
= \inprod{\rho^{\otimes 2}, \calM^{(2)}(O^{\otimes 2})}.
\end{equation}

Recall that $O\in i\g$, so $O^{\otimes 2} \in\g\otimes\g$.
Since $\g\subset\gl(\HC)$, we have $\g\otimes\g\subset\gl(\HC)\otimes\gl(\HC) \cong \gl(\HC^{\otimes 2})$ is a subrepresentation for the adjoint action of $G$.
As above, the restriction of the homomorphism $\calM_G^{(2)}$ to $\g^{\otimes 2}$ gives a homomorphism
\begin{equation} \label{eqn:t moment operator5}
\calM_G^{(2)} \colon \g\otimes\g \to (\g\otimes\g)^G ,
\end{equation}
which is an orthogonal projection.
The following lemma is well known (see, e.g., \cite{kirillov2008introduction}, Proposition 6.54 and the discussion after it).

\begin{lemma} \label{lem:casimir}
Let $\g$ be a simple subalgebra of $\u(\HC) \cong \u(N)$, where $\HC\cong\mathbb{C}^N$ is an $N$-dimensional Hilbert space. Then the space of $G$-invariants $(\g\otimes\g)^G$ is $1$-dimensional and spanned by the split quadratic Casimir element
\begin{equation} \label{eqn:casimir}
C = \sum_{j=1}^{\dim(\g)} B_j \otimes B_j ,
\end{equation}
where $\{B_j\}_{j=1}^{\dim(\g)}$ is an orthonormal basis for $i\g$ with respect to the Hilbert--Schmidt inner product.
\end{lemma}
\begin{proof}\let\qed\relax
As $\g\subseteq\u(N)$ and is simple, it is a compact real Lie algebra (i.e., the Lie group $G\subseteq\mbb{U}(N)$ is compact); hence, its complexification $\g_\C$ is a simple Lie algebra whenever $\g$ is simple (see e.g.\ \cite{knapp2013lie}, Chapters IV and VI). 

It is well known that for any simple Lie algebra $\g_\C$ over $\C$, the space of $G$-invariants $(\g_\C\otimes\g_\C)^G = \C C$, where $C$ is given by Eq.~\eqref{eqn:casimir} and
$\{B_j\}_{j=1}^{\dim(\g)}$ is an orthonormal basis for $\g_\C$ with respect to some nondegenerate symmetric $G$-invariant bilinear form $(\cdot,\cdot)$. 
Moreover, $C$ does not depend on the choice of basis for $\g_{\mathbb{C}}$.
These claims follow from Schur's Lemma, because the adjoint action of $\g_\C$ on itself is irreducible
and after we identify $\g_\C\otimes\g_\C \cong \gl(\g_\C)$ as $G$-representations, the $G$-invariant elements correspond to
homomorphisms $\g_\C\to\g_\C$, which are only scalar multiples of the identity.

We take $(A,B)=\Tr[AB]$ to be the trace form (and again by Schur's Lemma, any two such forms are scalar multiples of each other). Now the proof follows from the observation that $\inprod{A,B} = \Tr[AB]$ for all $A,B\in i\g$. 
\end{proof}

Using that $(\g\otimes\g)^G = \mathbb{R} C$, we can compute $\calM^{(2)}(O^{\otimes 2})$ by taking the orthogonal projection of $O^{\otimes 2} \in \g\otimes\g$ onto $C$:
\begin{equation} \label{eqn:second moment operator is 1d orth proj}
    \calM^{(2)}(O^{\otimes 2}) = \frac{\inprod{C,O^{\otimes 2}}}{\inprod{C,C}} C.
\end{equation}
Combining this with Eq.~\eqref{eqn:variance3}, we get:
\begin{equation} \label{eqn:variance4}
\Var_{\thv} [\ell_{\thv}(\rho, O)] 
= \frac{\inprod{\rho^{\otimes 2}, C} \inprod{C,O^{\otimes 2}}}{\inprod{C,C}}
= \frac{\inprod{C,\rho^{\otimes 2}} \inprod{C,O^{\otimes 2}}}{\inprod{C,C}} \,,
\end{equation}
where for the last equality we used that all elements of $\g\otimes\g$ are Hermitian.

Now we compute each of the inner products in the above formula. For the denominator, we have
\begin{equation}
\inprod{C,C} 
= \sum_{i,j=1}^{\dim(\g)} \inprod{B_i \otimes B_i, B_j \otimes B_j}
= \sum_{i,j=1}^{\dim(\g)} \inprod{B_i, B_j} \inprod{B_i,B_j} = \dim(\g) .
\end{equation} 
For the other two terms, we apply the next lemma, which follows from simple linear algebra.

\begin{lemma} \label{lem:moments invariance-2}
For any Hermitian operator $H\in i\u(\HC)$, we have that
\begin{equation}
\inprod{C,H^{\otimes 2}} = \PC_{\g}(H)
\end{equation} 
is the $\mf{g}$-purity of $H$ as defined in Eq.~(7) of the main text.
\end{lemma}
\begin{proof}\let\qed\relax
The orthogonal projection of $H$ into $\g_\C$ is given by
\begin{equation}\label{eq:Hg}
H_\g
= \sum_{j=1}^{\dim(\g)} \inprod{B_j, H} B_j ,
\end{equation}
where $\{B_j\}_{j=1}^{\dim(\g)}$ is any orthonormal basis for $\g_\C$ with respect to the Hilbert--Schmidt inner product $\inprod{\cdot,\cdot}$.
Then
\begin{equation}\label{eq:Hg2}
\inprod{H_\g,H_\g}
= \sum_{j=1}^{\dim(\g)} \abs{\inprod{B_j, H}}^2 .
\end{equation}
Since the restriction of $\inprod{\cdot,\cdot}$ to the space $i\u(\HC)$ of Hermitian operators is positive definite, we can pick the basis vectors $B_j$ to be all in $i\g$. This gives that $H_\g\in i\g$. Hence, $\inprod{H_\g,H_\g} = \Tr[H_\g^2]$, which proves the equality in Eq.~(7).

On the other hand, when all $B_j\in i\g$, we have
\begin{align*}
\inprod{C,H^{\otimes 2}} 
= \sum_{j=1}^{\dim(\g)} \inprod{B_j \otimes B_j, H^{\otimes 2}}
= \sum_{j=1}^{\dim(\g)} \inprod{B_j, H}^2 
= \inprod{H_\g,H_\g},
\end{align*}
because in this case $\inprod{B_j, H} \in\mathbb R$.
\end{proof}
Plugging the result of the above lemma in Eq.~\eqref{eqn:variance4} completes the proof of Theorem 1.
\end{proof}

We end this section with a proof of Eq.~(28). 
Let us pick an orthonormal basis $\{B_j\}_{j=1}^{\dim(\g)}$ for $i\g$, so that
it contains as a subset the basis $\{H_j\}_{j=1}^{\dim(\h)}$ for the Cartan subalgebra $i\h$
and the $B_j$ that are not in $i\h$ are complex linear combinations of raising and lowering operators 
(i.e., root vectors of $\g_\C$ with respect to its Cartan subalgebra $\h_{\mathbb{C}}$). 
Hence, such $B_j\not\in i\h$ are sums of strictly triangular matrices and have zero diagonal with respect to the basis $\{\ket{v}\}$.
On the other hand, by assumption, $\rho$ is diagonal in this basis; therefore, $\Tr[B_j^\dagger \rho] = 0$ for $B_j \not\in i\h$.
Then Eq.~\eqref{eq:Hg} reduces to a sum over the basis of $i\h$:
\begin{equation}\label{eq:g-purity2}
\rho_{\g}=\sum_{j=1}^{\dim( \h)} \Tr[H_j\ad \rho] H_j = \rho_\h.
\end{equation}
Now recall that $H_j\ad = H_j$, and compute using Eqs.~(26), (27):
\begin{equation}\label{def:weight-basis2}
H_j \rho = \sum_v r_v H_j\ket{v}\bra{v} 
= \sum_v r_v \lambda_v(H_j) \ket{v}\bra{v} ,
\end{equation} 
from where
\begin{equation}\label{def:weight-basis3}
\Tr[H_j \rho] = \sum_v r_v \lambda_v(H_j) = \lambda_\rho(H_j).
\end{equation}
Plugging this into Eq.~\eqref{eq:g-purity2} proves Eq.~(28).

\section{Proof of Corollary 1 }
\setcounter{corollary}{0}

Here we present the derivation of Corollary 1, which we now recall. 

\begin{corollary} \label{cor:concentration-SI}
Let $\norm{O}_2^2\leq 2^n$. If either $\dim(\g)$, $1/\PC_{\g}(\rho)$ or $1/\PC_{\g}(O)$ is in $\Omega(b^n)$ with $b> 2$, then the loss has a BP. 
\end{corollary}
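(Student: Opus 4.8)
The plan is to feed elementary a priori bounds on the two $\g$-purities into the exact variance formula of Theorem~\ref{thm:variance-reductive-SI}. Since the corollary is stated in the regime where $\g$ is simple (so that the sum in Eq.~\eqref{eq:var-SI} collapses to a single term), the starting point is $\Var_{\thv}[\ell_{\thv}(\rho,O)] = \PC_{\g}(\rho)\PC_{\g}(O)/\dim(\g)$, and the whole statement follows by separately controlling the three factors.

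The first step is the observation that $\PC_{\g}(H) = \Tr[H_\g^2] = \norm{H_\g}_2^2$ is the \emph{squared} Hilbert--Schmidt norm of an orthogonal projection of $H$ onto $\g_\C$, so by the Pythagorean identity $\PC_{\g}(H) \leq \norm{H}_2^2 = \Tr[H^2]$ for every Hermitian $H$. Applied to the state this gives $\PC_{\g}(\rho) \leq \Tr[\rho^2] \leq 1$ (as $\rho$ is a density matrix), applied to the observable it gives $\PC_{\g}(O) \leq \norm{O}_2^2 \leq 2^n$ by hypothesis, and trivially $\dim(\g) \geq 1$.

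The second step is a three-way case split, in each branch replacing by the above trivial bounds the two factors that are not assumed to be exponentially large. If $\dim(\g)\in\Omega(b^n)$, then $\Var_{\thv}[\ell_{\thv}(\rho,O)]\leq 2^n/\dim(\g)\in\OC((2/b)^n)$. If $1/\PC_{\g}(\rho)\in\Omega(b^n)$, then $\Var_{\thv}[\ell_{\thv}(\rho,O)]\leq \PC_{\g}(\rho)\cdot 2^n\in\OC((2/b)^n)$. If $1/\PC_{\g}(O)\in\Omega(b^n)$, then $\Var_{\thv}[\ell_{\thv}(\rho,O)]\leq \PC_{\g}(O)\in\OC(b^{-n})$, which is in turn in $\OC((2/b)^n)$ since $b^{-n}\leq(2/b)^n$. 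In every case $b>2$ forces $2/b<1$, so the variance vanishes exponentially in $n$, i.e., a BP with base $c=b/2>1$.

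There is no genuine obstacle here: all of the content sits in Theorem~\ref{thm:variance-reductive-SI}, and the only care required is bookkeeping --- tracking which of $\PC_{\g}(\rho)$, $\PC_{\g}(O)$, $\dim(\g)$ is bounded by the constant $1$, which by the $2^n$ coming from $\norm{O}_2^2\leq 2^n$, and which by the assumed $\Omega(b^n)$. In the writeup I would point out that this $2^n$ factor is exactly why the hypothesis $b>2$ is imposed for the $\dim(\g)$ and $\PC_{\g}(\rho)$ cases, whereas $b>1$ would already suffice in the $\PC_{\g}(O)$ case; stating $b>2$ uniformly just gives a single clean hypothesis. For a general reductive $\g=\g_1\oplus\dots\oplus\g_k$ the purity-driven conditions extend verbatim via the monotonicity $\PC_{\g_j}(H)\leq\PC_{\g}(H)$, while the $\dim(\g)$ condition would need to be phrased componentwise.
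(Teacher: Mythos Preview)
Your proof is correct and follows essentially the same approach as the paper: both start from the simple-$\g$ variance formula, use the orthogonal-projection bound $\PC_{\g}(H)\leq\Tr[H^2]$ to get $\PC_{\g}(\rho)\leq 1$ and $\PC_{\g}(O)\leq 2^n$, and then do the same three-way case split with the same arithmetic. Your additional remarks (that $b>1$ already suffices in the $\PC_{\g}(O)$ case, and the componentwise extension to reductive $\g$) go slightly beyond what the paper writes but are correct.
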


\begin{proof}\let\qed\relax
    Let us first consider the case where $\dim(\g)\in\Omega(b^n)$. Then, irrespective of whether  $O$ or $\rho$ are in  $i\g$, we will have  $\PC_{\g}(O)\leq\Tr[O^2]=\norm{O}_2^2\leq 2^n$ and $\PC_{\g}(\rho)\leq 1$. Hence, 
    \begin{align}
        \Var_{\thv}[\ell_{\thv}(\rho,O)] &= \frac{1}{\dim (\g)} \Tr [O_{\g}^2] \Tr [\rho_{\g}^2]\nonumber\\
        &\leq \frac{1}{\dim (\g)} 2^n\in \OC\left(\frac{1}{c^n}\right)\,,
    \end{align}
    with $c=b/2>1$, and hence the loss has a BP.

    Second, let us analyze  the case where $1/\PC_{\g}(\rho)$ is in $\Omega(b^n)$. Irrespective of the size of the DLA, we will have $\frac{\Tr [O_{\g}^2]}{\dim (\g)}\leq 2^n$. Thus, we have $\Var_{\thv}[\ell_{\thv}(\rho,O)]\in \OC(\frac{1}{c^n})$ with $c=b/2>1$. 
    
    Finally,  if $1/\PC_{\g}(O)$ is in $\Omega(b^n)$, we have $\frac{\Tr [\rho_{\g}^2]}{\dim (\g)}\leq 1$, thus $\Var_{\thv}[\ell_{\thv}(\rho,O)]\in \OC(\frac{1}{b^n})$ with $b>2$.
\end{proof}

\section{Proof of Theorem 2}

In this section, we present a derivation of Theorem 2. We will frequently be working with the Schatten $p$-norms on the spaces of operators $\gl(\calH^{\otimes t})$, where $p=1,\infty$. Both are spectral norms: given the singular values $s_i(A)\geq 0$ of $A$, we may express these norms as
\begin{equation} \label{def:schatten 1 and infty norms 2}
    \norm{A}_1 = \sum_{i} s_i(A), \qquad \norm{A}_\infty = \max_i s_i(A) . 
\end{equation}

This induces corresponding operator norms for linear maps $\calM\colon\gl(\calH^{\otimes t}) \to \gl(\calH^{\otimes t})$ by the usual
\begin{equation} \label{def:induced Schatten p norm 2}
    \norm{\calM}_p = \sup_{\substack{A\neq 0\\ A\in \gl(\calH^{\otimes t})}} \frac{\norm{\calM(A)}_p}{\norm{A}_p}.
\end{equation}

Theorem 2 is the $t=2$ case of the following theorem:

\begin{theorem}\label{th:design-layers-SI}
The ensemble of unitaries $\calE_L\subseteq G$ generated by an $L$-layered circuit  $U(\thv)$ will form an $\epsilon$-approximate $G$ $t$-design, when the number of layers $L$ is
\begin{equation}\label{eq:bound2 on L 2}
    L \geq \frac{\log(1/\epsilon)}{\log\paran{1/\norm{\calA^{(t)}_{\calE_1}}_\infty}} ,
\end{equation} where $\calE_1$ is the ensemble generated by a single layer $U_1(\thv_1)$.
\end{theorem}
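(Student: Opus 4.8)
The plan is to reduce the $L$-layer statement to a purely single-layer statement by exploiting that the layers $\thv_1,\dots,\thv_L$ are sampled independently and identically, so that their $t$-th moment superoperators \emph{compose}. Concretely, the first step I would carry out is to establish, starting from the definition~\eqref{def:moment-superop} and the factorization $U(\thv)=U_L(\thv_L)\cdots U_1(\thv_1)$,
\begin{equation}
\calM^{(t)}_{\calE_L}=\bigl(\calM^{(t)}_{\calE_1}\bigr)^{\circ L},
\end{equation}
the $L$-fold composition of the single-layer moment superoperator. This is a Fubini argument: one integrates over $\thv_L,\dots,\thv_1$ one at a time, using that each layer has the same distribution, so that each integration inserts one more copy of $\calM^{(t)}_{\calE_1}$.

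Next I would record two identities relating the single-layer operator to the Haar moment operator. By Lemma~\ref{lem:moments invariance}, $\calM^{(t)}_G$ is the orthogonal projection of $\gl(\calH^{\otimes t})$ onto its subspace of $G$-invariants, and in particular it is idempotent. Moreover, since every single-layer unitary lies in $G$, the right- and left-invariance of the Haar measure (Eqs.~\eqref{eqn:t moment operator2r} and~\eqref{eqn:t moment operator2}) give $\calM^{(t)}_G\circ\calM^{(t)}_{\calE_1}=\calM^{(t)}_G$ and $\calM^{(t)}_{\calE_1}\circ\calM^{(t)}_G=\calM^{(t)}_G$. Writing $A=\calM^{(t)}_{\calE_1}$ and $P=\calM^{(t)}_G$, these relations ($P\circ P=P$ and $P\circ A=A\circ P=P$) yield by a one-line induction on $L$ that $(A-P)^{\circ L}=A^{\circ L}-P$ (for instance $(A-P)^{\circ 2}=A^{\circ 2}-A\circ P-P\circ A+P\circ P=A^{\circ 2}-P$). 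Since $\calA^{(t)}_{\calE_m}=\calM^{(t)}_{\calE_m}-\calM^{(t)}_G=A^{\circ m}-P$, this is exactly
\begin{equation}
\calA^{(t)}_{\calE_L}=\bigl(\calA^{(t)}_{\calE_1}\bigr)^{\circ L}.
\end{equation}

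The remaining steps are elementary. The induced operator norm $\norm{\cdot}_\infty$ of Eq.~\eqref{def:induced Schatten p norm 2} is submultiplicative under composition, so $\norm{\calA^{(t)}_{\calE_L}}_\infty\leq\bigl(\norm{\calA^{(t)}_{\calE_1}}_\infty\bigr)^{L}$. Imposing that the right-hand side be at most $\epsilon$ and taking logarithms — the statement being meaningful, and the displayed inequality~\eqref{eq:bound2 on L 2} well posed, precisely when $\norm{\calA^{(t)}_{\calE_1}}_\infty<1$, in which case dividing through by the negative quantity $\log\norm{\calA^{(t)}_{\calE_1}}_\infty$ reverses the inequality — gives $L\geq\log(1/\epsilon)/\log\paran{1/\norm{\calA^{(t)}_{\calE_1}}_\infty}$, which is the asserted sufficient condition for $\calE_L$ to be an $\epsilon$-approximate $G$ $t$-design. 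Setting $t=2$ recovers Theorem~2.

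I expect the only genuine subtlety to lie in the first two steps: verifying carefully that the $t$-th moment superoperator of a product of i.i.d.\ layers really is the composition of the single-layer moment superoperators, and that $\calE_1\subseteq G$ combined with bi-invariance of the Haar measure collapses any composition involving $P=\calM^{(t)}_G$ back to $P$ itself. Once these are in place, the telescoping identity $(A-P)^{\circ L}=A^{\circ L}-P$ and submultiplicativity of the norm finish the argument immediately, and the bound on $L$ drops out of one logarithm.
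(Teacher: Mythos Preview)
Your argument is correct and mirrors the paper's core mechanism: the paper's Lemma~\ref{lem:L layer to single layer} establishes exactly your identity $\calA_{\calE_L}^{(t)}=(\calA_{\calE_1}^{(t)})^{\circ L}$ via the same absorption relations $\calM_G^{(t)}\circ\calM_{\calE_L}^{(t)}=\calM_{\calE_L}^{(t)}\circ\calM_G^{(t)}=\calM_G^{(t)}$, and then bounds the norm and takes a logarithm just as you do. (The paper in fact obtains the equality $\norm{\calA_{\calE_L}^{(t)}}_\infty=\norm{\calA_{\calE_1}^{(t)}}_\infty^L$ using self-adjointness rather than mere submultiplicativity, but your inequality already suffices for the stated bound.)

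There is, however, one substantive piece you treat as a standing hypothesis that the paper actually proves: that $\norm{\calA_{\calE_1}^{(t)}}_\infty<1$. You correctly note that the displayed bound is only well posed in this case, but the paper does not leave this as an assumption. It shows (Lemma~\ref{lem:spectral radius of moment operator is in unit disc}) that all eigenvalues of $\calM_{\calE_1}^{(t)}$ lie in $[-1,1]$, and then (Lemma~\ref{lem:single layer ensemble}) that on the orthogonal complement $\mathfrak{p}$ of the $G$-invariants the eigenvalues are \emph{strictly} less than~$1$ in absolute value. The latter is the non-trivial step: it uses that the generators of a single layer generate the full DLA, so by compactness of $G$ any $\tilde U\in G$ is reachable in some uniform finite depth $\tilde L$, and a continuity argument then forces $|\langle A,\calM_{\calE_{\tilde L}}^{(t)}(A)\rangle|<1$ for any $A\in\mathfrak{p}$. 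Since $\calM_{\calE_{\tilde L}}^{(t)}=(\calM_{\calE_1}^{(t)})^{\tilde L}$, this pushes the strict inequality down to the single layer. Without this, the theorem could be vacuous; with it, the bound is guaranteed to be finite for any circuit whose generators span the DLA of $G$. Your telescoping argument handles the reduction from $L$ layers to one, but you should supplement it with this strict-contraction step to match the paper's unconditional statement.
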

Note that to the authors' knowledge, this choice of $L$ need not be uniform in $t$.

\begin{proof}\let\qed\relax
First, recall Eq.~\eqref{eqn:t moment operator}, the definition of the $t$-moment operator $\calM^{(t)}_{\calE_L}\colon\gl(\calH^{\otimes t})\to\gl(\calH^{\otimes t}) $ corresponding to an ensemble $\calE_L$
    \begin{equation} \label{def:moment-op}
        \calM_{\calE_L}^{(t)}(A) = \int_{\calE_L} d\mu(U) \, U^{\otimes t} A (U^\dagger)^{\otimes t}\,, \qquad A\in \gl(\calH^{\otimes t}).
    \end{equation}
Also recall that $\calM_{\calE_L}^{(t)}$ is self-adjoint with respect to the Hilbert--Schmidt inner product on $\gl(\HC^{\otimes t})$, and so $\calM_{\calE_L}^{(t)}$ is diagonalizable on $\gl(\HC^{\otimes t})$ with real eigenvalues.

Since ${\calE_L}\subset G$, all elements of ${\calE_L}$ fix the $G$-invariants $\gl(\HC^{\otimes t})^G$. Hence, as in the proofs of Lemmas~\ref{lem:variance-g-abelian2} and \ref{lem:moments invariance}, the operator $\calM_{{\calE_L}}^{(t)}$ acts as the identity on $\gl(\HC^{\otimes t})^G$. Since it is self-adjoint, the orthogonal complement $\mathfrak{p}$ is also an invariant subspace, and so we may write the orthogonal direct sum
\begin{equation}
\gl(\HC^{\otimes t}) =  \gl(\HC^{\otimes t})^G \oplus \mathfrak{p}.
\end{equation}
Recall from Lemma~\ref{lem:moments invariance} that $\calM_{G}^{(t)}$ is the orthogonal projection onto $\gl(\HC^{\otimes t})^G$.
Thus, $\calA_{{\calE_L}}^{(t)} = \calM_{\calE_L}^{(t)} - \calM_{G}^{(t)}$ acts as zero on $\gl(\HC^{\otimes t})^G$ and as 
$\calM_{{\calE_L}}^{(t)}$ on $\mathfrak{p}$.

Now we investigate the eigenvalues of $\calM_{{\calE_L}}^{(t)}$ on $\mathfrak{p}$.

\begin{lemma} \label{lem:spectral radius of moment operator is in unit disc}
    Every eigenvalue $\lambda$ of $\calM_{{\calE_L}}^{(t)}$ has $\abs{\lambda}\leq 1$.
\end{lemma}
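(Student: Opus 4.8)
The plan is to show that $\calM_{\calE_L}^{(t)}$ is a contraction in the induced Schatten $\infty$-norm (equivalently, in the operator norm on $\gl(\HC^{\otimes t})$), and then invoke the elementary fact that the spectral radius of an operator is bounded by any induced norm. Concretely, for any $A\in\gl(\HC^{\otimes t})$ I would estimate
\begin{equation}
\norm{\calM_{\calE_L}^{(t)}(A)}_\infty = \norm{\int_{\calE_L} d\mu(U)\, U^{\otimes t} A (U^\dagger)^{\otimes t}}_\infty \leq \int_{\calE_L} d\mu(U)\, \norm{U^{\otimes t} A (U^\dagger)^{\otimes t}}_\infty = \norm{A}_\infty,
\end{equation}
where the inequality is the triangle inequality for the Schatten $\infty$-norm applied to the (probability) average over $\calE_L$, and the last equality uses that $U^{\otimes t}$ is unitary and the Schatten $\infty$-norm is unitarily invariant. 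This gives $\norm{\calM_{\calE_L}^{(t)}}_\infty \leq 1$.

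Since $\calM_{\calE_L}^{(t)}$ is self-adjoint with respect to the Hilbert--Schmidt inner product, it is diagonalizable with real eigenvalues, and each eigenvalue $\lambda$ with eigenvector $A$ satisfies $\abs{\lambda}\,\norm{A}_\infty = \norm{\calM_{\calE_L}^{(t)}(A)}_\infty \leq \norm{A}_\infty$, hence $\abs{\lambda}\leq 1$. (Equivalently, one can note that for self-adjoint $\calM$ the spectral radius equals the Hilbert--Schmidt operator norm, which is in turn bounded above by the induced Schatten $\infty$-norm; either route is routine.) I do not expect any real obstacle here: the only point that requires a word of care is justifying that the triangle inequality extends from finite sums to the integral over $\calE_L$, which follows because $\mu$ is a probability measure and $U\mapsto U^{\otimes t}A(U^\dagger)^{\otimes t}$ is continuous (indeed the whole argument can be reduced to convex combinations by approximating the Haar-type average on $\calE_L$, or simply by Jensen's inequality for the convex function $\norm{\cdot}_\infty$). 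The heart of the matter is thus just unitary invariance of the norm plus convexity, and the lemma follows immediately.
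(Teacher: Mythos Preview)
Your proposal is correct and is essentially the same as the paper's own proof: both establish $\norm{\calM_{\calE_L}^{(t)}}_\infty\le 1$ via the integral triangle inequality together with unitary invariance of the Schatten $\infty$-norm, and then use self-adjointness of $\calM_{\calE_L}^{(t)}$ (with respect to the Hilbert--Schmidt inner product) to conclude that every eigenvalue satisfies $\abs{\lambda}\le 1$.
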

\begin{proof}\let\qed\relax
Since $\calM_{{\calE_L}}^{(t)}$ is self-adjoint, its singular values are equal to the absolute values of its eigenvalues, and so every eigenvalue has $\abs{\lambda}\leq \norm{\calM_{{\calE_L}}^{(t)}}_\infty$ by Eq. (\ref{def:schatten 1 and infty norms 2}). Let $A\in \gl(\calH^{\otimes t})$ have $\norm{A}_\infty = 1$. Then we have
\[
       \norm{\calM_{{\calE_L}}^{(t)}(A)}_\infty \leq \int_{\calE_L} d\mu(U) \; \norm{ U^{\otimes t} A (U^\dagger)^{\otimes t} } _\infty 
       = \int_{\calE_L} d\mu(U) \; 1 = 1,
\]
where we used that the Schatten norms are unitarily invariant. 
\end{proof}

We want to show that all eigenvalues $\lambda$ of $\calM_{{\calE_L}}^{(t)}$ on $\mathfrak{p}$ have $\abs{\lambda}< 1$.
We prove that first in the case of a single layer.

\begin{lemma}\label{lem:single layer ensemble}
    The single-layer ensemble $\calE_1 := \{U_1(\thv_1): \thv_1\in \mathbb{R}^K\}$ with measure $d\mu_1$ satisfies $\abs{\lambda}< 1$ for 
    all eigenvalues $\lambda$ of $\calM_{\calE_1}^{(t)}$ on $\mathfrak{p}$.
\end{lemma}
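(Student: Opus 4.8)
The plan is to argue by contradiction and then invoke Lemma~\ref{lem:spectral radius of moment operator is in unit disc}: it suffices to rule out eigenvalues of modulus exactly one for $\calM_{\calE_1}^{(t)}$ acting on the non-invariant subspace $\mathfrak{p}$. (This is the base case from which the $L$-layer bound of Theorem~\ref{th:design-layers-SI} follows by submultiplicativity, since $\norm{\calA_{\calE_1}^{(t)}}_\infty$ is exactly the largest such modulus.) So suppose $\calM_{\calE_1}^{(t)}(A)=\lambda A$ with $A\in\mathfrak{p}$, $A\neq0$. Since $\calM_{\calE_1}^{(t)}$ is self-adjoint for the Hilbert--Schmidt inner product, $\lambda\in\mathbb R$, and by Lemma~\ref{lem:spectral radius of moment operator is in unit disc} we may assume $\lambda=\pm1$. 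First I would convert this spectral identity into a \emph{pointwise} one: writing $f(U)=U^{\otimes t}A(U^\dagger)^{\otimes t}$, each $f(U)$ satisfies $\norm{f(U)}_2=\norm{A}_2$ by unitary invariance of the Schatten norm, so
\[
\norm{A}_2=\abs{\lambda}\,\norm{A}_2=\norm{\int_{\calE_1}d\mu_1(U)\,f(U)}_2\le\int_{\calE_1}d\mu_1(U)\,\norm{f(U)}_2=\norm{A}_2 .
\]
Equality throughout forces, via the equality case of the Cauchy--Schwarz / triangle inequality in a (strictly convex) Hilbert space, $f(U)=\lambda A$ for $\mu_1$-almost every $U\in\calE_1$.

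The second step upgrades ``almost every'' to ``every''. Parametrizing $\calE_1$ by $\thv_1\in\mathbb R^K$ via $U_1(\thv_1)=\prod_{k=1}^K e^{-iH_k\theta_{1,k}}$, the map $g(\thv_1)=U_1(\thv_1)^{\otimes t}A\,(U_1(\thv_1)^\dagger)^{\otimes t}-\lambda A$ is real-analytic from $\mathbb R^K$ to $\gl(\HC^{\otimes t})$, and by step one it vanishes for $\mu_1$-almost every $\thv_1$. Under the natural assumption that $d\mu_1$ is absolutely continuous with respect to Lebesgue measure on a set of positive measure (e.g.\ each $\theta_{1,k}$ drawn from a distribution with a density on an interval), $g$ then vanishes on a set of positive Lebesgue measure, and a real-analytic function with this property vanishes identically. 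Hence $g\equiv0$ on $\mathbb R^K$; evaluating at $\thv_1=0$ gives $A=\lambda A$, so $\lambda=1$.

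The third step derives the contradiction. Since the identity now holds for all parameter values, setting all parameters except the $k$-th to zero yields $(e^{-iH_k\theta})^{\otimes t}\,A\,(e^{iH_k\theta})^{\otimes t}=A$ for every $k$ and every $\theta\in\mathbb R$. Thus $A$ is fixed under conjugation by every element of the subgroup generated by the one-parameter subgroups $\{e^{-iH_k\theta}\}$; this is the connected subgroup whose Lie algebra is $\langle i\GC\rangle_{{\rm Lie}}=\g$, hence a dense subgroup of $G$. Using continuity of $U\mapsto U^{\otimes t}A(U^\dagger)^{\otimes t}$ together with $\gl(\HC^{\otimes t})^G=\gl(\HC^{\otimes t})^{\overline G}$, we conclude $A\in\gl(\HC^{\otimes t})^G$. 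But $A\in\mathfrak{p}$ is orthogonal to $\gl(\HC^{\otimes t})^G$, so $A=0$, contradicting $A\neq0$. Combined with Lemma~\ref{lem:spectral radius of moment operator is in unit disc}, this gives $\abs{\lambda}<1$ for every eigenvalue of $\calM_{\calE_1}^{(t)}$ on $\mathfrak{p}$.

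I expect the main obstacle to be the second step: one must pin down precisely which hypothesis on the layer-sampling measure $d\mu_1$ is being used, and handle the several-variable real-analyticity argument correctly --- for $K>1$ it is \emph{not} enough to vanish on a set with a limit point (one needs a set of positive measure, or an open set), so the lemma implicitly assumes the parameter distribution has full-dimensional support. The equality-case analysis of step one and the ``generates $G$'' claim of step three are routine, but it is worth recording the density remark so that one does not need $e^{\g}$ itself to be a subgroup of $G$.
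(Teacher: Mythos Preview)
Your proof is correct but proceeds by a genuinely different route than the paper's. The paper does not argue pointwise via the equality case of the triangle inequality or invoke real-analyticity; instead it appeals to a controllability result (Corollary~3.2.6 of \cite{dalessandro2010introduction}) guaranteeing a \emph{uniform} depth $\tilde L$ at which the layered map $(\thv_1,\dots,\thv_{\tilde L})\mapsto U_{\tilde L}\cdots U_1$ surjects onto $G$. Given $A\in\mathfrak p$ nonzero, one picks $\tilde U\in G$ with $|\langle A,\tilde U^{\otimes t}A(\tilde U^\dagger)^{\otimes t}\rangle|<1$, uses continuity to extend this strict inequality to an open neighborhood $N\ni\tilde U$, notes that $N$ meets the support of the depth-$\tilde L$ ensemble with positive measure, and concludes $|\langle A,\calM_{\calE_{\tilde L}}^{(t)}(A)\rangle|<1$; since $\calM_{\calE_{\tilde L}}^{(t)}=(\calM_{\calE_1}^{(t)})^{\tilde L}$, the single-layer eigenvalues on $\mathfrak p$ inherit $|\lambda|<1$. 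Your approach trades this external controllability reference for the zero-set property of real-analytic functions, which is more self-contained and also makes explicit the hypothesis actually used (absolute continuity of $d\mu_1$ with respect to Lebesgue measure on a full-dimensional set); the paper's argument carries an analogous implicit assumption that the depth-$\tilde L$ measure charges every open subset of $G$. One minor simplification in your step three: the density remark is unnecessary, since the connected subgroup generated by $\{e^{-iH_k\theta}\}_k$ has Lie algebra $\langle i\GC\rangle_{\rm Lie}=\g$ and hence \emph{equals} $G=e^{\g}$, not merely a dense subgroup.
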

\begin{proof}\let\qed\relax
    Let $A\in \mathfrak{p}$ with $\norm{A}_\infty=1$. Since $A$ is not $G$-invariant, there exists a $\tilde{U}\in G$ such that for some $\epsilon>0$,
    \[
    \abs{\inprod{A, \tilde{U}^{\otimes t} A (\tilde{U}^\dagger)^{\otimes t}}} < 1-\epsilon .
    \] Note that Corollary 3.2.6 of~\cite{dalessandro2010introduction} assures us that since $G$ compact, there exists a uniform choice of depth $\tilde{L}$ such that any $\tilde{U}\in G$ can be written as
    \begin{equation}\label{eqn:uniform choice in generators}
        \tilde{U}= U_{\tilde{L}}(\tilde{\thv}_{\tilde{L}}) U_{\tilde{L}-1}(\tilde{\thv}_{\tilde{L}-1}) \cdots U_{1}(\tilde{\thv}_{1}).
    \end{equation} Use $(\tilde{{\calE_L}}, d\tilde{\mu})$ to denote the ensemble generated by the depth $\tilde{L}$ circuit. Since the map $\tilde{U}\mapsto \inprod{A, \tilde{U}^{\otimes t}(A)(\tilde{U}^\dagger)^{\otimes t}}$ is a smooth map from $G\to \C$, there exists an open neighborhood $N\subseteq G$ of $\tilde{U}$, on which
    \[
        \abs{\inprod{A,V^{\otimes t} A (V^\dagger)^{\otimes t}}}< 1-\frac{\epsilon}2  \qquad \text{for all }\; A\in N.
    \] As the choice of depth $\tilde{L}$ is uniform for all elements in $G$, Eq.~(\ref{eqn:uniform choice in generators}) assures us that $N\cap \supp(d\tilde{\mu})$ is not measure 0. This then means that 
    \begin{align*}
        \abs{\inprod{A, \calM_{\tilde{{\calE_L}}}^{(t)} (A)}} < 1.
    \end{align*} Thus any eigenvalue $\lambda$ of $\calM_{\tilde{{\calE_L}}}^{(t)}\big\vert_{\mathfrak{p}}$ must have $\abs{\lambda}<1$. But a straightforward calculation reveals that $\calM_{\tilde{{\calE_L}}}^{(t)} = (\calM_{U_1}^{(t)})^{\tilde{L}}$, and so any eigenvalue $\lambda$ of $\calM_{U_1}^{(t)}\big\vert_{\mathfrak{p}}$ has $\abs{\lambda}<1$.
\end{proof}

Now to finish the proof of the theorem, we combine the results of Lemmas~\ref{lem:spectral radius of moment operator is in unit disc} and \ref{lem:single layer ensemble}. It follows that for the single layer ensemble $\calE_1$, the operator $\calA_{\calE_1}^{(t)} = \calM_{G}^{(t)} - \calM_{\calE_1}^{(t)}$ is diagonalizable with all eigenvalues $\lambda$ satisfying $\abs{\lambda}<1$. Therefore, its operator norm $\norm{\calA_{\calE_1}^{(t)}}_\infty<1$.
For a circuit of depth $L$, we may use the following Lemma~\ref{lem:L layer to single layer} to see that $\norm{\calA_{{\calE_L}}^{(t)}}_\infty =  \paran{\norm{\calA_{\calE_1}^{(t)}}_\infty}^L$, so we can pick $L$ as in Eq.~\eqref{eq:bound2 on L 2} to grant our result.
\end{proof}

\begin{lemma} \label{lem:L layer to single layer}
    Let $\calE_L\subseteq G$ be the ensemble of unitaries generated by an $L$-layered circuit $U(\thv)$. Then 
    \begin{equation}
        \calA_{\calE_L}^{(t)} = \underbrace{\calA_{\calE_1}^{(t)}\circ \calA_{\calE_1}^{(t)} \circ \dots \circ \calA_{\calE_1}^{(t)}}_{L \text{ times}} . 
    \end{equation} In particular, $\norm{\calA_{\calE_L}^{(t)}}_\infty = \norm{\calA_{\calE_1}^{(t)}}_\infty^L$.
\end{lemma}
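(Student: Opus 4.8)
I would prove the superoperator identity $\calA_{\calE_L}^{(t)}=\calA_{\calE_1}^{(t)}\circ\cdots\circ\calA_{\calE_1}^{(t)}$ ($L$ factors) first, and then read off the norm statement from it. The identity reduces to two elementary ingredients: (i) the $t$-moment superoperator of a circuit built from independent layers factorizes as a composition of single-layer moment superoperators; and (ii) the Haar moment superoperator $\calM_G^{(t)}$ is an orthogonal projection that commutes with, and is fixed by, every single-layer moment superoperator. Given these, the proof is a short telescoping computation.

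For (i), write $U(\thv)=U_L(\thv_L)\cdots U_1(\thv_1)$ with the layer parameters $\thv_1,\dots,\thv_L$ drawn independently from the common single-layer distribution, so that the measure on $\calE_L$ is the product of $L$ copies of the single-layer measure. Using $((VW)^\dagger)^{\otimes t}=(W^\dagger)^{\otimes t}(V^\dagger)^{\otimes t}$ to rewrite the conjugation $U^{\otimes t}(\cdot)(U^\dagger)^{\otimes t}$ as a nesting of per-layer conjugations, and applying Fubini to integrate one layer at a time, I get $\calM_{\calE_L}^{(t)}=T^{\,L}$ with $T:=\calM_{\calE_1}^{(t)}$ (all layers contribute the same $T$, so the ordering is immaterial). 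This step is routine bookkeeping with product measures.

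For (ii), let $P:=\calM_G^{(t)}$, which by Lemma~\ref{lem:moments invariance} is the Hilbert--Schmidt-orthogonal projection of $\gl(\HC^{\otimes t})$ onto its $G$-invariant subspace. Since $\calE_1\subseteq G$, every element of $\calE_1$ fixes $\gl(\HC^{\otimes t})^G$, so $T$ acts as the identity on $\operatorname{range}(P)$, giving $TP=P$; as $T$ and $P$ are both Hilbert--Schmidt-self-adjoint (Eq.~\eqref{eqn:t moment self-adjoint}), taking adjoints also gives $PT=P$. With $Q:=\id-P$ one then has $T=P+TQ$, $TQ=QT$, hence $\calA_{\calE_1}^{(t)}=T-P=TQ$, and a one-line induction yields $(TQ)^L=T^{\,L}Q=T^{\,L}-P$. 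Combining with (i),
\[
\bigl(\calA_{\calE_1}^{(t)}\bigr)^{L}=T^{\,L}-P=\calM_{\calE_L}^{(t)}-\calM_G^{(t)}=\calA_{\calE_L}^{(t)},
\]
which is the asserted composition identity.

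\textbf{The norm, and the main obstacle.} Submultiplicativity of the induced operator norm gives $\norm{\calA_{\calE_L}^{(t)}}_\infty\le\norm{\calA_{\calE_1}^{(t)}}_\infty^{L}$ at once, and this inequality is in fact all that the downstream arguments (Theorems~\ref{th:design-layers-SI} and~\ref{th:variance-layers}) use. To upgrade it to the stated equality I would note that, by (ii), $\calA_{\calE_1}^{(t)}$ is simultaneously diagonalizable with the commuting self-adjoint operators $T$ and $P$: it vanishes on $\operatorname{range}(P)$ and coincides with $T$ on $\mathfrak{p}:=\operatorname{range}(Q)$, where (Lemmas~\ref{lem:spectral radius of moment operator is in unit disc} and~\ref{lem:single layer ensemble}) it is a self-adjoint contraction of spectral radius $r<1$. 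Testing the induced norm against a Hilbert--Schmidt-normalized top eigenvector gives $\norm{\calA_{\calE_L}^{(t)}}_\infty\ge r^{L}$, so the equality collapses to the single claim $\norm{\calA_{\calE_1}^{(t)}}_\infty=r$. This is the delicate point: the relevant norm is the one \emph{induced by the Schatten-$\infty$ norm} on $\gl(\HC^{\otimes t})$, not the Hilbert--Schmidt norm, so self-adjointness of $\calA_{\calE_1}^{(t)}$ does not by itself force its induced norm to equal its spectral radius; one must additionally exploit the orthogonal block decomposition $\gl(\HC^{\otimes t})=\operatorname{range}(P)\oplus\mathfrak{p}$ together with the Schatten-$\infty$ contractivity of $T$ on $\mathfrak{p}$. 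I expect ingredients (i) and (ii) to be straightforward and this reconciliation of the two norm conventions to be the only real obstacle.
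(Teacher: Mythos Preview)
Your argument for the composition identity is essentially the paper's: both rest on the factorization $\calM_{\calE_L}^{(t)}=(\calM_{\calE_1}^{(t)})^{L}$ together with the absorption identities $\calM_G^{(t)}\circ\calM_{\calE_1}^{(t)}=\calM_{\calE_1}^{(t)}\circ\calM_G^{(t)}=\calM_G^{(t)}$. The paper expands $\calA_{\calE_L}^{(t)}\circ\calA_{\calE_L}^{(t)}$ directly and reads off $\calA_{\calE_{2L}}^{(t)}$; you package the same facts as $T=P+TQ$ with $Q=\id-P$ and telescope. Your derivation of $TP=P=PT$ via Hilbert--Schmidt self-adjointness is slightly cleaner than the paper's appeal to left/right Haar invariance, but the content is identical.

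On the norm equality your caution is well placed, and in fact the paper's own argument has precisely the gap you anticipate. The paper reasons that since $\calA_{\calE_1}^{(t)}$ is Hilbert--Schmidt self-adjoint its singular values are the absolute eigenvalues, and then passes to the $L$-th power via a top eigenvector. That establishes the equality for the operator norm \emph{induced by the Hilbert--Schmidt inner product}, not for the Schatten-$\infty$-induced norm $\norm{\cdot}_\infty$ of Eq.~\eqref{def:induced Schatten p norm}; the mismatch you flag between the two norm conventions is simply not addressed there. Your observation that submultiplicativity already yields $\norm{\calA_{\calE_L}^{(t)}}_\infty\le\norm{\calA_{\calE_1}^{(t)}}_\infty^{L}$, and that this inequality is all that Theorems~\ref{th:design-layers} and~\ref{th:variance-layers} actually use, is the correct resolution.
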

\begin{proof}\let\qed\relax
    Let us more explicitly write $\calA_{\calE_L}^{(t)}$:
\begin{equation}
    \calA_{\calE_L}^{(t)}(\cdot)=\calM_{\calE_L}^{(t)}(\cdot)- \calM_G^{(t)}(\cdot) = \int_{\calE_L} d\mu(U(\thv)) \, U(\thv)^{\otimes t} (\cdot) (U(\thv)^\dagger)^{\otimes t}-\int_G d\mu(U(\thv)) \, U(\thv)^{\otimes t} (\cdot) (U(\thv)^\dagger)^{\otimes t}\,.
\end{equation}
Taking the composition of two such linear maps leads to 
\begin{equation}
\calA_{\calE_L}^{(t)}\circ\calA_{\calE_L}^{(t)}=\calM_{\calE_L}^{(t)}\circ\calM_{\calE_L}^{(t)}+\calM_{G}^{(t)}\circ\calM_{G}^{(t)}-\calM_{G}^{(t)}\circ\calM_{\calE_L}^{(t)}-\calM_{\calE_L}^{(t)}\circ\calM_{G}^{(t)}\,.
\end{equation}
Then, the following hold
\begin{equation}
\calM_{\calE_L}^{(t)}\circ\calM_{\calE_L}^{(t)}=\calM_{\calE_{2L}}^{(t)}\,,\quad\quad  \calM_{G}^{(t)}\circ\calM_{G}^{(t)}=\calM_{G}^{(t)}\circ\calM_{\calE_L}^{(t)}=\calM_{\calE_L}^{(t)}\circ\calM_{G}^{(t)}=\calM_{G}^{(t)}\,.
\end{equation}
The first equality simply states that the multiplication of two distributions from $L$-layered circuits is equal to the distribution of a $2L$-layered circuit, while the latter ones follows from the left- and right-invariance of the Haar measure. Hence, we have 
\begin{equation}
\calA_{\calE_L}^{(t)}\circ\calA_{\calE_L}^{(t)}=\calA_{\calE_{2L}}^{(t)}\,.
\end{equation}
In particular, we obtain
\begin{equation}\label{eq:comp-maps}
\calA_{\calE_L}^{(t)}=\underbrace{\calA_{\calE_1}^{(t)}\circ \calA_{\calE_1}^{(t)} \circ \dots \circ \calA_{\calE_1}^{(t)}}_{L \text{ times}}\,,
\end{equation}
where $\calA_{\calE_1}^{(t)}$ corresponds to the single layer ensemble $\calE_1$. 

The final claim that $\norm{\calA_{\calE_L}^{(t)}}_\infty = \norm{\calA_{\calE_1}^{(t)}}_\infty^L$ follows by noting that $\calA_{\calE_1}^{(t)}$ is self-adjoint, and so its singular values are exactly the absolute values of its eigenvalues. Take the largest singular value $\abs{\lambda}$ and a corresponding eigenvector $A$ of $\calA_{\calE_1}^{(t)}$. Then by Eq. (\ref{eq:comp-maps}), $A$ is an eigenvector of $\calA_{\calE_L}^{(t)}$ of largest singular value $\abs{\lambda}^L$.
\end{proof}

\subsection{Example of a use-case of Theorem 2}

In this section we present an example where  we can use Theorem 2 to predict the number of layers needed for an $L$-layered circuit to form an $\epsilon$-approximate $2$-design.

Let $S$ be  a subgroup of  $\mathbb{U}(d)$,  the unitary group of degree $d=2^n$ 
acting on $\HC$, and let $\{P_j\}_{j=1}^D$ be a basis for the commutant of the $t$-fold  tensor representation $\HC^{\otimes t}$ of $S$. That is, $[P_j,U^{\otimes t}]=0$ for all $U\in S$. Then, we know that (see, e.g., \cite{mele2023introduction}):
\begin{equation} \label{def:moment-op-vecto}
        \mathcal{M}_{S}^{(t)} = \int_{S} d\mu(U) \, (U\otimes U^*)^{\otimes t}=\sum_{i,j=1}^D W^{-1}_{ij}|P_i\rangle\rangle\langle\langle P_j|\,. 
\end{equation}
Here $W^{-1}$ is the so-called Weingarten matrix, the inverse of the Gram matrix $W$ whose entries $W_{ij}=\Tr[P_i\ad P_j]$ are given by the Hilbert--Schmidt inner product of the commutant's basis elements. Moreover, we have denoted as $|A\rangle\rangle$ the standard vectorization of a linear operator $A$. Namely, given some $A\in\LC(\HC)$ such that $A=\sum_{\mu,\nu=1}^{2^n}A_{\mu\nu}|\mu\rangle\langle \nu|$, then $|A\rangle\rangle =\sum_{\mu,\nu=1}^{2^n}A_{\mu\nu}|\mu\rangle\otimes |\nu\rangle$. Hence, if we can characterize the basis of the commutant of the $t$-fold tensor representation of $S$, then we can find a closed form expression for $\mathcal{M}_{S}^{(t)}$. Finally, note that $\mathcal{M}_{S}^{(t)}$ is an operator acting on $2t$ copies of the Hilbert space $\HC$, and hence is of dimension $2^{2nt}\times 2^{2nt}$.

Next, let us recall from the main text that we have factorized the parametrized quantum circuit into layers as $U(\thv) = U_{L}(\thv_L)  U_{L-1}(\thv_{L-1})  \cdots  U_1(\thv_1)$, and that each layer $U_{l}(\thv_l)$ is given by $
U_l(\thv_l) = \prod_{k=1}^K U_k(\theta_{l,k})$. Hence, if the parameters in each $U_k$ are independently sampled, it is not hard to see that the $t$-th moment operator of a single layer can be expressed as the product of the $t$-th moment operators of each $U_k$. That is, 
\begin{equation}
\mathcal{M}_{\mathcal{E}_1}^{(t)}=\prod_{k=1}^K    \mathcal{M}_{\mathcal{E}_{1,k}}^{(t)}\,,
\end{equation}
where we have denoted as $\mathcal{E}_{1,k}$ the distribution of unitaries obtained from each $U_k(\theta_{l,k})$. As such, if we assume that each $\mathcal{E}_{1,k}$ forms a subgroup, and that we sample it over its respective Haar measure, we will have via Eq.~\eqref{def:moment-op-vecto} that 
\begin{equation}\label{eq:formula}
\mathcal{M}_{\mathcal{E}_1}^{(t)}=\sum_{i_{1},j_{1}=1}^{D_1}\cdots\sum_{i_{K},j_{K}=1}^{D_K} \left((W^{-1}_1)_{i_1j_1}\cdots(W^{-1}_K)_{i_Kj_K}\langle\langle P_{j_1}|P_{i_2}\rangle\rangle\cdots \langle\langle P_{j_{K-1}}|P_{i_K}\rangle\rangle\right)|P_{i_1}\rangle\rangle\langle\langle P_{j_K}|\,.
\end{equation}
Here, we denoted as $\{P_{j_k}\}_{j_k=1}^{D_k}$ be a basis for the commutant of the $t$-fold  tensor representation of $\mathcal{E}_{1,k}$, and as $W^{-1}_k$ its associated Weingarten matrix. Moreover, we recall that $\langle\langle A|B\rangle\rangle=\Tr[A\ad B]$.  Equation~\eqref{eq:formula} indicates that we can compute the single layer operator $\mathcal{M}_{\mathcal{E}_1}^{(t)}$ by decomposing the layer into unitaries whose ensembles form groups. In its lower form, such procedure can always be taken to the single-gate level where each   $U_k(\theta_{l,k})=e^{-iH_k \theta_{l,k}}$ with $H_k^2=\id$, and hence where its associated distribution will be a representation of $\mathbb{U}(1)$.

\begin{figure*}
    \centering
    \includegraphics[width=1\linewidth]{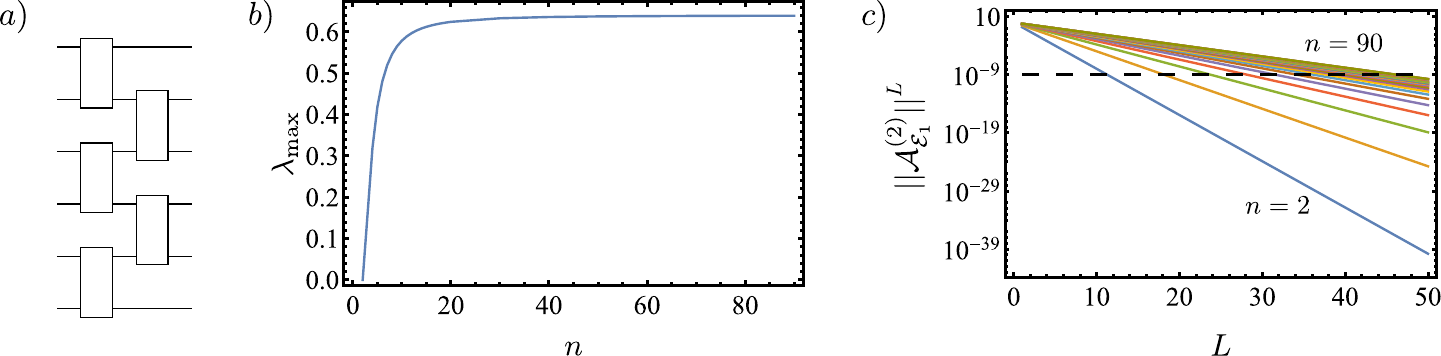}
    \caption{\textbf{Layers needed to become an $\epsilon$-approximate $2$-design.} (a) Hardware efficient ansatz where two-qubit gates act in a brick-like fashion on neighboring pairs of qubits. We assume that each gate is independently and  randomly sampled from $\mathbb{SU}(4)$. (b) Largest eigenvalue of $\mathcal{A}_{\mathcal{E}_1}^{(2)}$ as a function of the number of qubits. (c) Each curve shows $(\mathcal{A}_{\mathcal{E}_1}^{(2)})^L$ for different values of $n$. The thick horizontal dashed line is plotted at $10^{-9}$. Vertical axis is shown in a logarithmic scale.     }
    \label{fig:si}
\end{figure*}

For instance, let us consider a hardware efficient circuit composed of alternating layers of two-qubit gates acting on neighboring qubits (see Fig.~\ref{fig:si}(a)). Moreover, let us assume that each two-qubit gate forms an independent local $2$-design over $\mathbb{SU}(4)$. If we consider a gate acting on qubits $i$ and $i+1$, we will have~\cite{mele2023introduction}
\begin{equation} 
        \mathcal{M}_{\mathbb{SU}(4)}^{(2)} = \frac{1}{15}\left(|\mathbb{I}_i\mathbb{I}_{i+1}\rangle\rangle\langle\langle\mathbb{I}_i\mathbb{I}_{i+1}|+|\mathbb{S}_i\mathbb{S}_{i+1}\rangle\rangle\langle\langle \mathbb{S}_i\mathbb{S}_{i+1}|-\frac{1}{4}\left(\mathbb{S}_i\mathbb{S}_{i+1}|+|\mathbb{S}_i\mathbb{S}_{i+1}\rangle\rangle\langle\langle \mathbb{I}_i\mathbb{I}_{i+1}|\right))\right)\,. 
\end{equation}
where $\mathbb{I}_i$ and $\mathbb{S}_i$ denote the identity and SWAP operators acting on the two copies of the $i$-th qubit Hilbert spaces. By defining a basis for the operator space spanned from the non-orthogonal  elements 
\begin{align}
\bigl\{|\mathbb{I}_i\mathbb{I}_{i+1}\rangle\rangle,|\mathbb{I}_i\mathbb{S}_{i+1}\rangle\rangle,|\mathbb{S}_i\mathbb{I}_{i+1}\rangle\rangle,|\mathbb{S}_i\mathbb{S}_{i+1}\rangle\rangle\bigr\},
\end{align}
we can express $\mathcal{M}_{\mathbb{SU}(4)}^{(2)}$ as a $4\times 4$ matrix
\begin{equation}
    \mathcal{M}_{\mathbb{SU}(4)}^{(2)}=
    \begin{pmatrix}
        1&0&0&0\\
        2/5 & 0 & 0&2/5\\
        2/5 & 0 & 0&2/5\\
        2/5 & 0 & 0&2/5\\
        0 &  0 & 0 & 1
    \end{pmatrix}\,.
\end{equation}
One can readily verify that $\mathcal{M}_{\mathbb{SU}(4)}^{(2)}\circ \mathcal{M}_{\mathbb{SU}(4)}^{(2)}= \mathcal{M}_{\mathbb{SU}(4)}^{(2)}$ and that $\Tr[\mathcal{M}_{\mathbb{SU}(4)}^{(2)}]=2$, as expected from the fact that it is a projector onto a subspace of dimension $2$. Assuming for simplicity that $n$ is even (the  calculations below can be readily generalized to the case when $n$ is odd), we find that the single layer moment operator is
\begin{equation}\label{eq:moment-op-reduced}
    \mathcal{M}_{\mathcal{E}_1}^{(t)}=\left(\id\otimes \left(\mathcal{M}_{\mathbb{SU}(4)}^{(2)}\right)^{\otimes (n/2-1)}\otimes \id\right)\circ \left(\mathcal{M}_{\mathbb{SU}(4)}^{(2)}\right)^{\otimes (n/2)}\,,
\end{equation}
where  $\id$ denotes the $2\times 2$ identity matrix. Note that the previous has allowed us to effectively express $\mathcal{M}_{\mathcal{E}_1}^{(t)}$ as a $2^n\times 2^n$ matrix, which is much smaller than its vanilla form of $16^{n}\times 16^{n}$. Moreover, this approach follows very closely manuscripts in the literature where Haar average properties are computed by mapping objects such as $\mathcal{M}_{\mathcal{E}_1}^{(t)}$ to Markov-like mixing process matrices~\cite{harrow2018approximate,dalzell2022randomquantum,napp2022quantifying,braccia2024computing}.

To compute the norm of the operator $\mathcal{A}_{\mathcal{E}_1}^{(2)}=\mathcal{M}_{\mathcal{E}_1}^{(2)}- \mathcal{M}_G^{(2)}$ we can use the fact that deep hardware efficient ansatz are universal~\cite{larocca2021diagnosing}, meaning  that  $\mathfrak{g}=\mathfrak{su}(d)$, and hence $G=\mathbb{SU}(d)$. From here, we can express 
\begin{equation}
    \mathcal{M}_{\mathbb{SU}(d)}^{(2)} = \frac{1}{4^n-1}\left(|\mathbb{I}\rangle\rangle\langle\langle\mathbb{I}|+|\mathbb{S}\rangle\rangle\langle\langle \mathbb{S}|-\frac{1}{2^n}\left(|\mathbb{I}\rangle\rangle\langle\langle\mathbb{S}|+|\mathbb{S}\rangle\rangle\langle\langle \mathbb{I}|\right))\right)\,,
\end{equation} 
where $|\mathbb{I}\rangle\rangle=\bigotimes_{i=1}^n |\mathbb{I}_i\rangle\rangle $ and $|\mathbb{S}\rangle\rangle=\bigotimes_{i=1}^n |\mathbb{S}_i\rangle\rangle $. Following a similar procedure such as that used to build an efficient representation of $\mathcal{M}_{\mathcal{E}_1}^{(t)}$ in Eq.~\eqref{eq:moment-op-reduced}, we can express $\mathcal{M}_{\mathbb{SU}(d)}^{(2)}$ and thus $\mathcal{A}_{\mathcal{E}_1}^{(2)}$ as a $2^n\times 2^n$ matrix. Combining this with the fact that $\mathcal{A}_{\mathcal{E}_1}^{(2)}$ can be represented efficiently as a matrix product operator, and the fact that the eigenvector associated to its largest eigenvalue admits an efficient representation as a bond dimension 4 matrix product state, we can easily obtain the largest eigenvalue for $\mathcal{A}_{\mathcal{E}_1}^{(2)}$. 

For instance, in Fig.~\ref{fig:si}(b) we plot the largest eigenvalue $\lambda_{\max}$ of $\mathcal{A}_{\mathcal{E}_1}^{(2)}$ as a function of the number of qubits $n$ that $U(\thv)$ acts on. We have considered system sizes $n=2,\ldots,90$.  Here we can see that for $n=2$, $\lambda_{\max}=0$, as expected. Moreover, as $n$ increases, the value of $\lambda_{\max}$ appears to saturate at a value $\approx 0.639$. Then, in Fig.~\ref{fig:si}(c) we plot $\norm{\mathcal{A}_{{\mathcal{E}_L}}^{(2)}} = \paran{\norm{\mathcal{A}_{\mathcal{E}_1}^{(2)}}}^L$ for different values of $L$. Here, we find that as the number of layers increases, the norm of the expressiveness superoperator $\norm{\mathcal{A}_{{\mathcal{E}_L}}^{(2)}}$ decreases exponentially. Indeed, we can now answer questions such as: ``\textit{How many layers are needed for the circuit to become an $\epsilon=10^{-9}$ approximate design?}'' For the number of qubits considered, we show that $L=47$ will suffice.

To finish, we note that here we have considered a circuit whose local gates are randomly taken from $\mathbb{SU}(4)$. However, the methods presented can be extended and generalized to other circuit architectures where the gates are sampled from any subgroup. In particular, we can always assume that each gate in the layer is generated by a Pauli and hence samples from a representation of $\mathbb{U}(1)$. We leave the exploration of such cases for future work.

\section{Proof of Theorem 3}

Here we present a proof for Theorem 3 of the main text, which we recall for convenience:

\begin{theorem}\label{th:variance-layers-sm}
Let $\mathfrak{g}\subseteq \mathfrak{u}(2^n)$ be any dynamical Lie algebra, and suppose either $\rho$ or $O$ are in $i\g$ with $\rho$ a density matrix. Then the difference between the loss function variance of an $L$-layered circuit, with ensemble of unitaries $\calE_L$, and the variance for a circuit that forms a $2$-design over $G$ can be bounded as 
\begin{equation}
\abs{\Var_{\calE_L} [\ell_{\thv}(\rho, O)]-\Var_{G} [\ell_{\thv}(\rho, O)]}\leq 3 \paran{\norm{\calA_{\calE_1}^{(2)}}_\infty}^L \norm{O}_1^2 \,,
\end{equation}
where the usual Schatten $p$-norm and induced Schatten $p$-norm are given by Eq. (\ref{def:schatten 1 and infty norms 2}) and Eq. (\ref{def:induced Schatten p norm 2}).
\end{theorem}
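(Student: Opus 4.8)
The plan is to pass from the variances to the first- and second-moment superoperators, write the difference $\Var_{\calE_L}-\Var_G$ as a pair of Hilbert--Schmidt inner products against the expressiveness operators $\calA^{(1)}_{\calE_L}$ and $\calA^{(2)}_{\calE_L}$, estimate these with the Schatten--H\"older inequality and the submultiplicativity of the induced $\infty$-norm, and then close the argument with Lemma~\ref{lem:L layer to single layer}.

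Concretely, writing $\ell:=\ell_{\thv}(\rho,O)$ and recalling from the proof of Theorem~\ref{thm:variance-reductive-SI} that for \emph{any} ensemble $\calE$ of unitaries in $G$ one has
\begin{equation}
\bbE_{\calE}[\ell] = \inprod{\calM_{\calE}^{(1)}(\rho),\,O}, \qquad \bbE_{\calE}[\ell^2] = \inprod{\calM_{\calE}^{(2)}(\rho^{\otimes 2}),\,O^{\otimes 2}},
\end{equation}
(all operators appearing being Hermitian, so the traces coincide with the Hilbert--Schmidt inner products), I would subtract the $\calE=\calE_L$ and $\calE=$~Haar versions. Using $\Var=\bbE[\ell^2]-\bbE[\ell]^2$, the identity $a^2-b^2=(a-b)(a+b)$, and $\calA^{(t)}_{\calE_L}:=\calM^{(t)}_{\calE_L}-\calM^{(t)}_G$, this gives
\begin{equation}
\Var_{\calE_L}[\ell]-\Var_{G}[\ell] = \inprod{\calA^{(2)}_{\calE_L}(\rho^{\otimes 2}),\,O^{\otimes 2}} - \inprod{\calA^{(1)}_{\calE_L}(\rho),\,O}\,\paran{\bbE_{\calE_L}[\ell]+\bbE_{G}[\ell]}.
\end{equation}

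Next I would bound the three factors. Using $|\inprod{A,B}|=|\Tr[A^\dagger B]|\le\norm{A}_\infty\norm{B}_1$, the bound $\norm{\calA^{(t)}_{\calE_L}(X)}_\infty\le\norm{\calA^{(t)}_{\calE_L}}_\infty\norm{X}_\infty$, and $\norm{\rho}_\infty\le1$, $\norm{\rho}_1=1$ for a density matrix: the second-moment term is at most $\norm{\calA^{(2)}_{\calE_L}}_\infty\norm{\rho}_\infty^2\norm{O}_1^2\le\norm{\calA^{(2)}_{\calE_L}}_\infty\norm{O}_1^2$; the first factor of the mean term is at most $\norm{\calA^{(1)}_{\calE_L}}_\infty\norm{\rho}_\infty\norm{O}_1\le\norm{\calA^{(1)}_{\calE_L}}_\infty\norm{O}_1$; and $|\bbE_{\calE_L}[\ell]+\bbE_{G}[\ell]|\le 2\norm{O}_\infty\le 2\norm{O}_1$, since $|\Tr[U\rho U^\dagger O]|\le\norm{U\rho U^\dagger}_1\norm{O}_\infty=\norm{O}_\infty$ for every unitary $U$. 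To unify the two remaining superoperator norms I would use that $\gl(\calH)\otimes\id_{\calH}$ is invariant under every $\calM^{(2)}_{\calE}$ and under $\calM^{(2)}_G$, because $\calM^{(2)}_{\calE}(A\otimes\id_{\calH})=\calM^{(1)}_{\calE}(A)\otimes\id_{\calH}$; hence $\calA^{(2)}_{\calE_L}$ restricted to this subspace is carried to $\calA^{(1)}_{\calE_L}$ by the $\infty$-norm isometry $A\mapsto A\otimes\id_{\calH}$, so $\norm{\calA^{(1)}_{\calE_L}}_\infty\le\norm{\calA^{(2)}_{\calE_L}}_\infty$. Putting the pieces together yields $|\Var_{\calE_L}[\ell]-\Var_{G}[\ell]|\le 3\,\norm{\calA^{(2)}_{\calE_L}}_\infty\norm{O}_1^2$, and Lemma~\ref{lem:L layer to single layer} replaces $\norm{\calA^{(2)}_{\calE_L}}_\infty$ by $\paran{\norm{\calA^{(2)}_{\calE_1}}_\infty}^L$, which is the claim.

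I expect the only delicate points to be the bookkeeping of the Schatten exponents --- the $1$-norm must land on the $O$-factors and the $\infty$-norm on the $\rho$-factors, so that the state dependence collapses through $\norm{\rho}_1=1$, $\norm{\rho}_\infty\le1$ while $\norm{O}_1^2$ is retained --- and the comparison $\norm{\calA^{(1)}_{\calE_L}}_\infty\le\norm{\calA^{(2)}_{\calE_L}}_\infty$, which is what makes the final bound uniform in the single second-moment quantity $\norm{\calA^{(2)}_{\calE_1}}_\infty$. I would also remark that the hypothesis $\rho\in i\g$ or $O\in i\g$ plays no role in this estimate: it is needed only so that $\Var_G$ on the right-hand side is the closed-form expression of Theorem~\ref{thm:variance-reductive-SI}, and the inequality as stated holds for any Hermitian $O$ and any density matrix $\rho$.
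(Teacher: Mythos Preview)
Your proposal is correct and follows essentially the same route as the paper: express the variance difference via the first- and second-moment superoperators, bound the two pieces by H\"older with the $\infty$-norm on the $\rho$-side and the $1$-norm on the $O$-side, compare $\norm{\calA^{(1)}_{\calE_L}}_\infty\le\norm{\calA^{(2)}_{\calE_L}}_\infty$ by the $A\mapsto A\otimes\id_{\calH}$ embedding (which the paper packages as a separate lemma), and reduce to a single layer with Lemma~\ref{lem:L layer to single layer}. Your closing remark that the hypothesis $\rho\in i\g$ or $O\in i\g$ is not actually used in the estimate is also correct.
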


\begin{proof}\let\qed\relax
Let us begin by observing that we may split
\begin{equation}\begin{split} \label{eq:var estimate ineq}
    \abs{\Var_{\calE_L}[\ell_{\thv}(\rho,O)] - \Var_{G}[\ell_{\thv}(\rho,O)]} &= \abs{\bbE_{\calE_L} [\ell_{\thv}(\rho,O)]^2 - (\bbE_{\calE_L} [\ell_{\thv}(\rho,O)])^2 + \bbE_{G} [\ell_{\thv}(\rho,O)]^2 - (\bbE_{G} [\ell_{\thv}(\rho,O)])^2 } \\
    &\leq \abs{\bbE_{\calE_L} [\ell_{\thv}(\rho,O)]^2 - \bbE_{G} [\ell_{\thv}(\rho,O)]^2 } + \abs{(\bbE_{\calE_L} [\ell_{\thv}(\rho,O)])^2 - (\bbE_{G} [\ell_{\thv}(\rho,O)])^2 } . 
\end{split}\end{equation}
To estimate the first term of Eq. (\ref{eq:var estimate ineq}), we mimic Eq. (\ref{eqn:variance2}) to write the second moments in terms of the second moment operators of ensembles $\calE$, which again are linear maps $\calM_\calE^{(2)}\colon \gl(\calH)^{\otimes 2} \to \gl(\calH)^{\otimes 2}$,
\begin{equation}\begin{split} \label{eq:estimating second moment of L-layer circuit}
    \abs{\bbE_{\calE_L} [\ell_{\thv}(\rho,O)]^2 - \bbE_{G} [\ell_{\thv}(\rho,O)]^2 } &= \abs{ \Tr \left[\calM_{\calE_L}^{(2)}(\rho^{\otimes 2}) O^{\otimes 2}\right] - \Tr \left[\calM_{G}^{(2)}(\rho^{\otimes 2}) O^{\otimes 2}\right] } \\
    &= \abs{ \Tr\left[ (\calM_{\calE_L}^{(2)} - \calM_G^{(2)})(\rho^{\otimes 2}) O^{\otimes 2} \right]} \\
    &\leq \norm{\rho^{\otimes 2}}_1 \norm{O^{\otimes 2}}_1 \norm{\calA^{(2)}_{\calE_L}}_\infty \\
    &\leq \norm{O^{\otimes 2}}_1 \norm{\calA^{(2)}_{\calE_L}}_\infty,
\end{split}\end{equation} where we used H\"older's inequality for matrices and that $\norm{\rho^{\otimes 2}}_\infty \leq 1$ since $\rho$ a density matrix and so has eigenvalues at most 1.

To estimate the second term of Eq.~(\ref{eq:var estimate ineq}), we use the difference of squares formula $a^2-b^2 = (a+b)(a-b)$ to write
\begin{equation}\begin{split}
    \abs{(\bbE_{\calE_L} [\ell_{\thv}(\rho,O)])^2 - (\bbE_{G} [\ell_{\thv}(\rho,O)])^2 } = \abs{\bbE_{\calE_L} [\ell_{\thv}(\rho,O)] + \bbE_{G} [\ell_{\thv}(\rho,O)]}\; \cdot \;  \abs{\bbE_{\calE_L} [\ell_{\thv}(\rho,O)] - \bbE_{G} [\ell_{\thv}(\rho,O)]} . 
\end{split}\end{equation} Recalling that for any ensemble $\calE$ the first moment operator $\calM_\calE^{(1)}\colon \gl(\calH)\to \gl(\calH)$ has operator norm $\norm{\calM_\calE^{(1)}}_\infty \leq 1$ by Lemma \ref{lem:spectral radius of moment operator is in unit disc}, we see that
\begin{equation}
    \abs{\bbE_{\calE_L} [\ell_{\thv}(\rho,O)] + \bbE_{G} [\ell_{\thv}(\rho,O)]} = \abs{\Tr \left[ (\calM_{\calE_L}^{(1)} + \calM_G^{(1)})(\rho) O \right]}
    \leq \norm{O}_1 \norm{\calM_{\calE_L}^{(1)}+\calM_{G}^{(1)}}_\infty 
    \leq 2 \norm{O}_1 , 
\end{equation} again by H\"older's inequality and $\norm{\rho}_\infty < 1$. Essentially the same estimate as for the variance shows that
\begin{equation}
    \abs{\bbE_{\calE_L} [\ell_{\thv}(\rho,O)] - \bbE_{G} [\ell_{\thv}(\rho,O)]} \leq \norm{O}_1 \norm{\calA_{\calE_L}^{(1)}}_\infty . 
\end{equation}

Putting it all together into Eq. (\ref{eq:var estimate ineq}) and using that $\norm{O}_1^2 = \norm{O^{\otimes 2}}_1$, we have that

\begin{equation}\begin{split}
    \abs{\Var_{\calE_L}[\ell_{\thv}(\rho,O)] - \Var_{G}[\ell_{\thv}(\rho,O)]} &\leq \norm{O}_1^2 \paran{\norm{\calA_{\calE_L}^{(2)}}_\infty + 2\norm{\calA_{\calE_L}^{(1)}}_\infty} \\
    &= \norm{O}_1^2 \paran{\norm{\calA_{\calE_1}^{(2)}}_\infty^L + 2\norm{\calA_{\calE_1}^{(1)}}_\infty^L} \\
    &\leq 3\norm{O}_1^2 \norm{\calA_{\calE_1}^{(2)}}_\infty^L , 
\end{split}\end{equation} where in the second line we have used Lemma~\ref{lem:L layer to single layer}, and in the third line we have used Lemma \ref{lem:every t design is a t-1 design} below to upper bound $\norm{\calA_{\calE_1}^{(1)}}_\infty\leq \norm{\calA_{\calE_1}^{(2)}}_\infty$.
\end{proof}

\begin{lemma} \label{lem:every t design is a t-1 design}
    Let $\calE$ be an ensemble of unitaries that forms an $\epsilon$-approximate $G$ $t$-design, i.e., $\norm{\calM_\calE^{(t)} - \calM_G^{(t)}}_\infty <\epsilon $. Then $\calE$ forms an $\epsilon$-approximate $s$-design for all $1\leq s\leq t$. 
\end{lemma}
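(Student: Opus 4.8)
The plan is to reduce the approximate $s$-design condition to the approximate $t$-design hypothesis by \emph{padding with identities}: an operator on $\HC^{\otimes s}$ is embedded into $\gl(\HC^{\otimes t})$ by tensoring with $\id^{\otimes(t-s)}$. First I would record the elementary identity relating the moment operators of consecutive orders. Writing $U^{\otimes t}=U^{\otimes s}\otimes U^{\otimes(t-s)}$ and using $U^{\otimes(t-s)}\id^{\otimes(t-s)}(U^\dagger)^{\otimes(t-s)}=\id^{\otimes(t-s)}$ directly inside the defining integral of the moment operator (cf.\ Eq.~\eqref{def:moment-op}), one obtains, for every $A\in\gl(\HC^{\otimes s})$ and every $1\le s\le t$,
\begin{equation}
\calM_{\calE}^{(t)}\paran{A\otimes\id^{\otimes(t-s)}}=\calM_{\calE}^{(s)}(A)\otimes\id^{\otimes(t-s)},
\end{equation}
and the identical relation with $G$ in place of $\calE$. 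Subtracting the two versions and recalling $\calA_{\calE}^{(t)}=\calM_{\calE}^{(t)}-\calM_{G}^{(t)}$ gives $\calA_{\calE}^{(t)}\paran{A\otimes\id^{\otimes(t-s)}}=\calA_{\calE}^{(s)}(A)\otimes\id^{\otimes(t-s)}$.

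Next I would pass to the Schatten $\infty$-norm. Since Schatten norms are multiplicative under tensor products and $\norm{\id^{\otimes(t-s)}}_\infty=1$, we have $\norm{B\otimes\id^{\otimes(t-s)}}_\infty=\norm{B}_\infty$ for any operator $B$; applying this on both sides of the displayed identity yields
\begin{equation}
\norm{\calA_{\calE}^{(s)}(A)}_\infty=\norm{\calA_{\calE}^{(t)}\paran{A\otimes\id^{\otimes(t-s)}}}_\infty\leq\norm{\calA_{\calE}^{(t)}}_\infty\,\norm{A}_\infty<\epsilon\,\norm{A}_\infty,
\end{equation}
where the last inequality is the hypothesis that $\calE$ is an $\epsilon$-approximate $G$ $t$-design. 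Dividing by $\norm{A}_\infty$ and taking the supremum over $A\neq 0$ in the definition \eqref{def:induced Schatten p norm 2} of the induced operator norm gives $\norm{\calA_{\calE}^{(s)}}_\infty\leq\epsilon$, i.e., $\calE$ forms an $\epsilon$-approximate $G$ $s$-design, completing the argument.

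The only step that requires a small amount of care is the padding identity itself, namely checking that the integrand factorizes across the first $s$ and the last $t-s$ tensor blocks so that the extra factors pass through untouched; this is immediate from the definition of the moment operator. There is otherwise no substantive obstacle: the lemma is essentially a bookkeeping observation, and is included precisely so that in the proof of Theorem~\ref{th:variance-layers-sm} one may bound $\norm{\calA_{\calE_1}^{(1)}}_\infty\leq\norm{\calA_{\calE_1}^{(2)}}_\infty$ when collapsing the first-moment contribution into the second-moment estimate.
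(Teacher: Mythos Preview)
Your proposal is correct and follows essentially the same route as the paper: the paper also pads with the identity, uses the identity $\calM_\calE^{(t)}(\rho\otimes\id)=\calM_\calE^{(t-1)}(\rho)\otimes\id$ (stated for $s=t-1$ with the general case left to induction, whereas you go directly to $s$), and then bounds the induced $\infty$-norm via $\norm{\rho\otimes\id}_\infty=\norm{\rho}_\infty$. One cosmetic remark: in your final step it is cleaner to write $\norm{\calA_\calE^{(s)}(A)}_\infty\le\norm{\calA_\calE^{(t)}}_\infty\,\norm{A}_\infty$ (non-strict, from the definition of the induced norm) and then invoke $\norm{\calA_\calE^{(t)}}_\infty<\epsilon$ after taking the supremum, so that the strict inequality survives.
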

\begin{proof}\let\qed\relax
    We show that $\calE$ forms an $\epsilon$-approximate $(t-1)$-design, whence the other cases follow. Let $\calM_\calE^{(t)}, \calM_G^{(t)}\colon \gl(\calH^{\otimes t})\to \gl(\calH^{\otimes t})$. We may first note that for any $\rho \in \gl(\calH^{\otimes (t-1)})$, we have
    \begin{equation}
        \calM_{\calE}^{(t)}( \rho\otimes \mathds{1}_\calH) = \int_\calE U^{\otimes (t-1)} \rho (U^\dagger)^{\otimes (t-1)} \otimes U U^\dagger \, d\mu = \paran{\int_\calE U^{\otimes (t-1)} \rho (U^\dagger)^{\otimes (t-1)} \, d\mu } \otimes \mathds{1}_\calH = \calM_{\calE}^{(t-1)}(\rho)\otimes \mathds{1}_\calH,
    \end{equation} and likewise for the Haar ensemble $G$. Then computing the norm, we obtain
    \begin{equation}\begin{split}
        \norm{\calM_{\calE}^{(t-1)} - \calM_G^{(t-1)} }_\infty = \sup_{\substack{\rho\in \gl(\calH^{\otimes (t-1)}) \\ \rho\neq 0}} \frac{\norm{ (\calM_{\calE}^{(t-1)} - \calM_G^{(t-1)})(\rho) }_\infty}{\norm{\rho}_\infty} &= \sup_{\substack{\rho\in \gl(\calH^{\otimes (t-1)}) \\ \rho\neq 0}} \frac{\norm{ (\calM_{\calE}^{(t)} - \calM_G^{(t)})(\rho\otimes \mathds{1}) }_\infty}{\norm{\rho\otimes \mathds{1}}_\infty} \\
        &\leq \norm{\calM_\calE^{(t)} - \calM_G^{(t)}}_\infty \\
        &<\epsilon .
    \end{split}\end{equation}
\end{proof}

\begin{remark}
    While we have expressed Theorem~\ref{th:variance-layers-sm} with the intent of discussing the variance, one may wonder about bounds quantifying the convergence of deep parameterized quantum circuits to the Haar measure for higher moments. It is not difficult to see that mimicking Eq. (\ref{eq:estimating second moment of L-layer circuit}) for the $t^{th}$ moment yields 
    \begin{equation}
        \abs{\bbE_{\calE_L} [\ell_{\thv}(\rho,O)]^t - \bbE_{G} [\ell_{\thv}(\rho,O)]^t } \leq \norm{O^{\otimes t}}_1 \norm{\calA_{\calE_1}^{(t)}}_\infty^L .
    \end{equation} Then, to discuss quantities such as the $t^{th}$ cumulant, Lemma \ref{lem:every t design is a t-1 design} guarantees that for any ensemble $\calE_L$ and any integer $1\leq s\leq t$, we have $\norm{\calA_{\calE_L}^{(s)}}_\infty \leq \norm{\calA_{\calE_L}^{(t)}}_\infty $, which grants control over lower-order moments in terms of the $t^{th}$ moment.
    
\end{remark}

\section{Numerical simulations}
In this section, we present numerical experiments that (i) assess the exactness of Eq.~(9) and (ii) probe some non-intuitive aspects revealed by the Lie-algebraic perspective. 

In the following, we consider the trainability of circuits composed of parameterized single-qubit $Z$-rotations and parameterized two-qubit $XX$-rotations acting over nearest qubit neighbors arranged in a one-dimensional topology.
This accounts for a total of $(2n-1)$ parametrized gates per layer of the parametrized quantum circuit.
The DLA associated to this circuit is
\begin{equation}
    \mathfrak{g}=\left\langle 
    \left( \{ i X_j X_{j+1} \}^{n-1}_{j=1} \right)
    \cup \left( \{ i Z_j \}^n_{j=1} \right)\right\rangle_{\operatorname{Lie}}.
    \label{eqn:g_tfim}
\end{equation}
As shown in~\cite{kokcu2022fixed}, the DLA is a simple Lie algebra $\g\cong\mathfrak{so}(2n)$ that has dimension ${\rm dim}(\g) = n (2n-1)\in\Theta(\poly(n))$. 
An orthogonal basis of $\g$ is given by
\begin{equation}
    i\{ \widehat{X_iX_j}, \widehat{X_iY_j}, \widehat{Y_iX_j}, \widehat{Y_iY_j} \}_{1 \leq i < j \leq n}, \quad \text{where}\quad \widehat{A_iB_j} := A_i Z_{i+1}\cdots Z_{j-1} B_j. 
    \label{eqn:g_tfim_basis}
\end{equation}
Given the Cartan subalgebra $\h= {\rm span}_{\mathbb{R}} \{ Z_i\}^{n}_{i=1}$ of $\g$, one readily identifies the state $\ket{hw} := \ket{0}^{\otimes n}$ as one highest weight vector of the algebra with $\g$-purity $\PC_{\g}(\ket{hw}\bra{hw})=\dim \h / 2^n = n/ 2^n.$  

In the simulations, we study the variance of the loss for varied system sizes $n\in [3, 15]$, and consider 4 distinct setups each characterized by the choice of the measurement observable $O$ and of the pure initial state $|\psi_0\rangle$. 
In all cases $O\in i \g$. 
\begin{itemize}
\item \textbf{Setup 0:}  $O = X_pX_{p+1} + Z_p$ (with $p=\lfloor \frac{n}{2} \rfloor$ indexing the middle of the chain) and $|\psi_0\rangle=|hw\rangle$. This illustrates the case where the operator contains a generalized local component and a generalized-nonlocal component; and where the initial state has maximal $\g$-purity.
\item \textbf{Setup 1:} $O = \widehat{X_1Y_n}$ and $|\psi_0\rangle=\ket{hw}$. This illustrates the case where the operator is global  (i.e., it acts non trivially on the whole system) but belongs in $i\g$, and where the initial state has maximal $\g$-purity.
\item \textbf{Setup 2:} $O = X_pX_{p+1} + Z_p$ and $|\psi_0\rangle$ is generated by applying a single layer of random single-qubit rotations to $\ket{hw}$. 
\item \textbf{Setup 3:} $O = X_pX_{p+1} + Z_p$ and $|\psi_0\rangle$ is generated by applying $n$ layers of a random circuit composed of two-qubit rotations to $\ket{hw}$. 
\end{itemize}
As we soon discuss further, these last two setups illustrate cases where the $\g$-purity of the state input to the circuit has been decreased (compared to the original $\ket{hw}$ state), to a different extent, by means of unitary rotations.

For the setups just described, we systematically estimate variances of the cost over a total of $5,000$ random initializations of the circuit parameters.
For each system size, the variances are estimated for increased number of layers, until they converge to a fixed value (i.e., until the circuit forms an approximate design). 
In Fig.~\ref{fig:tfim_free}, we report all these estimated variances (solid lines with distinct colors for each of the setups), and also, the theoretical predictions obtained from Eq.~(9) in Theorem~1. 
Over all the setups studied, theory and numerically estimated variances match closely (up to small errors arising from the sample variance uncertainty). 
Going further, it is verified that measuring a global measurement operator (Setup 1) does not necessarily incur BPs provided that it belong to $i\g$. Furthermore, we can see that under unitary transformation (Setups 2 and 3) the $\g$-purity of the input states are indeed decreased.  Otherwise, these would result in similar variances as Setup 0, as they share the same measurement observable. In particular, we highlight the fact that in Setup 2 we perform single qubit rotations (which do not change the standard purity $\Tr[\rho^2]$, nor the standard entanglement), but which incur in a small decrease of the variance. Note that in the latter the variance decrease is small, as it is still  polynomially vanishing (i.e., this setup does not incur BPs). This result is in contrast to Setup 3, where the circuit applied to $\ket{hw}$ generates enough generalized entanglement to induce a BP, as evidenced by an exponentially decaying variance.

\begin{figure}
    \includegraphics[width=0.4\textwidth]{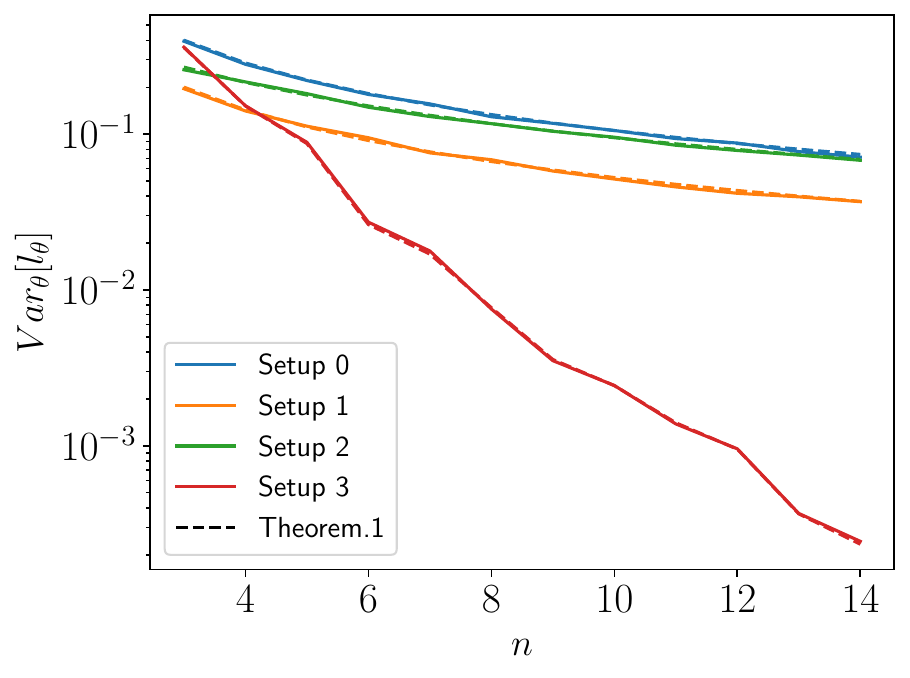}
    \caption{
    \textbf{Scaling of the variance of the loss functions for 4 different settings.} We numerically estimate variances of the loss function for a circuit with a  DLA given in Eq.~\eqref{eqn:g_tfim} for 4 different setups composed of different pairs of measurement observables and initial states. Each setup is  described in the main text,  and the associated variance corresponds to a solid in the plot. These estimated variances are compared to exact evaluation of Eq.~(9) of Theorem~1 (dashed lines).}
    \label{fig:tfim_free}
\end{figure}

\end{document}